
\documentclass[10pt,journal,compsoc,onecolumn]{IEEEtran}

%


%

%
\ifCLASSOPTIONcompsoc
  \usepackage[nocompress]{cite}
\else
  \usepackage{cite}
\fi
%


\usepackage{graphics, graphicx}
\usepackage{wrapfig,lipsum}

\usepackage{amsmath, amsthm, amssymb}
\usepackage{amssymb}
\usepackage{amsmath}
\usepackage{textcomp}

\usepackage{tikz}       
\usetikzlibrary{arrows,decorations.pathmorphing,backgrounds,positioning,fit,petri}
\usetikzlibrary{graphs}
\usetikzlibrary{trees,snakes}
\usepackage{xcolor}
\definecolor{raspberry}{rgb}{0.89, 0.04, 0.36}
\definecolor{deeppink}{rgb}{1.0, 0.08, 0.58}
\definecolor{darkgreen}{rgb}{0.13, 0.55, 0.13}
\newcommand{\leo}{}
\newcommand{\rev}{}

\DeclareMathAlphabet{\mathcalligra}{T1}{calligra}{m}{n}

\newtheorem{definition}{Definition}
\newtheorem{theorem}{Theorem}[section]

\newtheorem{proposition}[theorem]{Proposition}
\newtheorem{corollary}[theorem]{Corollary}
\newtheorem{observation}[theorem]{Observation}
\usepackage[margin=1in]{geometry}
\usepackage{float}
\usepackage{subfigure}
%
%
%
\usepackage{amsfonts,mathrsfs,anysize,fancyhdr,epsfig}


%
\ifCLASSINFOpdf
\else
\fi
\hyphenation{op-tical net-works semi-conduc-tor}

\begin{document}
%
\title{Nonbinary tree-based\\ phylogenetic networks}
%
%
%
%

\author{Laura Jetten and
        Leo van Iersel
\IEEEcompsocitemizethanks{\IEEEcompsocthanksitem E-mail: ljetten@outlook.com and l.j.j.v.iersel@gmail.com
\IEEEcompsocthanksitem Delft University of Technology, PO-box 5,
2600 AA, Delft,\protect\\ The Netherlands.\IEEEcompsocthanksitem Leo van Iersel was partly supported by a Vidi grant from the Netherlands Organisation for Scientific Research (NWO) and partly by the 4TU Applied Mathematics Institute (4TU.AMI).}}
\IEEEtitleabstractindextext{%
\begin{abstract}
Rooted phylogenetic networks are used to describe evolutionary histories that contain non-treelike evolutionary events such as hybridization and horizontal gene transfer. In some cases, such histories can be described by a phylogenetic base-tree with additional linking arcs, which can for example represent gene transfer events. Such phylogenetic networks are called \emph{tree-based}. Here, we consider two possible generalizations of this concept to nonbinary networks, which we call \emph{tree-based} and \emph{strictly-tree-based} nonbinary phylogenetic networks. We give simple graph-theoretic characterizations of tree-based and strictly-tree-based nonbinary phylogenetic networks. Moreover, we show for each of these two classes that it can be decided in polynomial time whether a given network is contained in the class. Our approach also provides a new view on tree-based binary phylogenetic networks. \rev{Finally, we discuss two examples of nonbinary phylogenetic networks in biology and show how our results can be applied to them.}
\end{abstract}

\begin{IEEEkeywords}
Phylogenetic tree, phylogenetic network, evolution, tree-of-life, tree-based
\end{IEEEkeywords}}

\maketitle

\IEEEdisplaynontitleabstractindextext

%
\IEEEpeerreviewmaketitle

\IEEEraisesectionheading{\section{Introduction}\label{sec:introduction}}

Rooted phylogenetic networks are becoming increasingly popular as a way to describe evolutionary histories that cannot be described by a phylogenetic tree~\cite{book,expanding}. The leaves of such a network are labelled and represent, for example, currently-living species, while the root of the network represents a common ancestor of those species. Vertices with two or more outgoing arcs represent a divergence event in which a lineage split into two or more lineages, while vertices with two or more incoming arcs represent a convergence of different lineages into a single lineage. The latter events are called \emph{reticulate} evolutionary events and include, for example, hybridization, introgression and horizontal gene transfer. Therefore, these vertices are called \emph{reticulations}. A phylogenetic network without reticulations is a \emph{(rooted) phylogenetic tree}. Hence, phylogenetic networks are a more general model for evolutionary histories than phylogenetic trees.

Although the occurence of reticulate evolutionary events is well-accepted, there are different views on their importance. One possibility is to see evolution as a mainly tree-like (vertical) process with sporadic horizontal events. The other extreme is to completely \leo{abandon} the idea of a tree-of-life and to see evolution purely as a network~\cite{dagan,martin,doolittle,networkthinking}. This discussion is especially relevant for prokaryotes, where the main form of non-treelike evolution is horizontal gene transer, i.e. genetic material is transferred from one species to another coexisting species that is not a descendant. If the evolutionary history of a group of prokaryotes is mainly tree-like, then you could describe such a history as a phylogenetic species tree with additional cross-connecting arcs describing the horizontal gene transfer events. However, if their evolution is inherently network-like, then it might not be possible to identify any tree-like signal at all.

This discussion has \leo{recently} led to the introduction of a new class of phylogenetic networks called ``tree-based''~\cite{Artikel2}, which contains those networks that can be described by a phylogenetic base-tree with additional linking arcs between branches of the base-tree. This notion was motivated by the observation that this is not always possible, i.e. there exist networks that can not be described as a base-tree with linking arcs~\cite{blog}.

\leo{Francis and Steel showed recently} that there is a polynomial-time algorithm to decide whether a given binary phylogenetic network is tree-based or not~\cite{Artikel2}. In addition, it was shown that any phylogenetic network can be made tree-based by the addition of leaves. Hence, this notion has to be used with caution in the presence of possible extinctions or under-sampling. Even more recently, a simple graph-theoretic characterization was given that can also be used to decide whether a given binary network is tree-based or not~\cite{zhang}. Unfortunately, these results are all restricted to \emph{binary} phylogenetic networks, in which all vertices have at most two incoming and at most two outgoing arcs (see the next section for precise definitions). Moreover, the techniques used by these authors do not (easily) extend to nonbinary networks. 

Here, we also consider nonbinary phylogenetic networks. In such a network, a vertex can have more than two outgoing arcs, representing uncertainty in the order of divergence events, or more than two incoming arcs, representing uncertainty in the order of reticulate events. \rev{See Table~\ref{table} for real biological examples of nonbinary phylogenetic networks (also see~\cite{Violets,Hordeum,EukaryoticOrigin})}. In general, such uncertainties cannot simply be overcome by collecting more data~\cite{Bushes,Resolving}. Since the tree-basedness of nonbinary networks has not been introduced or studied before, we discuss different possible definitions of \leo{tree-based} in the nonbinary case.

\begin{table}
\rev{\begin{tabular}{|l||c|c|l|c|c|}
\hline
 & Reticulation number & Leaves & Reticulate Process & Tree-based & Reference\\ \hline \hline
Violets &21 & 16 & polyploidisation & no &\cite[Figure 4]{Violets2}, Figure~\ref{fig:violets} \\ \hline
Origin of Eukaryotes &6 &41& endosymbiosis & yes & \cite[Figure 2]{martin1999mosaic}, Figure~\ref{fig:eukaryotes}\\ \hline 
Influenza & 5 & 7 & reassortment & yes & \cite[Figure 1]{Influenza}\\ \hline 
Cichlids (fish)&5&19& hybridisation & yes & \cite[Figure 4]{Cichlids}\\ \hline
\end{tabular}}\smallskip
\caption{\label{table}\rev{Examples of nonbinary phylogenetic networks in biology. The \emph{reticulation number} is defined as the total number of ``additional branches'' in the network, i.e. a reticulation with~$p$ parents adds~$p-1$ to the reticulation number.}}
\end{table}

Roughly speaking, we call a nonbinary phylogenetic network \emph{strictly tree-based} if it can be obtained from a rooted (nonbinary) phylogenetic tree by adding linking arcs between the branches of the tree, such that no two linking arcs attach at the same point. Consequently, in such a network all vertices have at most two incoming arcs. In addition, all vertices with more than two outgoing arcs correspond to vertices of the base-tree, because the new vertices that are created by the addition of linking arcs all get at most two outgoing arcs: one of the base-tree and one linking arc. Hence, a strictly-tree-based network can be nonbinary only because the base-tree can be nonbinary.

We also consider \emph{tree-based} nonbinary phylogenetic networks, which are networks that can be obtained from a rooted (nonbinary) phylogenetic tree by adding linking arcs between branches and/or vertices of the tree. It turns out that \leo{a network is in this class precisely if it has at least one} binary refinement that is a tree-based binary phylogenetic network. This is a more general class than the strictly-tree-based variant.

Our main results are as follows. We first present an alternative view on binary tree-based phylogenetic networks, which can partly be extended to nonbinary networks. We \rev{introduce \emph{omnians}, which we define as non-leaf vertices of which all children are reticulations. We then use this notion to obtain a new, simpler characterization of binary tree-based phylogenetic networks.} We show that a binary phylogenetic network is \emph{tree-based} if and only if every subset~$S$ of its omnians has at least~$|S|$ different children. We use this to derive, in an alternative (independently discovered) way, the characterization of binary tree-based networks in terms of zig-zag paths~\cite{zhang} and a new matching-based algorithm for deciding whether a given binary network is tree-based. We also show that every binary network with at most two reticulations is tree-based and give a new sufficient condition for a binary network to be tree-based.

We then proceed to nonbinary networks. We show that our characterization of binary tree-based phylogenetic networks in terms of omnians can easily be generalized to the nonbinary case. We then obtain the first polynomial-time algorithm for deciding \leo{whether} a nonbinary phylogenetic network is tree-based. \leo{Additionally, we} show a simple counter example, showing that the characterization based on zig-zag paths can not be used to characterize nonbinary tree-based networks. However, we also show that nonbinary strictly-tree-based phylogenetic networks \emph{can} be characterized using zig-zag paths. Consequently, also for this class of networks it can be decided in polynomial time whether a given network belongs to the class.

We also discuss ``stable'' phylognetic networks~\cite{Artikel3}, in which for each reticulation~$r$ there exists some leaf~$x$ such that all paths from the root to~$x$ go through~$r$. We show that, although all binary stable phylogenetic networks are tree-based, this is not always the case for nonbinary networks.

This paper is organized as follows. We first give the definitions and new results for binary networks in Section~\ref{sec:bin}, then the definitions and results for nonbinary networks in Section~\ref{sec:nonbin}. \rev{Examples} of how these results can be applied to real, biological, nonbinary phylogenetic \rev{networks are} given in Section~\ref{sec:example}. We end with a discussion in Section~\ref{sec:discussion}.


\section{Binary phylogenetic networks}\label{sec:bin}
\subsection{Preliminaries}
\label{preliminaries}
First, some essential concepts around binary phylogenetic networks will be explained. Phylogenetic networks contain vertices and directed edges. Directed edges will be called arcs from now on. \\

\begin{definition}
A \emph{(rooted) binary phylogenetic network} is a directed acyclic graph $N=(V,A)$, which \leo{contains a single \emph{root} with indegree~0 and outdegree~1 or~2 and may in addition} contain the following types of vertices:
\begin{itemize}
\item vertices with outdegree 0, called \emph{leaves}, which are labelled;
\item vertices with indegree 2 and outdegree 1, called \emph{reticulations};
\item vertices with indegree 1 and outdegree 2, called \emph{tree-vertices}.
\end{itemize}
\end{definition}

An example of a binary phylogenetic network is given in Figure \ref{Examplenetwork}, \leo{in which leaves are coloured blue and reticulations are indicated with a pink shading around the nodes}.  A (\emph{rooted}) \emph{binary phylogenetic tree} is a binary phylogenetic network that contains no reticulations. 
Although every arc is drawn without arrow head, they are all directed to the lowest vertex. This is the case throughout the paper, unless explicitly mentioned otherwise.

\begin{figure}[h]\centering
\centering
\begin{tikzpicture}
\draw(0.55,5.2) node {Root};
\fill (0,5) circle (2pt); 
\fill[color=deeppink]  (-0.75,4)  circle (3.5pt);
\fill (-0.75,4) circle (2pt); 
\fill (0.375,4.5) circle (2pt); 
\fill (1.125,3.5) circle (2pt);
\filldraw ([xshift=-2pt,yshift=-2pt]-1.125,3.5) rectangle ++(4pt,4pt);
\fill (1.5,3) circle (2pt); 
\fill[color=deeppink] (-1.875,2.5) circle (3.5pt);
\fill (-1.875,2.5) circle (2pt); 
\fill [color=blue] (2.25,2) circle (2pt); 
\fill[color=deeppink] (-0.035,3.1) circle (3.5pt);
\fill (-0.035,3.1) circle (2pt);
\fill[color=deeppink] (-0.5,2) circle (3.5pt);
\fill (-0.5,2) circle (2pt);
\fill [color=blue] (-0.5,1.3) circle (2pt);
\filldraw ([xshift=-2pt,yshift=-2pt]-1.1,2.75) rectangle ++(4pt,4pt);
\fill [color=blue] (-2.4375,1.75) circle (2pt);
\draw [thick] (0,5)  -- (0.75,4) ; 
\draw [thick] (0,5)  -- (-0.75,4) ; 
\draw [thick] (-0.75,4)  -- (-1.5,3) ; 
\draw [thick] (0.75,4)  -- (1.5,3) ; 
\draw [thick] (1.5,3)  -- (2.25,2) ; 
\draw [thick] (-1.5,3)  -- (-2.4375,1.75) ; 
\draw [thick] (0.375,4.5) -- (-0.75,4);
\draw [thick] (1.125,3.5) -- (-0.035,3.1);
\draw [thick] (-0.035,3.1) -- (-1.875,2.5);
\draw [thick] (-1.125,3.5) -- (-0.03,3.1);
\draw [thick] (1.5,3) -- (-0.5,2);
\draw [thick] (-0.5,2) -- (-0.5,1.3);
\draw [thick]  (-1.1,2.75) -- (-0.5,2);
\draw (-2.43,1.45) node {$a$};      
\draw (-0.45,1) node {$b$};      
\draw (2.25,1.7) node {$c$};      
\draw (-1.2,2.95) node {$x$};      
\draw (-1.35,3.5) node {$y$};      

\end{tikzpicture}
\caption{An example of a binary phylogenetic network \leo{with leaf labels~$a$, $b$ and $c$, which can e.g.  represent three present-day species, and omnians~$x$ and~$y$}.}
\label{Examplenetwork}
\end{figure}
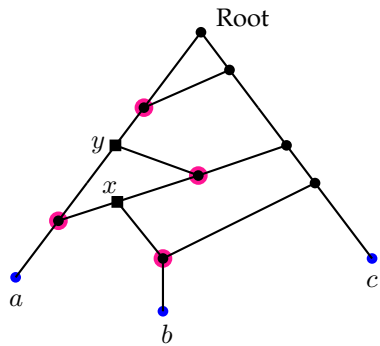

Take $(u,v) = a \in A$, an arc from vertex $u$ to $v$. Then, $a$ is called an \emph{outgoing} arc of~$u$~and an \emph{incoming} arc of~$v$.~Vertex~$u$~is a \emph{parent} of~$v$ and~$v$ is called a \emph{child} of~$u$. 
If there is also an arc~$(u,w) \in A$, then vertex~$w$~and~$v$ have a joint parent, so~$w$~and~$v$ are called \emph{siblings}.
When a non-leaf vertex $z$ has only reticulations as children, then $z$ is called an \emph{omnian}. For example in Figure \ref{Examplenetwork}, vertices $x$ and $y$ are omnians, since both children of these vertices are reticulations. \leo{Vertices can be omnian and reticulation at the same time, see e.g. vertices~$u$ and~$v$ in Figure~\ref{ExamplenetworknotTB}. Because of the importance of omnians, which will become clear later on in the paper, we always use square nodes for omnians and circular nodes for all other vertices.}

\begin{definition}
A binary phylogenetic network $N$ is \emph{tree-based} with base-tree $T$, when $N$ can be obtained from $T$ via the following steps:
\begin{enumerate}
\item[(i)] Add vertices to the arcs of $T$. These vertices, called \emph{attachment points}, have in- and outdegree 1.
\item[(ii)] Add arcs, called \emph{linking arcs}, between pairs of attachments points, so that~$N$ remains binary and acyclic.
\item[(iii)] Suppress every attachment point that is not incident to a linking arc.
\end{enumerate}
\end{definition}

\leo{Note in particular that it is not allowed to create multiple linking arcs between the same pair of attachment points since~$N$ is required to be a binary phylogenetic network.}

A binary phylogenetic network is \emph{tree-based} if it is tree-based with base-tree~$T$ for some binary phylogenetic tree~$T$. An example of the procedure is displayed in Figure \ref{definitionTB}. \rev{An example of a binary network that is not tree-based is given in Figure \ref{ExamplenetworknotTB}.
}

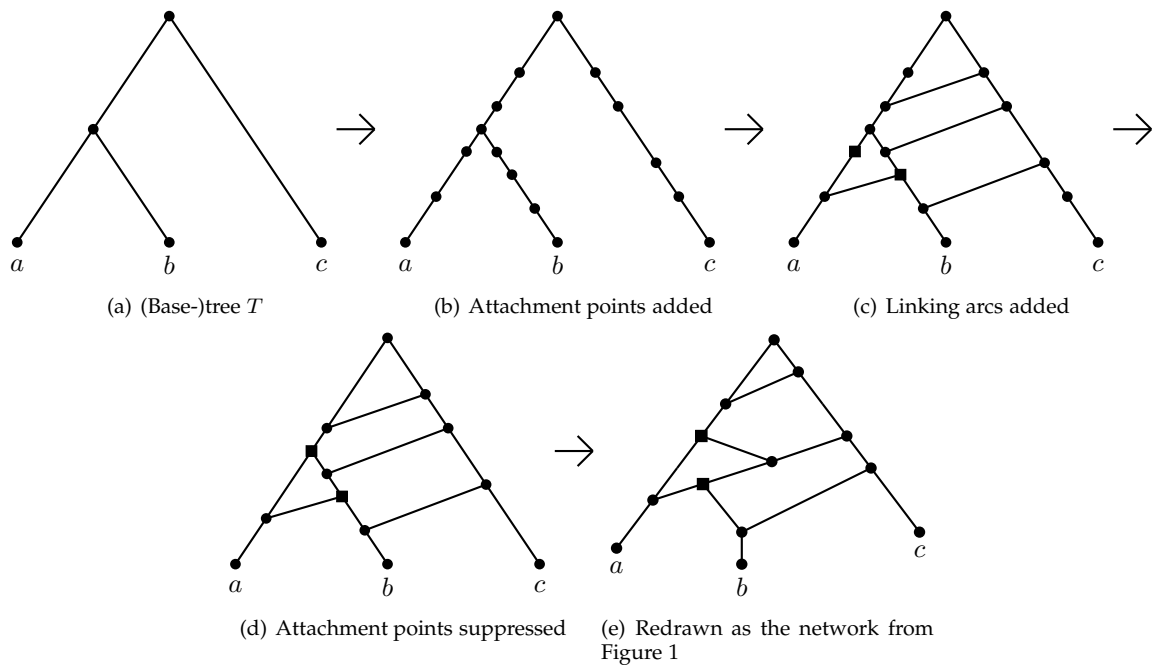
\begin{figure}[h]\centering
\subfigure[(Base-)tree $T$]{
\begin{tikzpicture}
\fill (0,0) circle (2pt);
\fill (-1,-1.5) circle (2pt);
\fill (-2,-3) circle (2pt);
\fill (0,-3) circle (2pt);
\fill (2,-3) circle (2pt);
\draw [thick] (0,0) -- (-1,-1.5);
\draw [thick] (-1,-1.5) -- (-2,-3);
\draw [thick](0,0) -- (2,-3);
\draw [thick](-1,-1.5) -- (0,-3);
\draw [thick](2.2,-1.5) -- (2.7,-1.5);
\draw [thick](2.5,-1.3) -- (2.7,-1.5);
\draw [thick](2.5,-1.7) -- (2.7,-1.5);
\draw (-2,-3.3) node {$a$};      
\draw (0,-3.3) node {$b$};      
\draw (2,-3.3) node {$c$};      
\end{tikzpicture}}
\subfigure[Attachment points added]{
\begin{tikzpicture}
\fill (0,0) circle (2pt);
\fill (-1,-1.5) circle (2pt);
\fill (-2,-3) circle (2pt);
\fill (0,-3) circle (2pt);
\fill (2,-3) circle (2pt);
\fill (0.4987,-0.7480) circle (2pt);
\fill (0.7979,-1.1968) circle (2pt);
\fill (1.2965,-1.9448) circle (2pt);
\fill (1.5957,-2.3936) circle (2pt);
\fill (-0.4987,-0.7480) circle (2pt);
\fill (-0.7979,-1.1968) circle (2pt);
\fill (-1.1968,-1.7952) circle (2pt);
\fill (-1.5957,-2.3936) circle (2pt);
\fill (-0.7979,-1.8032) circle (2pt);
\fill (-0.5984,-2.1024) circle (2pt);
\fill (-0.2992,-2.5512) circle (2pt);
\draw [thick](0,0) -- (-1,-1.5);
\draw [thick](-1,-1.5) -- (-2,-3);
\draw [thick](0,0) -- (2,-3);
\draw [thick](-1,-1.5) -- (0,-3);
\draw [thick](2.2,-1.5) -- (2.7,-1.5);
\draw [thick](2.5,-1.3) -- (2.7,-1.5);
\draw [thick](2.5,-1.7) -- (2.7,-1.5);
\draw (-2,-3.3) node {$a$};      
\draw (0,-3.3) node {$b$};      
\draw (2,-3.3) node {$c$};      
\end{tikzpicture}}
\subfigure[Linking arcs added]{
\begin{tikzpicture}
\fill (0,0) circle (2pt);
\fill (-1,-1.5) circle (2pt);
\fill (-2,-3) circle (2pt);
\fill (0,-3) circle (2pt);
\fill (2,-3) circle (2pt);
\fill (0.4987,-0.7480) circle (2pt);
\fill (0.7979,-1.1968) circle (2pt);
\fill (1.2965,-1.9448) circle (2pt);
\fill (1.5957,-2.3936) circle (2pt);
\fill (-0.4987,-0.7480) circle (2pt);
\fill (-0.7979,-1.1968) circle (2pt);
\filldraw ([xshift=-2pt,yshift=-2pt]-1.1968,-1.7952) rectangle ++(4pt,4pt);
\fill (-1.5957,-2.3936) circle (2pt);
\fill (-0.7979,-1.8032) circle (2pt);
\filldraw ([xshift=-2pt,yshift=-2pt]-0.5984,-2.1024) rectangle ++(4pt,4pt);
\fill (-0.5984,-2.1024) circle (2pt);
\fill (-0.2992,-2.5512) circle (2pt);
\draw [thick](0,0) -- (-1,-1.5);
\draw [thick](-1,-1.5) -- (-2,-3);
\draw [thick](0,0) -- (2,-3);
\draw [thick](-1,-1.5) -- (0,-3);
\draw [thick](2.2,-1.5) -- (2.7,-1.5);
\draw [thick](2.5,-1.3) -- (2.7,-1.5);
\draw [thick](2.5,-1.7) -- (2.7,-1.5);
\draw [thick](0.4987,-0.7480) -- (-0.7979,-1.1968);
\draw [thick](0.7979,-1.1968) -- (-0.7979,-1.8032);
\draw [thick](1.2965,-1.9448) -- (-0.2992,-2.5512); 
\draw [thick](-0.5984,-2.1024) -- (-1.5957,-2.3936);
\draw (-2,-3.3) node {$a$};      
\draw (0,-3.3) node {$b$};      
\draw (2,-3.3) node {$c$};      
\end{tikzpicture}}
\subfigure[Attachment points~suppressed]{
\begin{tikzpicture}
\fill (0,0) circle (2pt);
\filldraw ([xshift=-2pt,yshift=-2pt]-1,-1.5) rectangle ++(4pt,4pt);
\fill (-1,-1.5) circle (2pt);
\fill (-2,-3) circle (2pt);
\fill (0,-3) circle (2pt);
\fill (2,-3) circle (2pt);
\fill (0.4987,-0.7480) circle (2pt);
\fill (0.7979,-1.1968) circle (2pt);
\fill (1.2965,-1.9448) circle (2pt);
\fill (-0.7979,-1.1968) circle (2pt);

\fill (-1.5957,-2.3936) circle (2pt);
\fill (-0.7979,-1.8032) circle (2pt);
\filldraw ([xshift=-2pt,yshift=-2pt]-0.5984,-2.1024) rectangle ++(4pt,4pt);
\fill (-0.5984,-2.1024) circle (2pt);
\fill (-0.2992,-2.5512) circle (2pt);
\draw [thick] (0,0) -- (-1,-1.5);
\draw [thick](-1,-1.5) -- (-2,-3);
\draw [thick](0,0) -- (2,-3);
\draw [thick](-1,-1.5) -- (0,-3);
\draw [thick](2.2,-1.5) -- (2.7,-1.5);
\draw [thick](2.5,-1.3) -- (2.7,-1.5);
\draw [thick](2.5,-1.7) -- (2.7,-1.5);
\draw [thick](0.4987,-0.7480) -- (-0.7979,-1.1968);
\draw [thick](0.7979,-1.1968) -- (-0.7979,-1.8032);
\draw [thick](1.2965,-1.9448) -- (-0.2992,-2.5512); 
\draw [thick](-0.5984,-2.1024) -- (-1.5957,-2.3936);
\draw (-2,-3.3) node {$a$};      
\draw (0,-3.3) node {$b$};      
\draw (2,-3.3) node {$c$};      
\end{tikzpicture}}
\subfigure[\leo{Redrawn as the network from Figure \ref{Examplenetwork}}]{\begin{tikzpicture}[scale=0.85]
\filldraw ([xshift=-2.5pt,yshift=-2.5pt]-1.1,2.75) rectangle ++(5pt,5pt);
\filldraw ([xshift=-2.5pt,yshift=-2.5pt]-1.125,3.5) rectangle ++(5pt,5pt);
\fill (0,5) circle (2.5pt); 
\fill (-0.75,4) circle (2.5pt); 
\fill (0.375,4.5) circle (2.5pt); 
\fill (1.125,3.5) circle (2.5pt);
\fill (1.5,3) circle (2.5pt); 
\fill (-1.875,2.5) circle (2.5pt); 
\fill (2.25,2) circle (2.5pt); 
\fill (-0.035,3.1) circle (2.5pt);
\fill (-0.5,2) circle (2.5pt);
\fill (-0.5,1.5) circle (2.5pt); 
\fill (-2.4375,1.75) circle (2.5pt); 
\draw [thick] (0,5)  -- (0.75,4) ; 
\draw [thick] (0,5)  -- (-0.75,4) ; 
\draw [thick] (-0.75,4)  -- (-1.5,3) ; 
\draw [thick] (0.75,4)  -- (1.5,3) ; 
\draw [thick] (1.5,3)  -- (2.25,2) ; 
\draw [thick] (-1.5,3)  -- (-2.4375,1.75) ; 
\draw [thick] (0.375,4.5) -- (-0.75,4);
\draw [thick] (1.125,3.5) -- (-0.035,3.1);
\draw [thick] (-0.035,3.1) -- (-1.875,2.5);
\draw [thick] (-1.125,3.5) -- (-0.03,3.1);
\draw [thick] (1.5,3) -- (-0.5,2);
\draw [thick] (-0.5,2) -- (-0.5,1.5);
\draw [thick]  (-1.1,2.75) -- (-0.5,2);
\draw (-2.4375,1.45) node {$a$};      
\draw (-0.5,1.15) node {$b$};      
\draw (2.25,1.7) node {$c$};      
\end{tikzpicture}}
\caption{From phylogenetic tree to phyogenetic network in steps (a) to (e), which shows that the network from Figure~\ref{Examplenetwork} is tree-based.}
\label{definitionTB}
\end{figure}

A \emph{rooted spanning tree} $\tau$ of a phylogenetic network~$N$ is a subgraph of~$N$ that is a rooted tree and contains all vertices (and a subset of the arcs) of $N$. 
A \emph{dummy leaf} of a rooted spanning tree~$\tau$ is a vertex that is not a leaf in network $N$, but is a leaf in $\tau$. Hence, a binary phylogenetic network is tree-based if and only if it has a rooted spanning tree without dummy leaves.

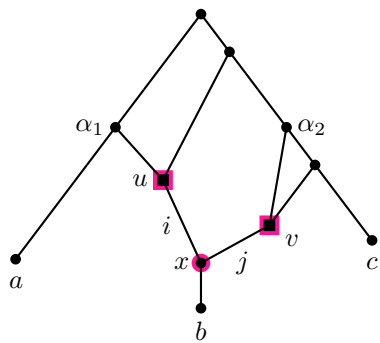
\begin{figure}[h]\centering
\begin{tikzpicture}
\fill (0,5) circle (2pt); 
\fill (0.375,4.5) circle (2pt); 
\fill (1.125,3.5) circle (2pt);
\fill (-1.125,3.5) circle (2pt); 
\fill (1.5,3) circle (2pt); 
\fill (2.25,2) circle (2pt); 
\filldraw [deeppink] ([xshift=-3.5pt,yshift=-3.5pt]-0.5,2.8) rectangle ++(7pt,7pt);
\filldraw ([xshift=-2pt,yshift=-2pt]-0.5,2.8) rectangle ++(4pt,4pt);
\fill[color=deeppink] (0,1.7) circle (3.5pt);
\fill (0,1.7) circle (2pt);
\fill (0,1.1) circle (2pt);
\filldraw [deeppink] ([xshift=-3.5pt,yshift=-3.5pt]0.9,2.2) rectangle ++(7pt,7pt);
\filldraw ([xshift=-2pt,yshift=-2pt]0.9,2.2) rectangle ++(4pt,4pt);
\fill (-2.4375,1.75) circle (2pt);
\draw [thick] (0,5)  -- (0.75,4) ; 
\draw [thick] (0,5)  -- (-0.75,4) ; 
\draw [thick] (-0.75,4)  -- (-1.5,3) ; 
\draw [thick] (0.75,4)  -- (1.125,3.5);
\draw [thick] (1.125,3.5) -- (1.5,3) ; 
\draw [thick] (1.125,3.5) -- (0.9,2.2);
\draw [thick] (1.5,3)  -- (2.25,2) ; 
\draw [thick] (-1.5,3)  -- (-2.4375,1.75) ; 
\draw [thick] (0.375,4.5) -- (-0.5,2.8);
\draw [thick] (-0.5,2.8) -- (0,1.7);
\draw [thick] (-1.125,3.5) -- (-0.5,2.8);
\draw [thick] (1.5,3) -- (0.9,2.2);
\draw [thick] (0.9,2.2) -- (0,1.7);
\draw [thick] (0,1.7) -- (0,1.1);
\draw (-2.43,1.45) node {$a$};      
\draw (0,0.8) node {$b$};      
\draw (-0.45,2.2) node {$i$};      
\draw (0.55,1.7) node {$j$};      
\draw (-0.8,2.8) node {$u$};      
\draw (1.2,2.0) node {$v$};      
\draw (2.25,1.7) node {$c$};      
\draw (-0.25,1.7) node {$x$};      
\draw (-1.46,3.5) node {$\alpha_1$};      
\draw (1.46,3.5) node {$\alpha_2$};      
\end{tikzpicture}
\caption{An example of a non-tree-based binary phylogenetic network. Because arcs~$i$ and~$j$ are the only outgoing arcs of vertices~$u$ and~$v$ respectively, they would both have to be present in the base-tree. However, vertex~$x$ would then have two incoming arcs in the base-tree, which is not allowed.}
\label{ExamplenetworknotTB}
\end{figure}

We will make heavy use of the following bipartite graph. Let $N=(V,A)$ be a binary phylogenetic network. The \emph{bipartite graph associated to} $N$ is the bipartite graph $B=(U \cup R,E)$ containing \leo{a vertex $v_o\in U$ for each~$v\in V$ that is an omnian, a \rev{vertex~$w_r\in R$ for each~$w\in V$} that is a reticulation, and an edge~$\{v_o,w_r\}\in E$ for each $(v,w) \in A$ with~$v$ and omnian and~$w$ a reticulation. Hence, for a vertex~$v\in V$ that is a reticulation and an omnian, there is a vertex~$v_o$ in~$U$ as well as a vertex~$v_r$ in~$R$. For ease of notation, we will omit the subscripts for now on and refer to~$v_o$ and~$v_r$ simply as $v$.} An example is given in Figure~\ref{VBmatch}.


Let $N=(V,A)$ be a binary phylogenetic network. An \emph{antichain} is a set of vertices~$K~\subseteq~V$ for which there is no directed path from \leo{any} vertex in~$K$ to \leo{any other} vertex in~$K$. Network~$N$ satisfies 
the \emph{antichain-to-leaf property} if for every antichain in $N$ there exists a path from every vertex in $K$ to a leaf, so that these paths are arc-disjoint. 
Which means, for example, that if there is an antichain of three vertices and there are only two leaves in the network, the network does clearly not satisfy the antichain-to-leaf property.

An example of an antichain can be seen in Figure \ref{ExamplenetworknotTB}, where vertices $\alpha_1$ and $\alpha_2$ form an antichain. 
The network does not satisfy the antichain-to-leaf property, because when we look at the antichain formed by vertices $u$ and $v$, there are no arc-disjoint paths to leaves.

A vertex $v$  is called \emph{stable} if there exists a leaf $l$ for which every path from the root to $l$ passes through $v$. A network is called \emph{stable} if every reticulation is stable.

Let $G=(V,E)$ be a graph. If $v,w \in V$ so that $(v,w) \in E$, then $w$ is a \emph{neighbour} of $v$. For a set $S \subseteq V$, the neighbours of $S$ are denoted by $\Gamma(S)$.
A \emph{matching} $M \subseteq E$ is a set of edges so that no vertex $v \in V$ is incident with more than one edge in $M$.
A \emph{maximal path} in $G$ is a path that is not contained in a larger path.


%

The following known results will be useful.

\begin{proposition}\cite{Artikel2}
\label{Propositie3i)} 
Consider a binary phylogenetic network N over leaf set X.
\begin{enumerate}
\item[(i)] If the parents of each reticulation of $N$ are tree-vertices, then $N$ is tree-based.
\item[(ii)] If $N$ has a reticulation whose parents are both reticulations, then $N$ is not tree-based.
\end{enumerate}
\end{proposition}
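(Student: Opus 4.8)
My plan is to work throughout with the rooted-spanning-tree reformulation noted above: $N$ is tree-based if and only if it has a rooted spanning tree with no dummy leaves. The first step is a structural observation about such spanning trees. Because $N$ is binary, a subgraph of $N$ is a rooted spanning tree precisely when it is obtained by discarding, at each reticulation, exactly one of its two incoming arcs and keeping all remaining arcs: such a subgraph is acyclic (it sits inside a DAG), every non-root vertex has in-degree $1$ and, following incoming arcs backwards, reaches the unique source of $N$, namely the root, and a connected graph on $|V|$ vertices with $|V|-1$ arcs is a tree; conversely a rooted spanning tree clearly has this form. So constructing a spanning tree amounts to choosing, at each reticulation, which of its two parent arcs to keep, and the whole game is to make that choice so that no vertex becomes a leaf of the tree that was not already a leaf of $N$.

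For part (ii) I would argue directly. Let $r$ be a reticulation both of whose parents $p_1,p_2$ are reticulations, and let $\tau$ be any rooted spanning tree of $N$. In $\tau$ the vertex $r$ retains exactly one incoming arc, so one of $p_1,p_2$ --- say $p_2$ --- is not a parent of $r$ in $\tau$. Since $p_2$ is a reticulation, its unique child in $N$ is $r$, hence $p_2$ has no child in $\tau$; as $p_2$ is not a leaf of $N$, it is a dummy leaf of $\tau$. This happens for every rooted spanning tree, so $N$ is not tree-based.

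For part (i) the key point is that, when we only discard incoming arcs of reticulations, the only vertices that can possibly become leaves of the spanning tree are omnian tree-vertices, i.e.\ tree-vertices whose two children are both reticulations. Indeed, a leaf of $N$ stays a leaf; a reticulation keeps its unique out-arc; the root keeps all its out-arcs, because --- by hypothesis, and because the root (having in-degree $0$) is not a tree-vertex --- no reticulation has the root as a parent; and a tree-vertex that is not an omnian has a child that is not a reticulation and keeps that out-arc. So it suffices to choose the discarded arcs so that every omnian tree-vertex retains at least one out-arc. I would then form the bipartite graph $H$ with one side the omnian tree-vertices, the other side the reticulations, and an edge for every arc from a vertex of the first kind to one of the second. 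Every omnian tree-vertex has degree exactly $2$ in $H$, while every reticulation has degree at most $2$; hence for any set $S$ of omnian tree-vertices the $2|S|$ edges meeting $S$ land in $\Gamma(S)$, giving $2|S|\le 2|\Gamma(S)|$, so Hall's condition holds. A matching $M$ of $H$ saturating all omnian tree-vertices therefore exists; keeping the arc $(v,r)$ for each omnian tree-vertex $v$ matched by $M$ to a reticulation $r$, and discarding an arbitrary incoming arc at every remaining reticulation, produces a rooted spanning tree in which every omnian tree-vertex still has an out-arc, hence with no dummy leaves. So $N$ is tree-based.

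The step I expect to be the main obstacle is this last one in part (i). The naive recipe ``at each reticulation keep an arbitrary parent arc'' does not work, since an omnian tree-vertex can have both of its out-arcs discarded and become a dummy leaf; the matching argument is precisely what repairs this, and the degree bookkeeping in $H$ is what makes Hall's condition hold automatically, with no extra assumption on $N$.
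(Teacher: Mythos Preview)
Your proposal is correct. Note, however, that the paper does not actually give its own proof of this proposition: it is quoted from Francis and Steel~\cite{Artikel2} and used as a known result. That said, your argument for part~(i) is precisely a specialization of the paper's central machinery. Your bipartite graph $H$ coincides with the paper's bipartite graph $B=(U\cup R,E)$ associated to $N$ (under the hypothesis, every omnian is a tree-vertex), and your Hall-condition count ``$2|S|$ edges meet $S$, each vertex of $\Gamma(S)$ has degree at most $2$, so $|S|\le |\Gamma(S)|$'' is exactly the argument the paper uses in the proof of Theorem~\ref{Estelling2}, case~(i). In effect you have re-derived (and then applied) the paper's Theorem~\ref{Lstelling} and Corollary~\ref{Aantalkind} in this special setting. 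One small point worth making explicit: when you assert ``a reticulation keeps its unique out-arc'', this is true here only because the hypothesis forbids a reticulation from being the parent of another reticulation --- the same reasoning you spelled out for the root. Your direct argument for part~(ii) is the natural one and matches the standard proof.
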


\begin{proposition} 
\label{Propositie4.1}\label{4.1} \cite{Artikel3}
In a stable binary phylogenetic network $N$, the child and the parents of each reticulation are tree-vertices.
\end{proposition}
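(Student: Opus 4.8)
The plan is to reduce the whole proposition to a single structural fact: \emph{in a stable binary phylogenetic network there is no arc joining two reticulations}. Granting this, the proposition follows by a degree count. Let $r$ be a reticulation. A parent $p$ of $r$ cannot be a reticulation, since the arc $p\to r$ would then join two reticulations, and $p$ cannot be a leaf since it has the child $r$; hence $p$ is a tree-vertex (the root is not a parent of a reticulation under the network conventions in force). Likewise the unique child $c$ of $r$ cannot be a reticulation, and it is not a leaf by the convention that every leaf has a tree-vertex parent; so $c$ is a tree-vertex. Thus it suffices to establish the structural fact.

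To prove it, suppose for contradiction that $v\to w$ is an arc with both $v$ and $w$ reticulations. Then $v$ has outdegree $1$ with unique child $w$, and $w$ has indegree $2$; let $u\neq v$ be the second parent of $w$. Since $N$ is stable, $v$ is stable, so there is a leaf $\ell$ such that every directed path from the root $\rho$ to $\ell$ passes through $v$; equivalently, $\ell$ is unreachable from $\rho$ in the graph $N-v$ obtained by deleting $v$ together with its incident arcs. I would contradict this by showing that $\rho$ does reach $\ell$ in $N-v$. First, $\rho$ reaches $u$ in $N-v$: otherwise every path from $\rho$ to $u$ in $N$ would use $v$ and hence, $w$ being the only out-neighbour of $v$, would also pass through $w$, making $u$ a descendant of $w$, which is impossible since $u\to w$ is an arc and $N$ is acyclic. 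Second, $u$ reaches $w$ in $N-v$, trivially via the arc $(u,w)$. Third, $w$ reaches $\ell$ in $N-v$: any path from $\rho$ to $\ell$ in $N$ goes through $v$ and thence through $w$, so $\ell$ is a descendant of $w$ in $N$; and any path from $w$ to $\ell$ in $N$ must avoid $v$, for otherwise $v$ would be a descendant of $w$, again contradicting $v\to w$ and acyclicity; such a path survives the deletion. Combining the three reachabilities, $\rho$ reaches $\ell$ in $N-v$, the required contradiction.

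The argument is very short once stability is read correctly, namely as the \emph{non}-reachability of $\ell$ from $\rho$ in $N-v$; after that it uses only transitivity of reachability together with two applications of acyclicity, and nowhere are simple paths needed. The single point that genuinely uses the binary structure, and the only place care is needed, is that $v$ has a unique out-neighbour, so that a path through $v$ is forced through $w$; this is exactly what prevents $\rho$ from reaching $u$ without also reaching $w$. The remaining loose ends are the degenerate adjacencies (the root as a parent, or a leaf as the child, of a reticulation), which are excluded by the conventions on phylogenetic networks adopted in~\cite{Artikel3}, so I would simply invoke those rather than argue them from stability alone.
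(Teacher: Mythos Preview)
The paper does not supply its own proof of this proposition; it is quoted as a known result from~\cite{Artikel3}, and even the later nonbinary analogue in Section~\ref{sec:nonbin} is handled only by the sentence ``Can be shown similar to the proof of Proposition~\ref{4.1}~\cite{Artikel3}.'' So there is no proof in the present paper to compare yours against.

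Your argument itself is correct. Reducing to the claim that no arc joins two reticulations is the right structural move, and your contradiction via reachability in $N-v$ is sound: each of the three pieces (root reaches $u$; $u$ reaches $w$ via the arc $(u,w)$; $w$ reaches $\ell$) invokes acyclicity in the appropriate place, and you correctly isolate that the binary hypothesis enters only through the fact that $v$ has a unique out-neighbour, which forces any path through $v$ to continue through $w$. The boundary cases you flag are genuine and not an oversight: nothing in this paper's definition of a binary phylogenetic network, nor stability by itself, forbids a reticulation whose child is a leaf (such a reticulation is automatically stable with respect to that leaf) or whose parent is the root, so deferring these to the conventions of~\cite{Artikel3} is the honest thing to do rather than a gap in your reasoning.
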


The following observation follows directly from the previous two propositions.

\begin{corollary} 
\label{Corollary}
Every binary stable phylogenetic network is tree-based.
\end{corollary}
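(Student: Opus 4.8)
The plan is simply to compose the two results just stated. Let $N$ be a binary stable phylogenetic network and let $r$ be any reticulation of $N$. Since $N$ is stable, $r$ is stable, so Proposition~\ref{Propositie4.1} applies and tells us that both parents of $r$ (and its child) are tree-vertices. As $r$ was arbitrary, the parents of every reticulation of $N$ are tree-vertices, which is exactly the hypothesis of Proposition~\ref{Propositie3i)}(i). Invoking that proposition gives that $N$ is tree-based, which is the claim. So the proof is essentially a one-liner that chains Proposition~\ref{Propositie4.1} into Proposition~\ref{Propositie3i)}(i); we only use the ``parents'' half of Proposition~\ref{Propositie4.1}, not the statement about the child.

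The only point that warrants a second's thought — and the closest thing to an obstacle — is a boundary case: a parent of a reticulation could a priori be the root, which has indegree $0$ and outdegree $1$ or $2$ rather than indegree $1$ and outdegree $2$, and hence is not literally a tree-vertex. I would handle this by checking that Proposition~\ref{Propositie4.1} is stated (or may be read) so as to rule this configuration out; since we are permitted to use the earlier results exactly as given, no further argument is needed, and the corollary follows directly as asserted.
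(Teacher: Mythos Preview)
Your proof is correct and is exactly the paper's approach: the paper simply states that the corollary ``follows directly from the previous two propositions,'' i.e., chain Proposition~\ref{Propositie4.1} into Proposition~\ref{Propositie3i)}(i) precisely as you do. Your brief remark about the root boundary case is a harmless extra observation beyond what the paper records.
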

%

\subsection{Results}
\label{subsec:binresults}
The following theorem will be used to obtain a simple graph-theoretic characterization of binary tree-based phylogenetic networks.

\begin{theorem}
\label{Lstelling}
Let~$N$ be a binary phylogenetic network and $B=(U \cup R,E)$ the bipartite graph associated to~$N$. Network~$N$ is tree-based  if and only if there exists a matching~$M$~in~$B$ with $|M|=|U|$.
\end{theorem}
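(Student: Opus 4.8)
The plan is to establish the equivalence via the rooted-spanning-tree characterization of tree-basedness stated just before the theorem: $N$ is tree-based if and only if $N$ has a rooted spanning tree without dummy leaves. So I would work with spanning trees $\tau$ of $N$ and try to translate "no dummy leaves" into a matching condition on $B=(U\cup R,E)$.

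First I would fix some structural observations about spanning trees of a binary phylogenetic network. A rooted spanning tree $\tau$ is obtained by keeping, for every non-root vertex, exactly one of its incoming arcs; equivalently, for every reticulation $w$ we delete exactly one of its two incoming arcs, and we keep all other arcs. A vertex $z$ that is not a leaf of $N$ becomes a leaf of $\tau$ (a dummy leaf) precisely when every outgoing arc of $z$ has been deleted. Since the only deleted arcs are arcs entering reticulations, this can only happen when all children of $z$ are reticulations, i.e.\ when $z$ is an omnian; and it happens exactly when, for every child $w$ of $z$, the arc $(z,w)$ is the arc deleted at $w$. Because each reticulation has exactly one deleted incoming arc, "which incoming arc of $w$ is deleted" is a well-defined choice, and the deleted arcs incident to omnians form a subgraph of $B$ in which every reticulation vertex of $R$ has degree at most $1$ — that is, a matching of $B$ saturating a specified subset of $R$. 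So: $\tau$ has no dummy leaf if and only if the set of deleted arcs (restricted to omnian–reticulation arcs) does \emph{not} cover all out-arcs of any omnian.

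Now I would make the counting precise. Let $M'$ be the set of deleted arcs that go from an omnian to a reticulation; this is a matching in $B$. An omnian $v$ becomes a dummy leaf in $\tau$ iff all $\deg_B(v)$ edges at $v$ lie in $M'$, i.e.\ iff $v$ is saturated by $M'$ \emph{and} no out-arc of $v$ to a non-omnian-relevant child survives — but since every child of an omnian is a reticulation, "$v$ has a surviving out-arc" is equivalent to "$v$ is not saturated in $M'$''. Hence $\tau$ is dummy-leaf-free iff $M'$ saturates no vertex of $U$. Conversely, given any matching $M$ of $B$ with $|M|=|U|$ (a matching saturating all of $U$), consider the complement idea: for each omnian $v$, the edge of $M$ at $v$ tells us which out-arc of $v$ we will \emph{keep}; for every reticulation $w$, at most one incoming arc of $w$ from an omnian is thereby forced to be kept, so we may delete the \emph{other} incoming arc of $w$ (and if $w$ receives no forced arc, delete either incoming arc). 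This yields a set of deletions that breaks every reticulation down to indegree $1$ and leaves a spanning tree; by construction every omnian retains an out-arc, so there is no dummy leaf, and $N$ is tree-based. The two directions together give the equivalence.

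The main obstacle I anticipate is the bookkeeping for vertices that are simultaneously omnians and reticulations (the paper explicitly flags these, with the $v_o/v_r$ notation), and more generally making sure the "delete one in-arc per reticulation" process and the matching stay consistent: a single omnian–reticulation arc $(v,w)$ corresponds to edge $\{v_o,w_r\}$, and I must check that choosing it "in $M$" (keep the out-arc of $v$) is compatible with the independent decision of which in-arc of $w$ to delete, with no arc being both kept and deleted, and that acyclicity is preserved (any subgraph of an acyclic graph is acyclic, so this last point is free). I would handle the shared omnian/reticulation vertices by noting that the roles $v_o$ and $v_r$ are separate vertices of $B$, touched by disjoint sets of edges (edges out of $v$ versus edges into $v$), so a matching treats them independently and no contradiction arises. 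Once that is pinned down, the equivalence "$\exists$ dummy-leaf-free spanning tree $\iff$ $\exists$ matching of size $|U|$'' follows, and Theorem~\ref{Lstelling} is proved.
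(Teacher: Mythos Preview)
Your overall strategy (via rooted spanning trees without dummy leaves) is exactly the paper's, and your converse direction (matching $\Rightarrow$ tree-based) is correct and matches the paper's argument: the matching tells you which out-arc of each omnian to keep, and since $M$ is a matching on the $R$-side no reticulation is forced to keep two in-arcs.

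The forward direction, however, has a real gap. You define $M'$ as the set of \emph{deleted} omnian--reticulation arcs and claim ``$M'$ is a matching'' and ``$\tau$ is dummy-leaf-free iff $M'$ saturates no vertex of $U$''. Both statements are false as written. First, $M'$ is a matching only \emph{after} you know $\tau$ has no dummy leaves: an outdegree-$2$ omnian with both out-arcs deleted contributes two edges to $M'$ at the same $U$-vertex. Second, ``all $\deg_B(v)$ edges at $v$ lie in $M'$'' is not the same as ``$v$ is saturated by $M'$''; an outdegree-$2$ omnian with exactly one out-arc deleted is saturated by $M'$ but is not a dummy leaf. More fundamentally, even when $\tau$ is dummy-leaf-free, $M'$ need not have size $|U|$ (every outdegree-$1$ omnian is unsaturated in $M'$), so $M'$ is not the matching you want.

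The fix is to do in the forward direction what you already do in the converse: work with the \emph{kept} arcs, not the deleted ones. Given a dummy-leaf-free spanning tree $\tau$, each omnian $v$ has at least one out-arc in $\tau$; choose one such arc $(v,w)$ and put $\{v_o,w_r\}$ in $M$. This covers $U$ by construction, and it is a matching on the $R$-side because $\tau$, being a tree, contains at most one in-arc of each reticulation. That is precisely the paper's argument (``colour the edges of $B$ corresponding to arcs of $T$, then decolourize one of the two at any doubly-covered omnian'').
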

\begin{proof} Assume there exists a matching $M$ in $B$ with $|M|=|U|$, i.e., all omnians are covered by $M$. 
Construct a set~\leo{$A$} of arcs as follows: \leo{add} the outgoing arc of every reticulation and the incoming arc of all tree-vertices to $A$.
Additionally, for each edge of $M$, add the corresponding arc of~$N$ to $A$, if it has not yet been added. 
For every reticulation that has not yet been covered, add one of its incomming arcs to $A$, arbitrarily. 
The tree~$T$, consisting of all vertices of $N$ and the set of arcs $A$, is a rooted spanning tree, because there is precisely one incoming arc of every vertex contained in $T$. Moreover, there are no dummy leaves, because $U$ is covered by~$M$. Hence, it follows that~$N$ is tree-based.

Now, assume that $N$ is tree-based with base-tree~$T$. Colour every edge of~$B$ that corresponds to an arc in $T$. 
When an omnian has outdegree 2 and both outgoing arcs are contained in $T$, decolourize one of the two corresponding edges of~$B$, arbitrarily. Hence, each vertex of~$U$ is incident to at most one coloured edge. Since~$T$ is a rooted tree, it contains at most one incoming arc of each reticulation. Hence, also each vertex of~$R$ is incident to at most one coloured edge. So the coloured edges of~$B$ form a matching~$M$.
Because~$T$ is a base-tree, there are no dummy leaves, and so all omnians are covered by~$M$.
\end{proof}

This theorem can be used to verify whether a binary phylogenetic network $N$ is tree-based or not in polynomial time, using an algorithm for maximum cardinality bipartite matching (see e.g.~\cite{schrijver}).

We will look at an example of a binary phylogenetic network $N$ and the bipartite graph $B=(U \cup R,E)$ associated to $N$ in Figure \ref{VBmatch}. Since there exists a matching, which is coloured blue and \leo{dash-dotted} in Figure \ref{VBmatch}(b), that covers $U$, the binary phylogenetic network in Figure \ref{VBmatch}(a) is tree-based.
A base-tree~$T$ of network~$N$ can be seen in Figure \ref{Basetree}, where the arcs that correspond to edges of the matching are \leo{dash-dotted} \leo{and linking-arcs are dashed}.

\begin{figure}[h]\centering
\subfigure[A rooted binary phylogenetic network~$N$.]{\begin{tikzpicture}

\fill (0,0) circle (2pt);
\fill [deeppink] (-1,-2) circle (3.5pt);
\fill [deeppink] (1,-2) circle (3.5pt);
\fill [deeppink] (-1.5,-2.5) circle (3.5pt);
\fill [deeppink] (1.5,-2.5) circle (3.5pt);
\fill (-1,-1) circle (2pt);
\fill (1.5,-1.5) circle (2pt);
\fill (-1.5,-1.5) circle (2pt);
\filldraw ([xshift=-2pt,yshift=-2pt]0,-1.5) rectangle ++(4pt,4pt);
\fill (2,-2) circle (2pt);
\fill (-2,-2) circle (2pt);
\filldraw [deeppink] ([xshift=-3.5pt,yshift=-3.5pt]1,-2) rectangle ++(7pt,7pt);
\filldraw ([xshift=-2pt,yshift=-2pt]1,-2) rectangle ++(4pt,4pt);
\fill (-1.5,-2.5) circle (2pt);
\fill (1.5,-2.5) circle (2pt);
\fill (-1.5,-3) circle (2pt);
\fill (1.5,-3) circle (2pt);
\filldraw [deeppink] ([xshift=-3.5pt,yshift=-3.5pt]-1,-2) rectangle ++(7pt,7pt);
\filldraw ([xshift=-2pt,yshift=-2pt]-1,-2) rectangle ++(4pt,4pt);
\fill (-2.2,-2.8) circle (2pt);
\fill (2.2,-2.8) circle (2pt);
\draw [thick] (0,0) -- (1.5,-1.5);
\draw [thick](0,0) -- (-1,-1);
\draw [thick](-1,-1) -- (0,-1.5);
\draw [thick](-1,-1) -- (-1.5,-1.5);
\draw [thick](1.5,-1.5) -- (2,-2);
\draw [thick](0,-1.5) -- (-1,-2);
\draw [thick](0,-1.5) -- (1,-2);
\draw [thick](-1.5,-1.5) -- (-2,-2);
\draw [thick](-1,-2) -- (-1.5,-2.5);
\draw [thick](-1.5,-2.5) -- (-1.5, -3);
\draw [thick](1,-2) -- (1.5,-2.5);
\draw [thick](1.5,-2.5) -- (1.5,-3);
\draw [thick](2,-2) -- (1.5,-2.5);
\draw [thick](-2,-2) -- (-1.5,-2.5);
\draw [thick](1.5,-1.5) -- (1,-2);
\draw [thick](-1.5,-1.5) -- (-1,-2);
\draw [thick](2.2,-2.8) -- (2,-2);
\draw [thick](-2.2,-2.8) -- (-2,-2);
\draw (-0.7,-2.1) node {$y$};
\draw (0.7,-2.1) node {$z$};
\draw (0.1,-1.3) node {$v$};
\draw (-1.2,-2.5) node {$q$};
\draw (1.2,-2.5) node {$x$};
\draw (-2.2,-3.1) node {$a$};
\draw (-1.5,-3.3) node {$b$};
\draw (2.2,-3.1) node {$d$};
\draw (1.5,-3.3) node {$c$};
\end{tikzpicture}
}
\hspace{2cm}
\subfigure[The bipartite graph associated to~$N$, with a matching covering~$U$ indicated by \leo{dash-dotted} lines.]{
\begin{tikzpicture}
\draw [blue, dashdotted, thick] (0,0) -- (2.5,-2.4);
\draw [blue, dashdotted, thick] (0,-1.2) -- (2.5,-1.6);
\draw [blue, dashdotted, thick] (0,-2.4) -- (2.5,0);
\draw [thick] (0,-2.4) -- (2.5,-0.8);
\draw (0,1) node {$U$};
\draw (2.5,1) node {$R$};
\draw (-0.3,0.0) node {$y$};
\draw (-0.3,-1.2) node {$z$};
\draw (-0.3,-2.4) node {$v$};
\draw (2.8,0.0) node {$y$};
\draw (2.8,-0.8) node {$z$};
\draw (2.8,-1.6) node {$\leo{x}$};
\draw (2.8,-2.4) node {$\leo{q}$};

\filldraw [deeppink] ([xshift=-3.5pt,yshift=-3.5pt]0,-1.2) rectangle ++(7pt,7pt);
\filldraw ([xshift=-2pt,yshift=-2pt]0,-1.2) rectangle ++(4pt,4pt);
\filldraw [deeppink] ([xshift=-3.5pt,yshift=-3.5pt]2.5,0) rectangle ++(7pt,7pt);
\filldraw ([xshift=-2pt,yshift=-2pt]2.5,0) rectangle ++(4pt,4pt);
\filldraw [deeppink] ([xshift=-3.5pt,yshift=-3.5pt]2.5,-0.8) rectangle ++(7pt,7pt);
\filldraw ([xshift=-2pt,yshift=-2pt]2.5,-0.8) rectangle ++(4pt,4pt);
\fill [deeppink] (2.5,-1.6) circle (3.5pt);
\fill [deeppink] (2.5,-2.4) circle (3.5pt);
\filldraw [deeppink] ([xshift=-3.5pt,yshift=-3.5pt]0,0) rectangle ++(7pt,7pt);
\filldraw ([xshift=-2pt,yshift=-2pt]0,0) rectangle ++(4pt,4pt);
\filldraw ([xshift=-2pt,yshift=-2pt]0,-2.4) rectangle ++(4pt,4pt);
\fill (2.5,-1.6) circle (2pt);
\fill (2.5,-2.4) circle (2pt);
\end{tikzpicture}
}
\caption{Using Theorem \ref{Lstelling} to show that this is a tree-based binary phylogenetic network.}
\label{VBmatch}
\end{figure}
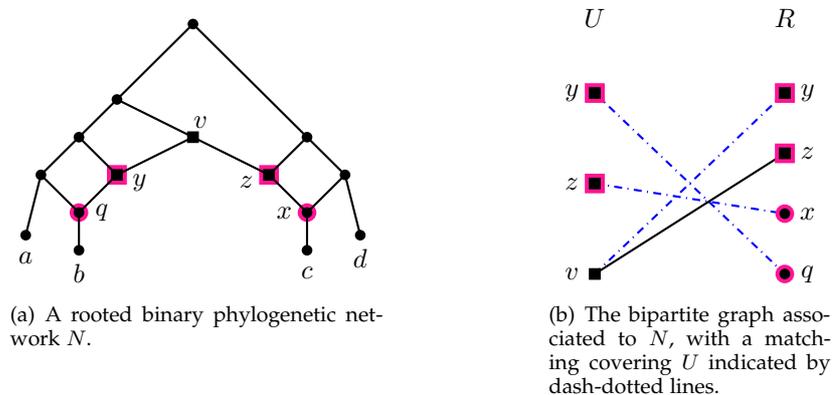

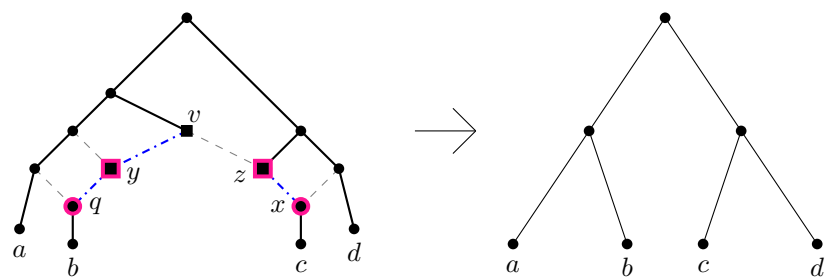
\begin{figure}[h]\centering
\subfigure{
\begin{tikzpicture}

\draw [thick] (0,0) -- (1.5,-1.5);
\draw [thick](0,0) -- (-1,-1);
\draw [thick](-1,-1) -- (0,-1.5);
\draw [thick](-1,-1) -- (-1.5,-1.5);
\draw [thick](1.5,-1.5) -- (2,-2);
\draw [thick, blue, dashdotted](0,-1.5) -- (-1,-2);
\draw [gray, dashed](0,-1.5) -- (1,-2);
\draw [thick](-1.5,-1.5) -- (-2,-2);
\draw [thick, blue, dashdotted](-1,-2) -- (-1.5,-2.5);
\draw [thick](-1.5,-2.5) -- (-1.5, -3);
\draw [thick, blue, dashdotted](1,-2) -- (1.5,-2.5);
\draw [thick](1.5,-2.5) -- (1.5,-3);
\draw [gray, dashed](2,-2) -- (1.5,-2.5);
\draw [gray, dashed](-2,-2) -- (-1.5,-2.5);
\draw [thick](1.5,-1.5) -- (1,-2);
\draw [gray, dashed](-1.5,-1.5) -- (-1,-2);
\draw [thick](2.2,-2.8) -- (2,-2);
\draw [thick](-2.2,-2.8) -- (-2,-2);
\draw (-0.7,-2.1) node {$y$};
\draw (0.7,-2.1) node {$z$};
\draw (0.1,-1.3) node {$v$};
\draw (-1.2,-2.5) node {$q$};
\draw (1.2,-2.5) node {$x$};
\draw (-2.2,-3.1) node {$a$};
\draw (-1.5,-3.3) node {$b$};
\draw (2.2,-3.1) node {$d$};
\draw (1.5,-3.3) node {$c$};
\draw (3,-1.5) -- (3.8,-1.5);
\draw (3.5,-1.2) -- (3.8,-1.5);
\draw (3.5,-1.8) -- (3.8,-1.5);

\fill (0,0) circle (2pt);
\fill [deeppink] (-1,-2) circle (3.5pt);
\fill [deeppink] (1,-2) circle (3.5pt);
\fill [deeppink] (-1.5,-2.5) circle (3.5pt);
\fill [deeppink] (1.5,-2.5) circle (3.5pt);
\fill (-1,-1) circle (2pt);
\fill (1.5,-1.5) circle (2pt);
\fill (-1.5,-1.5) circle (2pt);
\filldraw ([xshift=-2pt,yshift=-2pt]0,-1.5) rectangle ++(4pt,4pt);
\fill (2,-2) circle (2pt);
\fill (-2,-2) circle (2pt);
\filldraw [deeppink] ([xshift=-3.5pt,yshift=-3.5pt]1,-2) rectangle ++(7pt,7pt);
\filldraw ([xshift=-2pt,yshift=-2pt]1,-2) rectangle ++(4pt,4pt);
\fill (-1.5,-2.5) circle (2pt);
\fill (1.5,-2.5) circle (2pt);
\fill (-1.5,-3) circle (2pt);
\fill (1.5,-3) circle (2pt);
\filldraw [deeppink] ([xshift=-3.5pt,yshift=-3.5pt]-1,-2) rectangle ++(7pt,7pt);
\filldraw ([xshift=-2pt,yshift=-2pt]-1,-2) rectangle ++(4pt,4pt);
\fill (-2.2,-2.8) circle (2pt);
\fill (2.2,-2.8) circle (2pt);
\end{tikzpicture}
}
\subfigure{
\begin{tikzpicture}
\fill (0,0) circle (2pt);
\fill (-2,-3) circle (2pt);
\fill (2,-3) circle (2pt);
\fill (-1,-1.5) circle (2pt);
\fill (1,-1.5) circle (2pt);
\fill (0.5,-3) circle (2pt);
\fill (-0.5,-3) circle (2pt);
\draw (0,0) -- (-2,-3);
\draw (0,0) -- (2,-3);
\draw (-1,-1.5) -- (-0.5,-3);
\draw (1,-1.5) -- (0.5,-3);
\draw (-2,-3.3) node {$a$};
\draw (-0.5,-3.3) node {$b$};
\draw (2,-3.3) node {$d$};
\draw (0.5,-3.3) node {$c$};
\end{tikzpicture}
}
\caption{A base-tree $T$ of the network in~Figure~\ref{VBmatch}(a).}
\label{Basetree}
\end{figure}

Since a binary phylogenetic network that contains no reticulations is a rooted tree, such a network is clearly tree-based.  The next theorem shows that this is still the case for all networks with one or two reticulations. On the other hand, Figure~\ref{easynotTBexample} shows a part of a network $N$ that contains three reticulations and is not tree-based. So it follows that not all networks with three reticulations are tree-based.

\begin{theorem}\label{thm:2ret}
If a binary phylogenetic network $N$ contains at most two reticulations, then~$N$ is tree-based.
\end{theorem}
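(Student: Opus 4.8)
The plan is to apply Theorem~\ref{Lstelling}: it suffices to exhibit a matching of size $|U|$ in the bipartite graph $B=(U\cup R,E)$ associated to $N$, where $U$ is the set of omnians and $R$ the set of reticulations. By Hall's theorem (applied to the side $U$) such a matching exists exactly when $|\Gamma(S)|\ge|S|$ for every $S\subseteq U$. Since $|R|\le 2$ by hypothesis, $|\Gamma(S)|\le 2$ always, so Hall's condition can fail only in one of three ways: some omnian has no reticulation among its children ($|S|=1$); two distinct omnians have the same single reticulation as their only reticulation-neighbour ($|S|=2$, $|\Gamma(S)|=1$); or $N$ has at least three omnians (then any $3$-subset $S$ has $|\Gamma(S)|\le 2<|S|$). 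The first possibility is vacuous, since an omnian is by definition a non-leaf vertex all of whose children are reticulations and therefore has at least one reticulation-child. So the whole task reduces to proving $|U|\le 2$ and, when $|U|=2$, that the two omnians have distinct reticulation-neighbourhoods.

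The structural fact I would exploit is that a binary phylogenetic network has no vertex with indegree and outdegree both equal to~$1$; hence every omnian is the root, a tree-vertex, or a reticulation, with outdegree~$1$ or~$2$ and all children reticulations. Call an omnian \emph{thin} if it has outdegree~$1$ (so it is the root or a reticulation, and its unique child is one fixed reticulation) and \emph{fat} if it has outdegree~$2$ (so it is the root or a tree-vertex, and its two children are the two distinct reticulations $r_1,r_2$). The core of the argument is a short acyclicity observation, using that every vertex is reachable from the root: a thin omnian with child $r_1$ must be the root or the reticulation $r_2$ (a reticulation cannot be its own child), so if there were two such omnians they would be precisely the root $\rho$ and $r_2$, with $\rho\to r_1$ the only arc out of $\rho$ and with the arc $r_2\to r_1$; then every path from $\rho$ to $r_2$ passes through $r_1$, and appending $r_2\to r_1$ produces a directed cycle. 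Hence at most one thin omnian has child $r_1$, and symmetrically at most one has child $r_2$. The same reasoning also settles the $|S|=2$ case: two omnians with a common single reticulation-neighbour $r$ would both be thin with child $r$, forcing one to be the root and the other the reticulation $r'\ne r$ with $r'\to r$, which is exactly the configuration just excluded.

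To finish the bound $|U|\le 2$ I would then treat fat omnians. Two distinct fat omnians $v_1,v_2$ would each be a parent of both $r_1$ and $r_2$; as $r_1$ has only two parents, its parent set would be $\{v_1,v_2\}$, and likewise for $r_2$, so every omnian -- being a parent of $r_1$ or of $r_2$ -- would lie in $\{v_1,v_2\}$, giving $|U|=2$. So assume at most one fat omnian; combined with the previous paragraph the only remaining way to have three omnians is a fat omnian $v$ together with a thin omnian $s_1$ (child $r_1$) and a thin omnian $s_2$ (child $r_2$), in which case the parents of $r_1$ are $\{s_1,v\}$ and those of $r_2$ are $\{s_2,v\}$. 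At most one of $s_1,s_2,v$ is the root, and checking the few possibilities rules this out: if $v$ is the root or none of the three is, then $s_1=r_2$ and $s_2=r_1$, so $r_1\to r_2\to r_1$ is a cycle; if $s_1$ (resp.\ $s_2$) is the root, its only out-arc enters $r_1$ (resp.\ $r_2$), so every path from the root, in particular one reaching the tree-vertex $v$, passes through that reticulation, and appending $v\to r_1$ (resp.\ $v\to r_2$) yields a cycle. Thus $|U|\le 2$, Hall's condition holds throughout, $B$ has a matching of size $|U|$, and $N$ is tree-based by Theorem~\ref{Lstelling}. (When $N$ has fewer than two reticulations the same scheme applies and is easier: with one reticulation there are no fat omnians and the only possible thin omnian is the root, so $|U|\le 1$; with none, $U=\emptyset$.)

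The one delicate point -- and the main obstacle -- is the finite case analysis establishing $|U|\le 2$ when $|R|=2$: one must enumerate exactly the three offending configurations (two thin omnians with a common child, two fat omnians, one fat plus two thin), correctly identify in each which vertex is forced to be the root, and then extract a directed cycle from the collision of the root's unique out-arc with an arc running between the two reticulations or from a reticulation up to a shared tree-vertex parent. Everything else -- the reduction to bipartite matching, the appeal to Hall's theorem, and the remark that binary phylogenetic networks have no indegree-one-outdegree-one vertices -- is routine.
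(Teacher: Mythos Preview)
Your proof is correct, but it takes a different route from the paper's. The paper argues as follows: with one reticulation both parents are tree-vertices, so Proposition~\ref{Propositie3i)}(i) applies; with two reticulations $x,y$ not in a parent--child relation, again all reticulation-parents are tree-vertices and Proposition~\ref{Propositie3i)}(i) applies; and if (say) $x$ is a parent of $y$, the paper draws the two possible local configurations (common second parent or not), writes down the associated bipartite graph in each case, observes a matching covering~$U$ by inspection, and invokes Theorem~\ref{Lstelling}.

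Your approach is more uniform: you never split on whether the reticulations are adjacent, and you avoid Proposition~\ref{Propositie3i)} entirely, instead bounding $|U|\le 2$ directly by classifying omnians as thin or fat and using acyclicity to exclude the three bad patterns. What this buys is a self-contained argument that does not rely on drawing local pictures; what the paper's version buys is brevity, since the nontrivial case is reduced to checking two tiny bipartite graphs. Both arguments rest on the same matching criterion (Theorem~\ref{Lstelling}), so the difference is really in how the case analysis is organised. One minor remark: your observation that the root with outdegree~$1$ cannot have a reticulation child (since the reticulation's other parent would force a cycle) actually shows that a thin omnian is never the root, which would let you shorten a couple of your sub-cases, though the argument is fine as written.
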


\begin{proof} If $N$ contains only one reticulation, then both parents of this reticulation are tree-vertices and with Proposition \ref{Propositie3i)} it follows that~$N$ is tree-based.

Now consider the case that~$N$ contains exactly two reticulations $x$ and $y$. If $x$ and $y$ do not have a parent-child relation, then both parents of $x$ and $y$ are tree-vertices and it follows from Proposition \ref{Propositie3i)} that $N$ is tree-based. Now suppose that $x$ is the parent of $y$. There are two possibilities, $x$ and $y$ having a joint parent and $x$ and $y$ having different parents, both displayed in Figure~\ref{Reen}.

\begin{figure}[h]\centering
\begin{tikzpicture}
\fill [deeppink] (1,-1) circle (3.5pt);
\filldraw [deeppink] ([xshift=-3.5pt,yshift=-3.5pt]0,0) rectangle ++(7pt,7pt);
\filldraw ([xshift=-2pt,yshift=-2pt]0,0) rectangle ++(4pt,4pt);
\filldraw [deeppink] ([xshift=-3.5pt,yshift=-3.5pt]2,0) rectangle ++(7pt,7pt);
\filldraw ([xshift=-2pt,yshift=-2pt]2,0) rectangle ++(4pt,4pt);
\fill (1,-1) circle (2pt);
\draw [thick] (-0.65,1) -- (0,0);
\draw [thick] (0.65,1) -- (0,0);
\draw [thick] (1.35,1) -- (2,0);
\draw [thick] (2.65,1) -- (2,0);
\draw [thick] (0,0) -- (1,-1);
\draw [thick] (1,-1) -- (2,0);
\draw [thick] (1,-1) -- (1,-2);

\end{tikzpicture}
\caption{Local situation in a network that has three reticulations and is not tree-based.}
\label{easynotTBexample}

\end{figure}
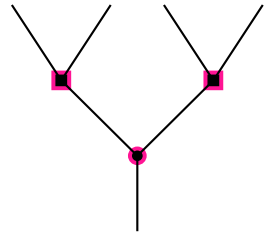 

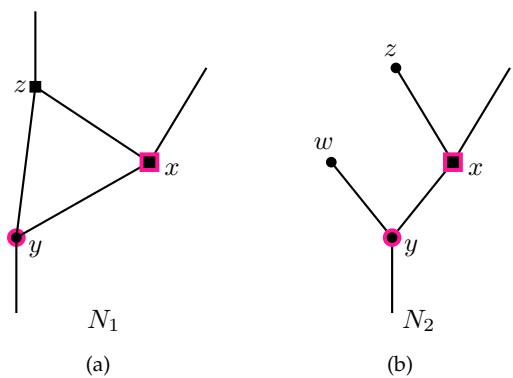
\begin{figure}[h]\centering
\subfigure[]{
\begin{tikzpicture}
\fill[color=deeppink] (-0.25,-2) circle (3.5pt);
\draw [thick] (0,0)  -- (1.5,-1) ;  
\draw [thick] (-0.25,-2) -- (-0.25,-3);   
\draw [thick] (0,0) -- (-0.25,-2);   
\draw [thick] (-0.25,-2) -- (1.5,-1);  
\draw [thick] (0,1) -- (0,0);    
\draw [thick] (2.25,0.25) -- (1.5,-1) ; 
\filldraw ([xshift=-2pt,yshift=-2pt]0,0) rectangle ++(4pt,4pt); 
\draw[color=black] (-0.2,0) node {$z$};    
\filldraw [color=deeppink] ([xshift=-3.5pt,yshift=-3.5pt]1.5,-1) rectangle ++(7pt,7pt);
\filldraw ([xshift=-2pt,yshift=-2pt]1.5,-1) rectangle ++(4pt,4pt);
\fill (1.5,-1) circle (2pt);
\draw[color=black] (1.8,-1.10) node {$x$};  
\fill (-0.25,-2) circle (2pt);
\draw[color=black] (0,-2.15) node {$y$};      
\draw[color=black] (0.9, -3.1) node {$N_1$};     
\end{tikzpicture}}
\hspace{1cm}
\subfigure[]{
\begin{tikzpicture}
\draw [thick] (6,-2) -- (6,-3);   
\draw [thick] (6.05,0.25)  -- (6.8,-1) ;  
\draw [thick] (5.2,-1) -- (6,-2);   
\draw [thick] (6,-2) -- (6.8,-1);  
\draw [thick] (7.55,0.25) -- (6.8,-1) ; 
\filldraw [color=deeppink] ([xshift=-3.5pt,yshift=-3.5pt]6.8,-1) rectangle ++(7pt,7pt);
\filldraw ([xshift=-2pt,yshift=-2pt]6.8,-1) rectangle ++(4pt,4pt);
\fill[color=deeppink] (6,-2) circle (3.5pt);
\fill (6.05,0.25) circle (2pt); 
\draw[color=black] (5.98,0.48) node {$z$};    
\fill (6.8,-1) circle (2pt);
\draw[color=black] (7.1,-1.10) node {$x$};  
\fill (6,-2) circle (2pt);
\draw[color=black] (6.25,-2.14) node {$y$};      
\fill (5.2,-1) circle (2pt);
\draw[color=black] (5.1,-.75) node {$w$};      
\draw[color=black] (6.35, -3.1) node {$N_2$};     
\end{tikzpicture}}
\caption{The two possibilities that can occur when reticulation $x$ is the parent of reticulation $y$, used in the proof of Theorem~\ref{thm:2ret}.}
\label{Reen}
\end{figure}

From \leo{partial networks~$N_1$ and~$N_2$} of Figure \ref{Reen} we create two bipartite graphs, $A=(U \cup R,E)$ associated to $N_1$ and $B=(U \cup R,E)$ associated to $N_2$, that are displayed in Figure~\ref{Rtwee}.

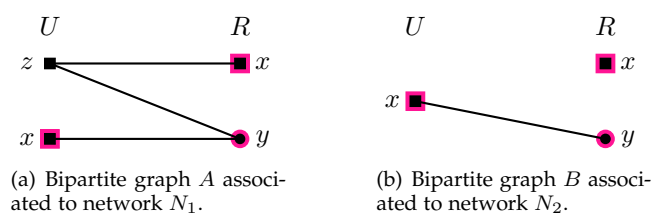
\begin{figure}[h]\centering
\subfigure[Bipartite graph $A$ associated to network~$N_1$.]{\begin{tikzpicture}
\draw[color=black] (0,0) node {$U$};      
\draw[color=black] (2.5,0) node {$R$};      
\filldraw ([xshift=-2pt,yshift=-2pt]0,-0.5) rectangle ++(4pt,4pt); 
\draw[color=black] (-0.3,-0.5) node {$z$};  
\filldraw [color=deeppink] ([xshift=-3.5pt,yshift=-3.5pt]0,-1.5) rectangle ++(7pt,7pt);
\filldraw ([xshift=-2pt,yshift=-2pt]0,-1.5) rectangle ++(4pt,4pt);   
\draw[color=black] (-0.3,-1.5) node {$x$};   
\filldraw [color=deeppink] ([xshift=-3.5pt,yshift=-3.5pt]2.5,-0.5) rectangle ++(7pt,7pt);
\filldraw ([xshift=-2pt,yshift=-2pt]2.5,-0.5) rectangle ++(4pt,4pt);    
\draw[color=black] (2.8,-0.5) node {$x$};      
\draw[color=black] (2.8,-1.5) node {$y$};    
\fill[color=deeppink] (2.5,-1.5) circle (3.5pt);
\fill (2.5,-1.5) circle (2pt);
\draw [thick] (0,-0.5) -- (2.5,-0.5);
\draw [thick] (0,-1.5) -- (2.5,-1.5); 
\draw [thick] (0,-0.5) -- (2.5,-1.5);
\end{tikzpicture}
}
\hspace{1cm}
\subfigure[Bipartite graph $B$ associated to network~$N_2$.]{\begin{tikzpicture}

\draw[color=black] (6,0) node {$U$};      
\draw[color=black] (8.5,0) node {$R$};      
\filldraw [color=deeppink] ([xshift=-3.5pt,yshift=-3.5pt]8.5,-0.5) rectangle ++(7pt,7pt);
\filldraw ([xshift=-2pt,yshift=-2pt]8.5,-0.5) rectangle ++(4pt,4pt);
\draw[color=black] (5.7,-1) node {$x$};          
\filldraw [color=deeppink] ([xshift=-3.5pt,yshift=-3.5pt]6,-1) rectangle ++(7pt,7pt);
\filldraw ([xshift=-2pt,yshift=-2pt]6,-1) rectangle ++(4pt,4pt);
\draw[color=black] (8.8,-0.5) node {$x$};      
\draw[color=black] (8.8,-1.5) node {$y$};  
\fill[color=deeppink] (8.5,-1.5) circle (3.5pt);
\fill (8.5,-1.5) circle (2pt);
\draw [thick] (6,-1) -- (8.5,-1.5);
\end{tikzpicture}
}
\caption{The bipartite graphs associated to the partial networks in~Figure~\ref{Reen}.}
\label{Rtwee}
\end{figure}

In both cases in Figure~\ref{Rtwee} it is easy to see that there is a matching that covers $U$. It then follows from Theorem~\ref{Lstelling} that $N$ is tree-based.
\end{proof}

To obtain a simple characterization of binary tree-based networks, we will use Hall's Theorem, which is stated below.

\begin{theorem}[Hall's Theorem~\cite{hall}]\label{Hall}
Let $B=(U\cup W,E)$ be a bipartite graph. 
There exists a matching in~$B$ that covers $U$ if and only if, for every~$U_1~\subseteq~U$, the number of different neighbours of the vertices in $U_1$ is at least $|U_1|$.
\end{theorem}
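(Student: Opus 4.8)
The plan is to prove Hall's Theorem by induction on $|U|$. The ``only if'' direction is immediate and I would dispatch it first: given a matching $M$ that covers $U$ and any subset $U_1\subseteq U$, the edges of $M$ incident to $U_1$ pair every vertex of $U_1$ with a distinct vertex of $W$, so $|\Gamma(U_1)|\ge |U_1|$. The content is the ``if'' direction, so assume from now on that $|\Gamma(U_1)|\ge|U_1|$ for every $U_1\subseteq U$. The base cases $|U|\le 1$ are trivial: if $|U|=1$, the condition applied to $U$ itself gives the single vertex a neighbour, which we match to it.

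For the inductive step I would split into two cases according to whether Hall's condition is ``tight'' on a proper subset. \emph{Case 1:} every nonempty proper subset $S\subsetneq U$ satisfies $|\Gamma(S)|\ge|S|+1$. Pick an arbitrary $u\in U$ and an arbitrary $w\in\Gamma(\{u\})$ (which exists by the condition), match them, and delete both vertices. In the remaining bipartite graph on $U\setminus\{u\}$ every neighbourhood shrinks by at most one, so for $S\subseteq U\setminus\{u\}$ we still have $|\Gamma(S)\setminus\{w\}|\ge |S|+1-1=|S|$; Hall's condition survives, and by induction there is a matching covering $U\setminus\{u\}$, which together with $\{u,w\}$ covers $U$. \emph{Case 2:} there is a nonempty proper $S\subsetneq U$ with $|\Gamma(S)|=|S|$. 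The induced subgraph on $S\cup\Gamma(S)$ inherits Hall's condition and is strictly smaller, so by induction it has a matching $M_1$ covering $S$ using only vertices of $\Gamma(S)$. I would then work with the bipartite graph on $(U\setminus S)\cup(W\setminus\Gamma(S))$.

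The key verification is that this ``remainder'' graph also satisfies Hall's condition. Suppose not: some $T\subseteq U\setminus S$ has fewer than $|T|$ neighbours in $W\setminus\Gamma(S)$. Every neighbour of $S\cup T$ lies in $\Gamma(S)$ or in $\Gamma(T)\setminus\Gamma(S)\subseteq W\setminus\Gamma(S)$, so $|\Gamma(S\cup T)|<|S|+|T|=|S\cup T|$, contradicting Hall's condition for the original graph. Hence by induction there is a matching $M_2$ covering $U\setminus S$ inside $W\setminus\Gamma(S)$; since $M_1$ and $M_2$ use disjoint sets of vertices, $M_1\cup M_2$ is a matching covering $U$, completing the induction.

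I expect the main obstacle to be precisely the bookkeeping in Case 2 --- checking that the remainder graph inherits Hall's condition via the union-of-neighbourhoods inequality $\Gamma(S\cup T)=\Gamma(S)\cup\Gamma(T)$ and the disjointness of $\Gamma(S)$ from $\Gamma(T)\setminus\Gamma(S)$ --- but once that estimate is in place the argument closes cleanly. Alternative routes would be an augmenting-path argument (any matching that misses a vertex of $U$ admits an augmenting path, or else some subset violates the condition) or deriving the statement from K\H{o}nig's min--max theorem or max-flow/min-cut; the inductive proof above is, however, the most self-contained and is the one I would write out.
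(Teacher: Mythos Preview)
Your proof is correct and is the standard inductive argument for Hall's Theorem. Note, however, that the paper does not prove this statement at all: it is quoted as a classical result with a citation to Hall's original paper, and then applied as a black box (in Corollaries~\ref{Aantalkind} and~\ref{AantalkindNB} and in the proof of Theorem~\ref{Estelling2}). So there is no ``paper's own proof'' to compare against; your write-up simply supplies a self-contained proof where the paper chose to cite the literature.
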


Consider Hall's Theorem and Theorem \ref{Lstelling}. Combining those two theorems gives a characterization for a binary phylogenetic network to be tree-based.
 
\begin{corollary}\label{Aantalkind}
Let $N$ be a binary phylogenetic network and $U$ the set of all omnians of~$N$. 
Then $N$ is tree-based if and only if  for all $S \subseteq U$ the number of different children of \leo{the vertices in}~$S$ is greater than or equal to the number of omnians in $S$.
\end{corollary}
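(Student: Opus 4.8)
The plan is to derive the corollary by chaining Theorem~\ref{Lstelling} with Hall's Theorem (Theorem~\ref{Hall}) applied to the associated bipartite graph $B=(U\cup R,E)$. By Theorem~\ref{Lstelling}, $N$ is tree-based if and only if $B$ has a matching $M$ with $|M|=|U|$; since every edge of $B$ is incident to exactly one vertex of $U$ and distinct matching edges cover distinct vertices of $U$, a matching has size $|U|$ precisely when it covers $U$. Hall's Theorem, applied with $U$ as the side to be covered, then states that such a matching exists if and only if $|\Gamma(U_1)|\ge |U_1|$ for every $U_1\subseteq U$. So the only work left is to re-express this neighbourhood condition in the language of $N$.

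To do that, I would recall the construction of $B$: a vertex $v$ of $N$ contributes $v_o\in U$ exactly when it is an omnian, contributes $v_r\in R$ exactly when it is a reticulation, and $\{v_o,w_r\}\in E$ exactly when $(v,w)\in A$ with $v$ an omnian (so $w$ is automatically a reticulation, by definition of omnian). Hence, for any $S\subseteq U$, the neighbourhood $\Gamma(S)$ in $B$ is exactly the set of $R$-copies of the children in $N$ of the vertices of $S$: every such child is a reticulation and therefore has an $R$-copy, and conversely every neighbour of $S$ arises in this way. Thus $|\Gamma(S)|$ equals the number of distinct children (in $N$) of the vertices of $S$, while $|S|$ is the number of omnians in $S$ (every element of $S$ being an omnian). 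Substituting these two quantities into Hall's inequality yields precisely the statement of the corollary.

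I do not expect any genuine obstacle here, since this is a two-step consequence of results already proved; the only point requiring a moment of care is the bookkeeping for vertices that are simultaneously omnians and reticulations. Such a vertex $v$ has both a copy $v_o\in U$ and a copy $v_r\in R$, and it can happen that $v_o\in S$ while $v_r\in\Gamma(S)$ (when $v$ is also a child of another omnian in $S$). This is harmless: in $B$ these are distinct vertices, so Hall's inequality is applied correctly, and on the network side the corollary already separates the count of children from the count of omnians, so $v$ is simply counted once on each side. Making this identification explicit is essentially all the write-up needs to add beyond invoking Theorems~\ref{Lstelling} and~\ref{Hall}.
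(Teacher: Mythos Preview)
Your proposal is correct and follows exactly the paper's approach: the paper's proof reads in full ``Follows directly from Theorem~\ref{Lstelling} and Theorem~\ref{Hall}.'' You have simply unpacked the translation between $\Gamma(S)$ in $B$ and the set of children in $N$, and noted the harmless double-role of vertices that are both omnians and reticulations---details the paper leaves implicit.
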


\begin{proof}
Follows directly from Theorem \ref{Lstelling} and Theorem \ref{Hall}.
\end{proof}

An example of how this theorem and corollary can be applied is given in Figure \ref{VBnomatch}, where an example of a binary phylogenetic network $N$ is displayed in (a) and the bipartite graph~$B=(U~\cup~R,E)$ associated to $N$ in (b). \leo{Omnians are indicated as square nodes and reticulations are marked with a pink shading around the nodes.}

From the bipartite graph in Figure \ref{VBnomatch} it follows with Hall's Theorem, with $S = U$, that there exists no matching in $B$ that covers $U$. 
Therefore, with Theorem~\ref{Lstelling} it follows that $N$ in Figure \ref{VBnomatch}(a) is not tree-based. 
Indeed, we can directly see in~$N$ that the \leo{omians in} $S=\{a,i,h,f,g\}$ \leo{have} only four different children~$\{b,c,d,e\}$ \leo{(note that, in general, the set~$S$ and the set of children of vertices in~$S$ do not have to be disjoint).}
Hence this network is not tree-based.
\begin{figure}[h]\centering
\subfigure{\begin{tikzpicture}
\fill (0,0) circle (2pt);
\fill (-1,-1) circle (2pt);
\fill (1.8,-0.8) circle (2pt);
\fill (-2.5,-3) circle (2pt);
\fill (-2.8,-3.6) circle (2pt);
\draw (-2.8,-3.9) node {$p$};
\fill (-1.5,-5) circle (2pt);
\fill (-1.1,-5.5) circle (2pt);
\draw (-1.1,-5.8) node {$q$};
\filldraw [color=deeppink] ([xshift=-3.5pt,yshift=-3.5pt]-2,-6) rectangle ++(7pt,7pt);
\filldraw ([xshift=-2pt,yshift=-2pt]-2,-6) rectangle ++(4pt,4pt);
\draw (-2.3,-6) node {$f$};
\fill (-2,-6) circle (2pt);
\fill [deeppink] (-1.3,-7.5) circle (3.5pt);
\draw (-1.6,-7.5) node {$e$};
\fill (-1.3,-7.5) circle (2pt);
\fill (-1.3,-7.9) circle (2pt);
\draw (-1.3,-8.2) node {$w$};
\filldraw [color=deeppink] ([xshift=-3.5pt,yshift=-3.5pt]3,-2) rectangle ++(7pt,7pt);
\filldraw ([xshift=-2pt,yshift=-2pt]3,-2) rectangle ++(4pt,4pt);
\draw (3.3,-2) node {$a$};
\fill [deeppink] (2.5,-3) circle (3.5pt);

\fill (2.5,-3) circle (2pt);
\draw (2.8,-3) node {$b$};
\fill (0.5,-1.2) circle (2pt);
\filldraw ([xshift=-2pt,yshift=-2pt]0,-2.2) rectangle ++(4pt,4pt);
\draw (-0.3,-2.5) node {$i$};
\fill [deeppink] (0.5,-5.5) circle (3.5pt);

\fill (0.5,-5.5) circle (2pt);
\draw (0.2,-5.3) node {$c$};
\fill (2,-3.5) circle (2pt);
\fill (2.5,-4) circle (2pt);
\draw (2.5,-4.3) node {$x$};

\filldraw ([xshift=-2pt,yshift=-2pt]1,-4.5) rectangle ++(4pt,4pt);
\draw (0.7,-4.1) node {$h$};
\fill [deeppink] (1.5,-7.5) circle (3.5pt);
\fill (1.5,-7.5) circle (2pt);
\draw (1.8,-7.5) node {$d$};
\fill (0,-6) circle (2pt);

\filldraw ([xshift=-2pt,yshift=-2pt]-0.8,-6.5) rectangle ++(4pt,4pt);

\draw (-1,-6.3) node {$g$};
\fill (1.5,-7.9) circle (2pt);
\draw (1.5,-8.2) node {$d$};
\fill (0.2,-6.3) circle (2pt);
\draw (0.2,-6.6) node {$y$};

\draw [thick] (0,0) -- (-1,-1);
\draw [thick] (0,0) -- (1.8,-0.8);
\draw [thick] (-1,-1) -- (-2.5,-3);
\draw [thick] (-2.5,-3) -- (-2.8,-3.6);
\draw [thick] (-1.5,-5) -- (-1.1,-5.5);
\draw [thick] (-1.5,-5) -- (-2,-6);
\draw [thick] (-1.5,-5) -- (-1,-1);
\draw [thick] (-2.5,-3) -- (-2,-6);
\draw [thick] (-2,-6) -- (-1.3,-7.5);
\draw [thick] (-1.3,-7.9) -- (-1.3,-7.5);
\draw [thick] (1.8,-0.8) -- (3,-2);
\draw [thick] (3,-2) -- (2.5,-3);
\draw [thick] (0.5,-1.2) -- (3,-2);
\draw [thick] (0.5,-1.2) -- (1.8,-0.8);
\draw [thick] (0.5,-1.2) -- (0,-2.2);
\draw [thick] (0.5,-5.5) -- (0,-2.2);
\draw [thick] (2,-3.5) -- (2.5,-3);
\draw [thick] (2,-3.5) -- (1,-4.5);
\draw [thick] (0.5,-5.5) -- (1,-4.5);
\draw [thick] (1,-4.5) -- (1.5,-7.5);
\draw [thick] (0.5,-5.5) -- (0,-6);
\draw [thick] (-0.8,-6.5) -- (0,-6);
\draw [thick] (-0.8,-6.5) -- (-1.3,-7.5);
\draw [thick] (-0.8,-6.5) -- (1.5,-7.5);
\draw [thick] (1.5,-7.9) -- (1.5,-7.5);
\draw [thick] (2,-3.5) -- (2.5,-4);
\draw [thick] (0,-2.2) -- (2.5,-3);
\draw [thick] (0,-6) -- (0.2,-6.3);
\end{tikzpicture}
}
\hspace{2cm}
\subfigure{\begin{tikzpicture}
\filldraw [color=deeppink] ([xshift=-3.5pt,yshift=-3.5pt]0,0) rectangle ++(7pt,7pt);
\filldraw ([xshift=-2pt,yshift=-2pt]0,0) rectangle ++(4pt,4pt);
\filldraw [color=deeppink] ([xshift=-3.5pt,yshift=-3.5pt]0,-1) rectangle ++(7pt,7pt);
\filldraw ([xshift=-2pt,yshift=-2pt]0,-1) rectangle ++(4pt,4pt);
\filldraw ([xshift=-2pt,yshift=-2pt]0,-2) rectangle ++(4pt,4pt);
\filldraw ([xshift=-2pt,yshift=-2pt]0,-3) rectangle ++(4pt,4pt);
\filldraw ([xshift=-2pt,yshift=-2pt]0,-4) rectangle ++(4pt,4pt);

\filldraw [color=deeppink] ([xshift=-3.5pt,yshift=-3.5pt]2.5,0) rectangle ++(7pt,7pt);
\filldraw ([xshift=-2pt,yshift=-2pt]2.5,0) rectangle ++(4pt,4pt);
\filldraw [color=deeppink] ([xshift=-3.5pt,yshift=-3.5pt]2.5,-4) rectangle ++(7pt,7pt);
\filldraw ([xshift=-2pt,yshift=-2pt]2.5,-4) rectangle ++(4pt,4pt);
\fill [deeppink](2.5,-0.8) circle (3.5pt);
\fill [deeppink](2.5,-1.6) circle (3.5pt);
\fill [deeppink](2.5,-2.4) circle (3.5pt);
\fill [deeppink](2.5,-3.2) circle (3.5pt);
\fill (2.5,-0.8) circle (2pt);
\fill (2.5,-1.6) circle (2pt);
\fill (2.5,-2.4) circle (2pt);
\fill (2.5,-3.2) circle (2pt);

\draw (-.3,0) node {$f$};
\draw (-.3,-1) node {$a$};
\draw (-.3,-2) node {$i$};
\draw (-.3,-3) node {$h$};
\draw (-.3,-4) node {$g$};

\draw (2.8,0) node {$a$};
\draw (2.8,-0.8) node {$b$};
\draw (2.8,-1.6) node {$c$};
\draw (2.8,-2.4) node {$d$};
\draw (2.8,-3.2) node {$e$};
\draw (2.8,-4) node {$f$};

\draw (0,1) node {$U$};
\draw (2.5,1) node {$R$};
\draw [thick] (0,0) -- (2.5,-3.2);
\draw [thick] (0,-1) -- (2.5,-0.8);
\draw [thick] (0,-2) -- (2.5,-0.8);
\draw [thick] (0,-2) -- (2.5,-1.6);
\draw [thick] (0,-3) -- (2.5,-1.6);
\draw [thick] (0,-3) -- (2.5,-2.4);
\draw [thick] (0,-4) -- (2.5,-2.4);
\draw [thick] (0,-4) -- (2.5,-3.2);
\end{tikzpicture}
}
\caption{Example of a non-tree-based binary phylogenetic network $N$ and the bipartite graph $B$ associated to $N$.}
\label{VBnomatch}
\end{figure}
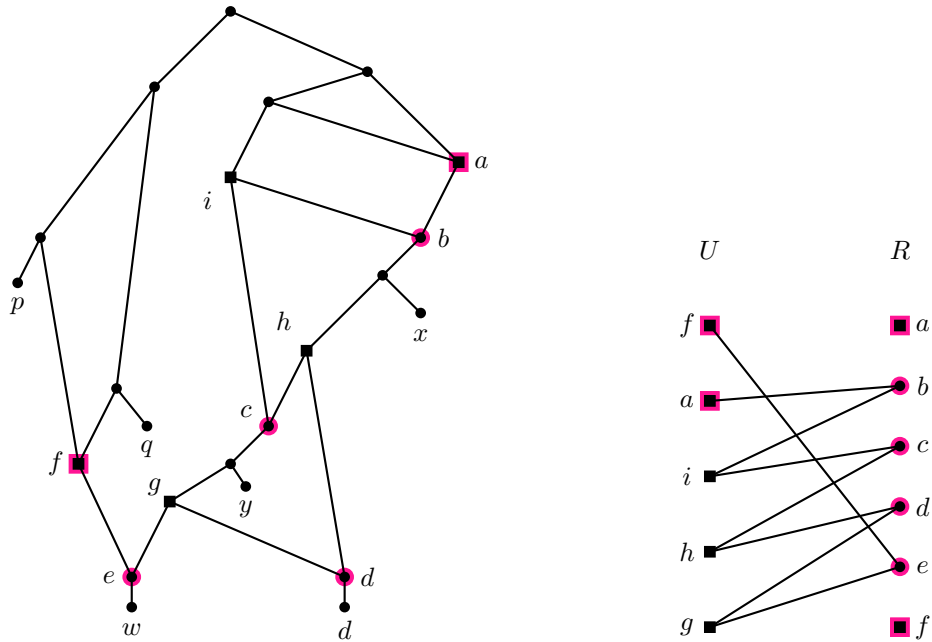

In a previous article, the following necessary condition for a network to be tree-based was found.

\begin{proposition}\cite{Artikel2} If a binary phylogenetic network over leaf set $X$ is tree-based, then it satisfies the antichain-to-leaf property.
\end{proposition}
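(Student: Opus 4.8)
The plan is to argue directly from the rooted-spanning-tree characterization of tree-basedness stated just after the definitions: a binary phylogenetic network is tree-based if and only if it has a rooted spanning tree $\tau$ without dummy leaves. So I would fix such a $\tau$, let $K$ be an arbitrary antichain of $N$, and for each $v\in K$ write $\tau_v$ for the subtree of $\tau$ induced by $v$ together with all of its descendants in $\tau$. The goal is to produce, for each $v\in K$, a directed path from $v$ to a leaf of $N$, with these paths pairwise arc-disjoint.

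The first step is to note that, since $\tau$ is a subgraph of $N$, every directed path in $\tau$ is also a directed path in $N$; consequently $K$ is still an antichain when viewed inside the tree $\tau$. The second (and main) step is to show that the subtrees $\tau_v$, $v\in K$, are pairwise vertex-disjoint. Suppose some vertex $w$ were a descendant in $\tau$ of two distinct vertices $v,v'\in K$. Since $\tau$ is a tree there is a unique directed path from the root of $\tau$ to $w$, and both $v$ and $v'$ must lie on it; hence one of $v,v'$ is a descendant of the other in $\tau$, contradicting that $K$ is an antichain in $\tau$. Therefore the $\tau_v$ are vertex-disjoint, and in particular pairwise arc-disjoint.

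The third step is routine: for each $v\in K$, following children within $\tau_v$ must terminate (as $N$, and hence $\tau$, is finite and acyclic), so $\tau_v$ contains a leaf $\ell_v$ of $\tau$; let $P_v$ be the directed path from $v$ to $\ell_v$ inside $\tau_v$. Because $\tau$ has no dummy leaves, $\ell_v$ is a leaf of $N$. The paths $P_v$ live in pairwise arc-disjoint subgraphs of $N$, hence are arc-disjoint, and each ends at a leaf of $N$. (If $v$ happens already to be a leaf of $N$, then $P_v$ is the trivial one-vertex path, which is harmless.) This is exactly the system of paths required by the antichain-to-leaf property, so $N$ satisfies it.

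There is no real obstacle here; the only point that needs a moment's care is the disjointness of the subtrees $\tau_v$, which relies solely on the fact that distinct root-to-vertex paths in a rooted tree cannot merge, together with the observation that antichains of $N$ remain antichains of the spanning tree $\tau$.
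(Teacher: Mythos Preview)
The paper does not actually supply a proof of this proposition; it is merely quoted from~\cite{Artikel2} (Francis and Steel) as a known result, so there is no in-paper argument to compare against.

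Your proof is correct. The key ingredients are exactly the right ones: (i) the spanning-tree characterization of tree-basedness stated just before the proposition, (ii) the observation that an antichain of $N$ remains an antichain of any spanning subtree $\tau$ (since $\tau$-paths are $N$-paths), and (iii) the standard fact that descendant-subtrees of an antichain in a rooted tree are pairwise vertex-disjoint, hence arc-disjoint. The absence of dummy leaves is precisely what guarantees that the leaf you reach inside each $\tau_v$ is a genuine leaf of $N$. This is the natural direct argument and no step is missing.
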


On the other hand, if a network is not tree-based, it can still satisfy the antichain-to-leaf property, as shown by the example from~\cite{Artikel2} in Figure~\ref{WelALPgeenTB}. However, it is difficult to see in this network what is going on. \leo{Therefore, we show in Figure~\ref{ExamplesALP} two examples of local structures that cause a network to be non-tree-based. The example in Figure~\ref{ExamplesALP}(a) is similar to the local structure that causes the network of Figure~\ref{WelALPgeenTB} not to be tree-based, which can be checked using Corollary~\ref{Aantalkind}.}

At first sight, it might look like networks containing the local structures in Figure~\ref{ExamplesALP} cannot satisfy the antichain-to-leaf property. However, in Figure~\ref{Dpath} we show that it is indeed possible that they do satisfy this property. Although it was already known that networks that satisfy the antichain-to-leaf property are not necessarily tree-based, these figures illuminate why this is possible.

\begin{figure}[h]\centering
\begin{tikzpicture}[scale=0.67]
\fill (3,11) circle (3pt);  
\fill (5,9) circle (3pt);    
\fill (0,8) circle (3pt);    
\fill (7,7) circle (3pt);    
\filldraw ([xshift=-3pt,yshift=-3pt]1,6.2) rectangle ++(6pt,6pt);
\fill[color=deeppink] (5,5.66) circle (4.5pt);
\filldraw [deeppink] ([xshift=-5pt,yshift=-5pt]5,5.66) rectangle ++(10pt,10pt);
\filldraw ([xshift=-3pt,yshift=-3pt]5,5.66) rectangle ++(6pt,6pt);
\fill (1.5,3.33) circle (3pt);    
\fill[color=deeppink] (1,3) circle (4.5pt);
\filldraw [deeppink] ([xshift=-5pt,yshift=-5pt]1,3) rectangle ++(10pt,10pt);
\filldraw ([xshift=-3pt,yshift=-3pt]1,3) rectangle ++(6pt,6pt);
\fill[color=deeppink] (0,2.33) circle (4.5pt);
\fill[color=deeppink] (2.5,4) circle (4.5pt);
\fill (2.5,4) circle (3pt);
\fill (0,2.33) circle (3pt);    
\fill (8,6) circle (3pt);    
\fill (2,2.3) circle (3pt);    
\fill (-0.3,1.33) circle (3pt);    
\fill (1.5,3.33) circle (3pt);    

\draw (5.5,5.3) node {$o_1$};
\draw (2.9,3.6) node {$r_1$};
\draw (1.5,6.5) node {$o_2$};
\draw (-0.6,2.1) node {$r_2$};
\draw (1,2.5) node {$o_3$};

\draw [thick](3,11) -- (5,9);
\draw [thick](3,11) -- (0,8);
\draw [thick](5,9) -- (7,7);
\draw [thick](8,6) -- (7,7);
\draw [thick](0,8) -- (1,6.2);
\draw [thick](5,9) -- (5,5.66);
\draw [thick](5,5.66) -- (7,7);
\draw [thick](0,8) .. controls +(left:1cm) and +(down:1cm) ..  (1,3);   
\draw [thick](1,6.2) .. controls +(up:0.5cm)  and +(left:1 cm) .. (0,2.33);   
\draw [thick](5,5.66) -- (2.5,4);
\draw [thick](2.5,4) -- (1,6.2);
\draw [thick](2.5,4) -- (1.5,3.33);
\draw [thick](1.5,3.33) -- (2,2.3);
\draw [thick](1.5,3.33) -- (1,3);
\draw [ thick](1,3) -- (0,2.33);
\draw [thick](0,2.33) -- (-0.3,1.33);
\end{tikzpicture}
\caption{Not tree-based binary phylogenetic network satisfying the antichain-to-leaf property \cite{Artikel2}.}
\label{WelALPgeenTB}
\end{figure}
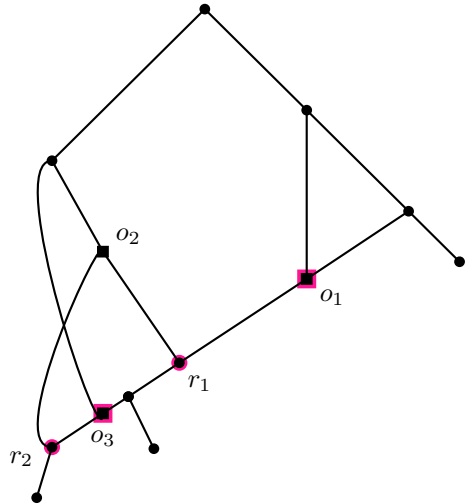

\begin{figure}[h]\centering
\hfill
\subfigure[]{
\begin{tikzpicture}
\draw [thick] (-0.65,1) -- (0,0);
\draw [thick] (0.65,1) -- (0,0);
\draw [thick] (2,1) -- (2,0);
\draw [thick, darkgreen, dashed] (2,0) -- (3,-1);
\draw [thick, darkgreen, dashed] (3,-1) -- (4,0);
\draw [thick] (4,0) -- (4,1);
\draw [thick] (3,-1) -- (3,-2);
\draw [thick, darkgreen, dashed] (5,-1) -- (4,0);
\draw [thick] (5,-1) -- (5,-2);
\draw [thick, darkgreen, dashed] (5,-1) -- (6,0);
\draw [thick, darkgreen, dashed] (6,0) -- (7,-1);
\draw [thick, darkgreen, dashed] (7,-1) -- (8,0);
\draw [thick] (7,-1) -- (7,-2);
\draw [thick] (6,0) -- (6,1);
\draw [thick] (8,0) -- (7.35,1);
\draw [thick] (8,0) -- (8.65,1);
\draw [thick, darkgreen, dashed] (0,0) -- (1,-1);
\draw [thick, darkgreen, dashed] (1,-1) -- (2,0);
\draw [thick] (1,-1) -- (1,-2);
\filldraw [color=deeppink] ([xshift=-3.5pt,yshift=-3.5pt]0,0) rectangle ++(7pt,7pt);
\filldraw ([xshift=-2pt,yshift=-2pt]0,0) rectangle ++(4pt,4pt);
\filldraw ([xshift=-2pt,yshift=-2pt]2,0) rectangle ++(4pt,4pt);
\fill[color=deeppink] (5,-1) circle (3.5pt);
\fill (5,-1) circle (2pt);
\fill[color=deeppink] (3,-1) circle (3.5pt);
\fill (3,-1) circle (2pt);
\fill[color=deeppink] (1,-1) circle (3.5pt);
\fill (1,-1) circle (2pt);
\filldraw ([xshift=-2pt,yshift=-2pt]4,0) rectangle ++(4pt,4pt);
\filldraw ([xshift=-2pt,yshift=-2pt]6,0) rectangle ++(4pt,4pt);
\fill[color=deeppink] (7,-1) circle (3.5pt);
\fill (7,-1) circle (2pt);
\filldraw [color=deeppink] ([xshift=-3.5pt,yshift=-3.5pt]8,0) rectangle ++(7pt,7pt);
\filldraw ([xshift=-2pt,yshift=-2pt]8,0) rectangle ++(4pt,4pt);
\draw (8.35,0) node {$o_5$};
\draw (6.35,0) node {$o_4$};
\draw (4.35,0) node {$o_3$};
\draw (2.35,0) node {$o_2$};
\draw (0.35,0) node {$o_1$};
\draw (1.35,-1) node {$r_1$};
\draw (3.35,-1) node {$r_2$};
\draw (5.35,-1) node {$r_3$};
\draw (7.35,-1) node {$r_4$};
\end{tikzpicture}}
\hfill
\subfigure[ ]{
\begin{tikzpicture}
\filldraw [color=deeppink] ([xshift=-3.5pt,yshift=-3.5pt]0,0) rectangle ++(7pt,7pt);
\filldraw ([xshift=-2pt,yshift=-2pt]0,0) rectangle ++(4pt,4pt);
\filldraw ([xshift=-2pt,yshift=-2pt]2,0) rectangle ++(4pt,4pt);
\fill[color=deeppink] (3,-1) circle (3.5pt);
\fill (3,-1) circle (2pt);
\filldraw [color=deeppink] ([xshift=-3.5pt,yshift=-3.5pt]1,-1) rectangle ++(7pt,7pt);
\filldraw ([xshift=-2pt,yshift=-2pt]1,-1) rectangle ++(4pt,4pt);
\fill (1,-1) circle (2pt);
\filldraw ([xshift=-2pt,yshift=-2pt]4,0) rectangle ++(4pt,4pt);
\fill[color=deeppink] (1.4,-3.2) circle (3.5pt);
\fill (1.4,-3.2) circle (2pt);
\draw [thick] (-0.65,1) -- (0,0);
\draw [thick] (0.65,1) -- (0,0);
\draw [thick] (2,1) -- (2,0);
\draw [thick] (2,0) -- (3,-1);
\draw [thick] (3,-1) -- (4,0);
\draw [thick] (4,0) -- (4,1);
\draw [thick] (3,-1) -- (3,-1.5);
\draw [thick] (0,0) -- (1,-1);
\draw [thick] (1,-1) -- (2,0);
\draw [thick] (1,-1) -- (1.4,-3.2);
\draw [thick] (1.4,-3.9) -- (1.4,-3.2);
\draw[thick] (4,0) .. controls +(right:1cm) and  +(right:1cm) .. (1.4,-3.2);
\draw (-.4,0) node {$o_1$};
\draw (1.7,0.2) node {$o_2$};
\draw (3.7,0.2) node {$o_3$};
\draw (0.15,-1.2) node {$r_1=o_4$};
\draw (2.7,-1.2) node {$r_2$};
\draw (1,-3.2) node {$r_3$};

\end{tikzpicture}}
\caption{Examples of local structures of binary phylogenetic networks that are not tree-based.}
\label{ExamplesALP}
\end{figure}
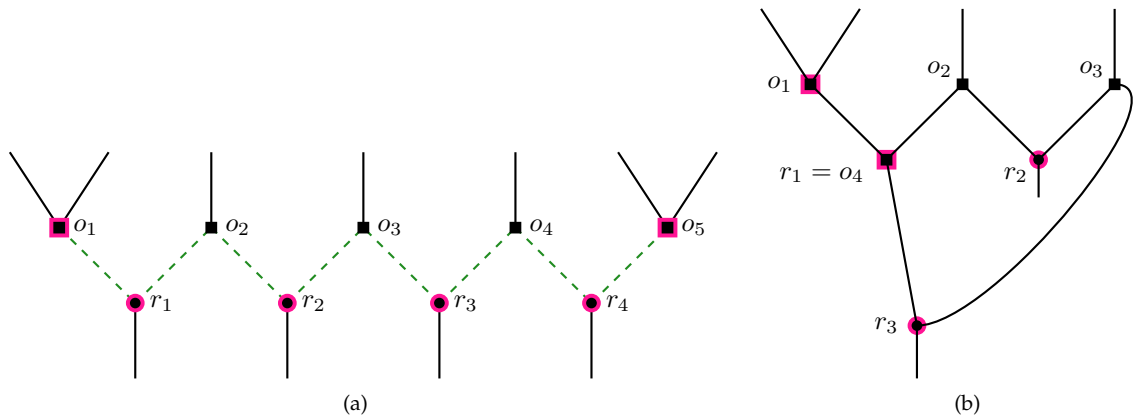

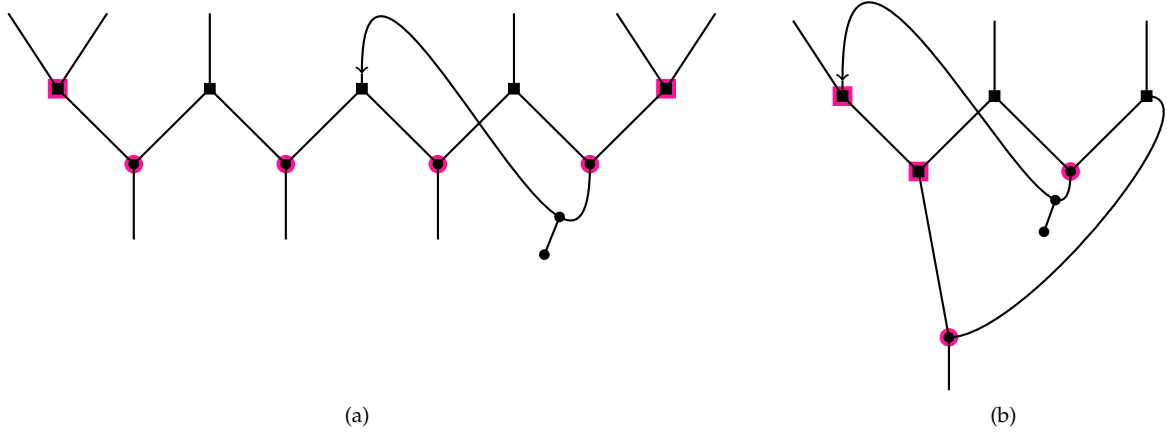
\begin{figure}[h]\centering
\subfigure[]{\begin{tikzpicture}
\filldraw [color=deeppink] ([xshift=-3.5pt,yshift=-3.5pt]0,0) rectangle ++(7pt,7pt);
\filldraw ([xshift=-2pt,yshift=-2pt]0,0) rectangle ++(4pt,4pt);
\filldraw ([xshift=-2pt,yshift=-2pt]2,0) rectangle ++(4pt,4pt);
\fill[color=deeppink] (5,-1) circle (3.5pt);
\fill (5,-1) circle (2pt);
\fill[color=deeppink] (3,-1) circle (3.5pt);
\fill (3,-1) circle (2pt);
\fill[color=deeppink] (1,-1) circle (3.5pt);
\fill (1,-1) circle (2pt);
\filldraw ([xshift=-2pt,yshift=-2pt]4,0) rectangle ++(4pt,4pt);
\filldraw ([xshift=-2pt,yshift=-2pt]6,0) rectangle ++(4pt,4pt);
\fill[color=deeppink] (7,-1) circle (3.5pt);
\fill (7,-1) circle (2pt);
\filldraw [color=deeppink] ([xshift=-3.5pt,yshift=-3.5pt]8,0) rectangle ++(7pt,7pt);
\filldraw ([xshift=-2pt,yshift=-2pt]8,0) rectangle ++(4pt,4pt);
\draw [thick] (-0.65,1) -- (0,0);
\draw [thick] (0.65,1) -- (0,0);
\draw [thick] (2,1) -- (2,0);
\draw [thick] (2,0) -- (3,-1);
\draw [thick] (3,-1) -- (4,0);
\draw [thick] (3,-1) -- (3,-2);
\draw [thick] (5,-1) -- (4,0);
\draw [thick] (5,-1) -- (5,-2);
\draw [thick] (5,-1) -- (6,0);
\draw [thick] (6,0) -- (7,-1);
\draw [thick] (7,-1) -- (8,0);

\draw [thick,->] (7,-1)  .. controls +(down:3cm) and +(up:3cm) .. (4,0.2);
\fill (6.6,-1.7) circle (2pt);
\draw [thick] (6.6,-1.7) -- (6.4,-2.2);
\fill (6.4,-2.2) circle (2pt);
\draw [thick] (4,0.2) -- (4,0);
\draw [thick] (6,0) -- (6,1);
\draw [thick] (8,0) -- (7.35,1);
\draw [thick] (8,0) -- (8.65,1);
\draw [thick] (0,0) -- (1,-1);
\draw [thick] (1,-1) -- (2,0);
\draw [thick] (1,-1) -- (1,-2);

\end{tikzpicture}}
\hfill
\subfigure[ ]{\begin{tikzpicture}
\filldraw [color=deeppink] ([xshift=-3.5pt,yshift=-3.5pt]0,0) rectangle ++(7pt,7pt);
\filldraw ([xshift=-2pt,yshift=-2pt]0,0) rectangle ++(4pt,4pt);
\filldraw ([xshift=-2pt,yshift=-2pt]2,0) rectangle ++(4pt,4pt);
\fill[color=deeppink] (3,-1) circle (3.5pt);
\fill (3,-1) circle (2pt);
\filldraw [color=deeppink] ([xshift=-3.5pt,yshift=-3.5pt]1,-1) rectangle ++(7pt,7pt);
\filldraw ([xshift=-2pt,yshift=-2pt]1,-1) rectangle ++(4pt,4pt);
\fill (1,-1) circle (2pt);
\filldraw ([xshift=-2pt,yshift=-2pt]4,0) rectangle ++(4pt,4pt);
\fill[color=deeppink] (1.4,-3.2) circle (3.5pt);
\fill (1.4,-3.2) circle (2pt);
\draw [thick] (-0.65,1) -- (0,0);
\draw [thick] (2,1) -- (2,0);
\draw [thick] (2,0) -- (3,-1);
\draw [thick] (3,-1) -- (4,0);
\draw [thick] (4,0) -- (4,1);
\draw [thick] (0,0) -- (1,-1);
\draw [thick] (1,-1) -- (2,0);
\draw [thick] (1,-1) -- (1.4,-3.2);
\draw [thick] (1.4,-3.9) -- (1.4,-3.2);
\draw[thick] (4,0) .. controls +(right:1cm) and  +(right:1cm) .. (1.4,-3.2);
\draw [thick,->] (3,-1) ..controls +(down:2cm) and +(up:3.5cm).. (0,0.2);
\draw [thick] (0,0.2) -- (0,0);
\fill (2.8,-1.38) circle (2pt);
\draw [thick] (2.8,-1.38) -- (2.65, -1.8);
\fill (2.65,-1.8) circle (2pt);
\end{tikzpicture}}
\caption{Local structures of binary phylogenetic networks that satisfy the antichain-to-leaf property but are not tree-based.}
\label{Dpath}
\end{figure}

Looking at the examples in Figure~\ref{ExamplesALP}, we see that a pattern has emerged. In (a) the pattern is marked dashed in green. 
Starting at vertex $o_1$ and ending at vertex $o_5$, we see a zigzag starting with an omnian, alternating between reticulations and omnians, eventually ending with an omnian.
The last omnian in the pattern can be a reticulation that is already part of the path, as can be seen in Figure~\ref{ExamplesALP}(b).

The next theorem shows that every binary phylogenetic network that is not tree-based contains a local structure as in the examples in Figure~\ref{ExamplesALP}.

\begin{theorem}
\label{Estelling2}
Let $N$ be a binary phylogenetic network and $B = (U\cup R,E)$ the bipartite graph associated to~$N$. Network~$N$ is tree-based if and only if~$B$ contains no maximal path which starts and ends in $U$.
\end{theorem}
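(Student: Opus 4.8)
The plan is to route everything through the matching/Hall characterization already established: by Corollary~\ref{Aantalkind} (identifying the children of an omnian with its neighbours in $B$, which is legitimate since every child of an omnian is a reticulation), $N$ is tree-based if and only if $|\Gamma(S)|\ge |S|$ for every $S\subseteq U$. So it suffices to prove that $B$ contains a maximal path with both endpoints in $U$ if and only if there is a set $S\subseteq U$ with $|\Gamma(S)|<|S|$.

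For the ``only if'' direction, I would take a maximal path $P=u_0r_1u_1\cdots r_ku_k$ of $B$ with $u_0,u_k\in U$; note $k\ge 1$, since every omnian is a non-leaf and hence has a child, so a single vertex of $U$ is never a maximal path. Set $S=\{u_0,\dots,u_k\}$, so $|S|=k+1$. Maximality forces every neighbour of $u_0$ and of $u_k$ to lie on $P$, hence in $\{r_1,\dots,r_k\}$. For an internal vertex $u_i$ ($1\le i\le k-1$) the path already supplies two distinct neighbours $r_i$ and $r_{i+1}$, and since $u_i$ is an omnian of a binary network it has at most two children, so these are all of them. Hence $\Gamma(S)=\{r_1,\dots,r_k\}$ and $|\Gamma(S)|=k<k+1=|S|$.

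For the ``if'' direction — which I expect to be the crux — I would take $S\subseteq U$ of \emph{minimum size} with $|\Gamma(S)|<|S|$. Minimality forces $|\Gamma(S)|=|S|-1$ and $\Gamma(S\setminus\{u\})=\Gamma(S)$ for every $u\in S$; the latter says every $r\in\Gamma(S)$ has at least two neighbours in $S$, and since a reticulation has at most two parents it has exactly two neighbours in $B$, both in $S$. Thus in the subgraph $H$ of $B$ induced by $S\cup\Gamma(S)$, every vertex of $\Gamma(S)$ has degree~$2$ and every vertex of $S$ has degree~$\le 2$ (equal to its degree in $B$, as all its neighbours already lie in $\Gamma(S)$ and an omnian has at most two children). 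So $H$ is a disjoint union of paths and cycles, and because $\Gamma(S)$-vertices have degree exactly~$2$, no such vertex is an endpoint of, or the whole of, a path component; hence every path component has both endpoints in $S$. Each cycle component contributes $0$ and each path component $1$ to $|S|-|\Gamma(S)|$, which equals~$1$, so there is exactly one path component $P$. It is not a single vertex (omnians have a neighbour in $B$), so $P=u_0r_1u_1\cdots r_ku_k$ with $u_0,u_k\in S$ of degree~$1$ in $H$ and therefore in $B$; hence $P$ cannot be extended at either end and is a maximal path of $B$ starting and ending in $U$.

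The main obstacle is the structural analysis of a minimum-size Hall-violating set in the ``if'' direction: the argument hinges on $H$ having maximum degree~$2$, which uses \emph{both} that reticulations have at most two parents and that omnians have at most two children, i.e.\ precisely where binarity is essential (and where the analogous statement for nonbinary networks fails). A secondary, easily-overlooked point is that one must check $P$ is maximal in all of $B$ and not merely within $H$; this follows because any vertex of $U$ has all of its $B$-neighbours inside $\Gamma(S)$.
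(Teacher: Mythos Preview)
Your proof is correct and, like the paper's, routes the statement through Hall's condition and Theorem~\ref{Lstelling}/Corollary~\ref{Aantalkind}. The paper's argument is somewhat more direct for the ``if'' direction: it observes at the outset that \emph{all} of $B$ has maximum degree~$2$ (omnians have at most two children, reticulations at most two parents), so $B$ itself is a disjoint union of paths and cycles; a brief case analysis on the four possible component types (both ends in $R$; one end in each side; both ends in $U$; cycle) then shows that Hall's condition holds on every component, and hence globally, precisely when no component is a path with both ends in $U$. Your minimum-Hall-violator argument ends up reconstructing exactly such a path component inside the induced subgraph $H$, but the global degree-$2$ observation makes the minimality machinery unnecessary here. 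That said, your route has the merit of producing a single witnessing path directly from the Hall defect, and your closing remark correctly pinpoints why binarity is essential on \emph{both} sides of $B$.
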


\begin{proof}
Notice that every vertex in $B$ is of degree at most 2. Therefore,~$B$ is a disjoint union of paths and cycles. Hence, for each connected component~$B' = (U'\cup R',E')$ of~$B$, there are four \leo{possible topologies}:
\begin{enumerate}
\item[i)] A maximal path that begins and ends in~$R$.
\item[ii)] A maximal path that begins in $U$ and ends in $R$.
\item[iii)] A maximal path that begins and ends in $U$.
\item[iv)] A circuit.
\end{enumerate}

i) All vertices in $R$ are of degree at most 2. Because the maximal path begins and ends in $R$, all vertices in~$U'$ have degree $2$. Let $S \subseteq U'$. Recall that~$\Gamma(S)$ denotes the set of neighbours of vertices in~$S$.
The number of edges incident to $S = 2 \left| S \right| \leq$ the number of edges incident to $\Gamma (S) \leq 2 \left| \Gamma (S)\right|$.
Therefore, $|S|\leq |\Gamma(S)|$ for all~$S\subseteq U'$. It follows from Hall's Theorem that there exists a matching in $B'$ that covers $U'$.\\

ii) All vertices in $R$ are of degree at most 2. All vertices in~$U'$ have degree~$2$, except for the omnian~$o_1$ where the maximal path begins. Let $S \subseteq U'$. Consider the subgraph of~$B'$ induced by~$S\cup\Gamma(S)$. It consists of paths. First consider such a path that does not contain~$o_1$. Then the path must begin and end in~$R$, because every omnian in~$S$ except for~$o_1$ has two neighbours in~$\Gamma(S)$. Hence, the paths contain more reticulations than omnians. Now consider a path that contains~$o_1$. Then it is a path that begins in~$U$ and ends in~$R$. Hence, it contains as many omnians as reticulations. It follows that all paths together, i.e. the subgraph of~$B'$ induced by~$S\cup\Gamma(S)$, contains more reticulations than omnians, so $|S|\leq |\Gamma(S)|$. Again, since this holds for all $S\subseteq U'$. It follows from Hall's Theorem that there exists a matching in $B'$ that covers $U'$.\\

iii) Let~$S=U'$. Then~$|S|=|\Gamma(S)| \leo{+ 1}$. Hence, it follows from Hall's Theorem that there does not exist a matching in $B$ that covers $U$.\\

iv) All vertices in $B$ are of degree 2 and it follows in the same way as in \leo{case} i) that $|S|\leq |\Gamma(S)|$ for all~$S\subseteq U'$ and hence that there exists a matching in $B'$ that covers $U'$.\\

Hence, there exists a matching in $B$ that covers $U$ precisely if there is no maximal path that starts and ends in $U$. The theorem now follows from Theorem~\ref{Lstelling}.
\end{proof}

Proposition \ref{Propositie3i)} showed that a binary phylogenetic network is tree-based if for each reticulation both parents are tree-vertices and not tree-based if for at least one reticulation both parents are reticulations.
However, in the situation in which a reticulation in~$N$ has one parent that is a reticulation and the other a tree-vertex it is not immediately clear if~$N$ is tree-based or not. 
The next corollary shows that such networks are tree-based if an additional condition is fulfilled. 

\begin{corollary}
\label{Estelling1}
If for every reticulation~$r$ in a binary phylogenetic network~$N$ either\begin{enumerate}
\item[(i)] both parents of $r$ are tree-vertices; or
\item[(ii)] one parent of $r$ is a tree-vertex and the sibling of $r$ is a tree-vertex or a leaf,
\end{enumerate}
then $N$ is tree-based.
\end{corollary}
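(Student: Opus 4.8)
The plan is to reduce everything to Theorem~\ref{Estelling2}: since $N$ is tree-based precisely when the associated bipartite graph $B=(U\cup R,E)$ contains no maximal path starting and ending in $U$, it suffices to show that hypotheses (i)--(ii) forbid such a path. So I would argue by contradiction: suppose $B$ has a maximal path $P=o_1\,r_1\,o_2\,\dots\,o_k$ whose endpoints $o_1,o_k$ both lie in $U$. Since the endpoints are in $U$, the vertex $r_1\in R$ is an interior vertex of $P$, and $k\ge 2$, so $o_1,r_1,o_2$ are genuinely present.

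The core of the argument is a local analysis around $o_1$ and $r_1$. First I would note that the $B$-degree of an omnian equals its outdegree in $N$, because every child of an omnian is a reticulation and $B$ carries exactly one edge per such child. Since $o_1$ is an endpoint of a maximal path it has $B$-degree $1$, hence outdegree $1$ in $N$; in particular $o_1$ is \emph{not} a tree-vertex, and its unique child is the reticulation $r_1$. Next, $r_1$ is interior to $P$, so it has $B$-degree $2$, meaning both parents of $r_1$ are omnians, namely $o_1$ and the next vertex $o_2\in U$ on $P$. Now apply the hypothesis to the reticulation $r_1$: its parent $o_1$ is not a tree-vertex, so alternative (i) fails, hence (ii) must hold. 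Then one parent of $r_1$ is a tree-vertex, which must be $o_2$ (as $o_1$ is not), and the sibling of $r_1$ is a tree-vertex or a leaf. But the sibling of $r_1$ can only arise from $o_2$, since the other parent $o_1$ has outdegree $1$ and thus no second child; so that sibling is a child of $o_2$. As $o_2\in U$ is an omnian, all its children are reticulations, so the sibling of $r_1$ is a reticulation — contradicting (ii). Hence no such path $P$ exists, and $N$ is tree-based by Theorem~\ref{Estelling2}.

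I do not anticipate a real obstacle; the only points requiring a little care are bookkeeping ones. One is that ``omnian of $B$-degree $1$'' must be recognized as ``omnian of outdegree $1$ in $N$'', which uniformly covers both the case where $o_1$ is itself a reticulation and the degenerate case where $o_1$ is a root of outdegree $1$, and in either case $o_1$ contributes no sibling to $r_1$. The other is the observation that the tree-vertex parent of $r_1$ guaranteed by (ii) is forced to be the \emph{next} vertex $o_2$ on the path, rather than some parent lying off $P$ — this is exactly what makes the omnian property of $o_2$ available to produce the contradiction.
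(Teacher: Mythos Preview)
Your proof is correct and follows essentially the same route as the paper: both argue by contradiction from Theorem~\ref{Estelling2}, take an endpoint $o_1\in U$ of a hypothetical bad maximal path, observe it has outdegree~$1$ in $N$ (hence is not a tree-vertex), and then use hypothesis~(ii) on its child $r_1$ to force the other parent to have a non-reticulation child, contradicting that this parent lies in $U$. Your write-up is in fact slightly more careful than the paper's, which asserts that $o_1$ ``is also a reticulation'' and thereby tacitly ignores the degenerate possibility that $o_1$ is a root of outdegree~$1$; your phrasing ``not a tree-vertex'' covers both cases uniformly.
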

\begin{proof} 
Let $B=(U \cup R, E)$ be the bipartite graph associated to $N$. Since all vertices in~$B$ have degree at most two, each connected component of~$B$ is a path or a cycle. Assume that there exists a maximal path~$P$ that starts and ends in~$U$. Let~$u\in U$ be the first vertex on this path. Since~$u$ is an omnian with only one child, it is also a reticulation. Let~$r$ be the only child of~$u$. Since one parent of~$r$ is a reticulation, the other parent~$p$ of~$r$ must be a tree-vertex and the sibling of~$r$ also a tree-vertex or a leaf. However, that means that~$p$ is not an omnian. Hence, path~$P$ ends in~$r$, which is a contradiction to the assumption that~$P$ ended in~$U$. It follows that there is no path that starts and ends in~$U$. By Theorem~\ref{Lstelling} it follows that~$N$ is tree-based.
\end{proof}

The following characterization of binary tree-based phylogenetic networks follows directly from Theorem~\ref{Estelling2}, thus providing an alternative proof of this characterization which was independently discovered (in a slightly different form) by Louxin Zhang~\cite{zhang}. See Figure~\ref{ExamplesALP} for examples. We call a sequence $(u_1,v_1,\ldots ,u_{k},v_{k},u_{k+1})$ of $2k+1$ vertices ($k\geq 1$) of a network~$N$ a \emph{zig-zag} path if~$v_i$ is the child of~$u_i$ and~$u_{i+1}$ for~$i=1,\ldots ,k$.

\begin{corollary}\label{cor:zigzag}
A binary phylogenetic network is tree-based if and only if it contains no zig-zag path $(o_1,r_1,\ldots ,$ $o_{k},r_{k},o_{k+1})$, with $k\geq 1$, in which $r_1,\ldots , r_{k}$ are reticulations, $o_1,\ldots,o_{k+1}$ are omnians and~$o_1$ and~$o_{k+1}$ are reticulations as well as omninans.
\end{corollary}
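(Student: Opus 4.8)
The plan is to obtain this directly from Theorem~\ref{Estelling2}: since $N$ is tree-based if and only if the associated bipartite graph $B=(U\cup R,E)$ has no maximal path that starts and ends in~$U$, it suffices to match up such maximal paths of $B$ with the zig-zag paths described in the statement.

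First I would extract a zig-zag path from a maximal $U$-to-$U$ path of $B$. Every vertex of $B$ has degree at most~$2$, so a maximal path of $B$ is exactly a path component, and its two endpoints have degree~$1$; a maximal path with both endpoints in $U$ therefore has the form $o_1,r_1,o_2,\ldots,r_k,o_{k+1}$ with all $o_i\in U$, all $r_j\in R$, and $k\geq1$ (there is no isolated vertex of $U$, since an omnian with no reticulation child would be a leaf, which is excluded). By the definition of $B$, the edge $\{o_i,r_i\}$ comes from an arc $(o_i,r_i)$ of $N$ and the edge $\{r_i,o_{i+1}\}$ from an arc $(o_{i+1},r_i)$, so $r_i$ is the common child of $o_i$ and $o_{i+1}$; thus $(o_1,r_1,\ldots,o_{k+1})$ is a zig-zag path whose $r_i$ are reticulations and whose $o_i$ are omnians. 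Since an omnian's children are all reticulations, the degree of an omnian in $B$ equals its outdegree in $N$; hence the endpoints $o_1$ and $o_{k+1}$, having $B$-degree~$1$, have outdegree~$1$ in $N$, which in a binary network forces them to be reticulations (the only alternative, that $o_1$ or $o_{k+1}$ is the root of $N$, is dealt with directly).

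For the converse, given a zig-zag path $(o_1,r_1,\ldots,o_{k+1})$ as in the statement, each of the arcs $(o_i,r_i)$ and $(o_{i+1},r_i)$ gives an edge of $B$ --- this is where the hypotheses that the $o_i$ are omnians and the $r_i$ reticulations are used --- and these edges form a walk in $B$ from $o_1$ to $o_{k+1}$. Because $o_1$ and $o_{k+1}$ are reticulations as well as omnians, they have outdegree~$1$ and hence $B$-degree~$1$; being degree-$1$ vertices joined by a walk, they are the two ends of a single path component of $B$, which is therefore a maximal path starting and ending in~$U$. Combining the two directions with Theorem~\ref{Estelling2} yields the corollary.

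I expect the only real work to be the degree bookkeeping: verifying that the ends of the relevant $B$-path are precisely the vertices that are simultaneously omnians and reticulations, and that the walk coming from a zig-zag path is genuinely a maximal path rather than a proper subpath or part of a cycle. Both points rest on the two elementary facts that $B$ has maximum degree~$2$ and that an omnian's $B$-degree equals its outdegree in $N$, after which the argument is routine.
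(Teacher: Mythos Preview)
Your approach is correct and is exactly what the paper intends: it states only that the corollary ``follows directly from Theorem~\ref{Estelling2},'' and your proposal fleshes out precisely this correspondence between maximal $U$-to-$U$ paths in $B$ and zig-zag paths whose ends are reticulation-omnians. The key observations you isolate (that $B$ has maximum degree~$2$ and that an omnian's $B$-degree equals its outdegree in $N$) are exactly the facts needed, and your handling of the endpoints is the right one.
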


\section{Nonbinary phylogenetic networks}
\label{sec:nonbin}

\subsection{Preliminaries}
We start with the definition of nonbinary networks.\footnote{Whenever we refer to nonbinary, we mean ``not-necessarily-binary''.} An example is given in Figure~\ref{ExampleNonBinary}.

\begin{definition}
A (\emph{rooted}) \emph{nonbinary phylogenetic network} is a directed, acyclic graph $N=(V,A)$ that \leo{contains a single \emph{root} with indegree~0 and outdegree~1 or more and may additionally }contain the following types of vertices:
\begin{itemize}
\item vertices with indegree $1$ and outdegree $0$, called \emph{leaves} (coloured blue in Figure \ref{ExampleNonBinary}), which are labelled;
\item vertices with outdegree $1$ and indegree $2$ or more, called \emph{reticulations} (marked in pink in Figure \ref{ExampleNonBinary});
\item vertices with indegree $1$ and outdegree $2$ or more, called \emph{tree-vertices}.
\end{itemize}
\end{definition}

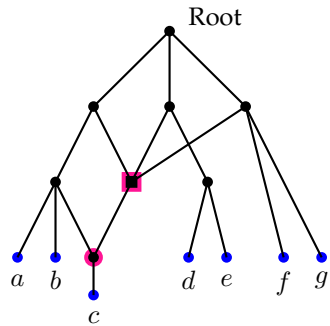
\begin{figure}[h]\centering
\begin{tikzpicture}
\draw (0.6,0.2) node {Root};
\fill (0,0) circle (2pt);
\fill (0,-1) circle (2pt);
\fill (-1,-1) circle (2pt);
\fill (1,-1) circle (2pt);
\fill[color=deeppink] (-1,-3) circle (3.5pt);
\fill (-1.5,-2) circle (2pt);
\filldraw [color=deeppink] ([xshift=-3.5pt,yshift=-3.5pt]-0.5,-2) rectangle ++(7pt,7pt);
\filldraw ([xshift=-2pt,yshift=-2pt]-0.5,-2) rectangle ++(4pt,4pt);
\fill (0.5,-2) circle (2pt);
\fill [blue](1.5,-3) circle (2pt);
\fill [blue](-2,-3) circle (2pt);
\fill (-1,-3) circle (2pt);
\fill [blue](0.25,-3) circle (2pt);
\fill [blue](0.75,-3) circle (2pt);
\fill [blue] (-1,-3.5) circle (2pt);
\fill [blue](-1.5,-3) circle (2pt);
\fill [blue] (2,-3) circle (2pt);

\draw (1.5,-3.35) node {$f$};
\draw (-2,-3.3) node {$a$};
\draw (0.25,-3.3) node {$d$};
\draw (0.75,-3.3) node {$e$};
\draw (-1,-3.8) node {$c$};
\draw (-1.5,-3.3) node {$b$};
\draw (2,-3.3) node {$g$};

\draw [thick] (0,0) -- (0,-1);
\draw [thick] (0,0) -- (1,-1);
\draw [thick] (0,0) -- (-1,-1);
\draw [thick] (-1.5,-2) -- (-1,-1);
\draw [thick] (-1.5,-2) -- (-2,-3);
\draw[thick] (-1,-3) -- (-1.5,-2);
\draw [thick] (0.5,-2) -- (0,-1);
\draw [thick] (-0.5,-2) -- (-1,-1);
\draw [thick] (-0.5,-2) -- (0,-1);
\draw [thick] (-1,-3) -- (-0.5,-2);
\draw [thick] (-0.5,-2) -- (1,-1);
\draw [thick] (0.5,-2) -- (0.25,-3);
\draw [thick] (0.5,-2) -- (0.75,-3);
\draw [thick] (-1,-3) -- (-1,-3.5);
\draw [thick] (1.5,-3) -- (1,-1);
\draw [thick] (1,-1) -- (2,-3);
\draw [thick] (-1.5,-2) -- (-1.5,-3);

\end{tikzpicture}
\caption{Example of a nonbinary phylogenetic network.}
\label{ExampleNonBinary}
\end{figure}

A \emph{nonbinary phylogenetic tree} is a nonbinary phylogenetic network without reticulations.

%
%
%

We will consider two different variants of tree-basedness of nonbinary networks, which we name ``tree-based'' and ``strictly-tree-based''.

\begin{definition} A nonbinary phylogenetic network $N$ is called \emph{tree-based} with base-tree $T$, when $N$ can be obtained from $T$ via the following steps:
\begin{enumerate}
\item[(a)] Add some vertices to arcs of $T$. These vertices, called \emph{attachment points}, have in- and outdegree $1$.
\item[(b)] Add arcs, called \emph{linking arcs}, between pairs of attachments points and from tree-vertices to attachment points, so that $N$ remains acyclic and so that attachment points have indegree or outdegree $1$.
\item[(c)] Suppress every attachment point that is not incident to a linking arc.
\end{enumerate}
\end{definition}

\begin{definition} A nonbinary phylogenetic network $N$ is called \emph{strictly tree-based} with base-tree $T$, when $N$ can be obtained from $T$ via the following steps:
\begin{enumerate}
\item[(1)] Add some vertices to arcs in $T$. These vertices, called attachment points, have in- and outdegree $1$.
\item[(2)] Add arcs, called \emph{linking arcs}, between pairs of attachments points, so that $N$ remains acyclic and so that exactly one linking arc is attached to each attachment point.
\end{enumerate}
\end{definition}

A nonbinary phylogenetic network is \emph{tree-based} if it is tree-based with base-tree~$T$ for some nonbinary phylogenetic tree~$T$. Similarly, a nonbinary phylogenetic network is \emph{strictly tree-based} if it is strictly tree-based with base-tree~$T$ for some nonbinary rooted phylogenetic tree~$T$.

The distinction between tree-based and strictly tree-based is illustrated by two examples in Figure~\ref{fig:strictly}. An example of a strictly-tree-based nonbinary network can be found in Figure~\ref{fig:strictlytreebased}(a).

\begin{figure}[h]
    \centering
    \subfigure[This network is not strictly tree-based because the linking arc is attached to a vertex of the base-tree, rather than to an attachment point, and this cannot be avoided.]{\begin{tikzpicture}
\draw [thick] (0,0)  -- (-1,-1) ; 
\draw [thick] (0,0)  -- (1,-1) ; 
\draw [thick] (1,-1)  -- (1.2,-2.5) ; 
\draw [thick] (1,-1)  -- (1.7,-2.5) ; 
\draw [thick, dashed, gray] (1,-1)  -- (0,-2) ; 
\draw [thick] (-1,-1)  -- (-1.2,-2.5) ; 
\draw [thick] (-1,-1)  -- (-1.7,-2.5) ; 
\draw [thick] (-1,-1)  -- (0,-2) ; 
\draw [thick] (0,-2) -- (0,-2.5);

\fill (0,0) circle (2pt); 
\fill (-1,-1) circle (2pt); 
\fill (1,-1) circle (2pt); 
\fill (-1.7,-2.5) circle (2pt);
\fill (-1.2,-2.5) circle (2pt); 
\fill (0,-2.5) circle (2pt); 
\fill (1.2,-2.5) circle (2pt); 
\fill (1.7,-2.5) circle (2pt); 

\fill[color=deeppink] (0,-2) circle (3.5pt); 
\fill (0,-2) circle (2pt);

\draw (-1.7,-2.75) node {$a$};      
\draw (-1.2,-2.75) node {$b$};      
\draw (0,-2.75) node {$c$};      
\draw (1.7,-2.75) node {$e$};      
\draw (1.2,-2.75) node {$d$};      
\end{tikzpicture}}
\hspace{2cm}
    \subfigure[This network is not strictly tree-based because two linking arcs are attached to the same attachment point, and this cannot be avoided.]{\begin{tikzpicture}

\draw [thick] (0,0)  -- (-1,-1) ; 
\draw [thick] (0,0)  -- (1,-1) ; 
\draw [thick] (0,0)  -- (0,-1) ;
\draw [thick, dashed, gray] (0,-1)  -- (-0.5,-2) ; 
\draw [thick] (0,-1)  -- (0.5,-2) ; 
\draw [thick, gray, dashed] (1,-1)  -- (-0.5,-2) ; 
\draw [thick, gray, dashed] (1,-1)  -- (0.5,-2) ; 
\draw [thick] (1,-1)  -- (1.7,-2.5) ; 
\draw [thick] (-1,-1)  -- (-1.7,-2.5) ; 
\draw [thick] (-1,-1)  -- (-0.5,-2) ; 
\draw [thick] (-0.5,-2) -- (-0.5,-2.5);
\draw [thick] (0.5,-2) -- (0.5,-2.5);

\fill (0,0) circle (2pt); 
\fill (-1,-1) circle (2pt); 
\fill (1,-1) circle (2pt);
\filldraw ([xshift=-2pt,yshift=-2pt]0,-1) rectangle ++(4pt,4pt);
\fill (-1.7,-2.5) circle (2pt);
\fill (-0.5,-2.5) circle (2pt); 
\fill (0.5,-2.5) circle (2pt);
\fill (1.7,-2.5) circle (2pt); 
 
\fill[color=deeppink] (-0.5,-2) circle (3.5pt); 
\fill (-0.5,-2) circle (2pt);
\fill[color=deeppink] (0.5,-2) circle (3.5pt); 
\fill (0.5,-2) circle (2pt);

\draw (-1.7,-2.75) node {$a$};      
\draw (-0.5,-2.75) node {$b$};      
\draw (0.5,-2.75) node {$c$};      
\draw (1.7,-2.75) node {$d$};      
\end{tikzpicture}}
\caption{Two nonbinary networks that are both tree-based but not strictly tree-based. The black solid lines indicate possible base-trees, while the dashed, grey lines are the linking arcs.\label{fig:strictly}}
\end{figure}
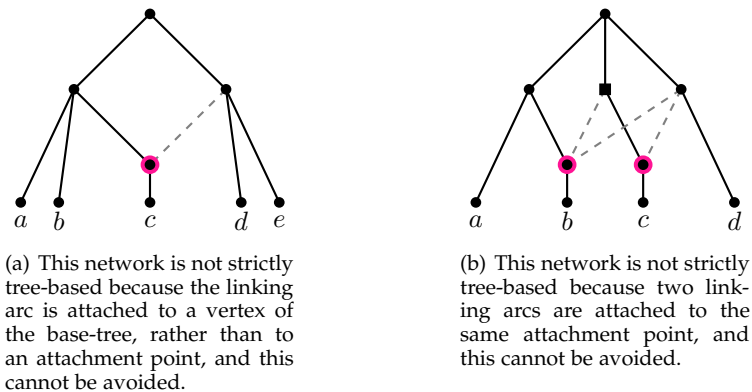

The following observations are easily verified.

\begin{observation}
Let~$N$ be a nonbinary phylogenetic network. If~$N$ is strictly tree-based, then~$N$ is tree-based.
\end{observation}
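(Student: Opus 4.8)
The statement to prove is the final observation: if a nonbinary phylogenetic network~$N$ is strictly tree-based, then~$N$ is tree-based.

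\medskip

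The plan is to show that the sequence of operations allowed in the definition of \emph{strictly tree-based} is a special case of the sequence allowed in the definition of \emph{tree-based}, so that any network produced by the former is also produced by the latter. First I would take a strictly-tree-based network~$N$ with base-tree~$T$, together with a witnessing construction: step~(1) adds attachment points (in- and outdegree~$1$) to arcs of~$T$, and step~(2) adds linking arcs between pairs of attachment points so that~$N$ stays acyclic and exactly one linking arc is attached to each attachment point. The goal is to realize this same~$N$ from the same~$T$ using steps~(a)--(c) of the tree-based definition.

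\medskip

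The key observation is that step~(a) is identical to step~(1): both add attachment points of in- and outdegree~$1$ to arcs of~$T$. So I use step~(a) to add exactly the same attachment points. For step~(b), I add exactly the same linking arcs as in step~(2). I must check that this is legal under~(b): step~(b) permits linking arcs between pairs of attachment points (step~(2) only uses this kind), requires that~$N$ remain acyclic (guaranteed, since the resulting graph is literally~$N$, which is acyclic), and requires that every attachment point end up with indegree~$1$ or outdegree~$1$. For the last condition, note that in step~(2) each attachment point receives exactly one linking arc; together with its one incoming and one outgoing tree-arc this gives it either indegree~$2$ and outdegree~$1$ (if the linking arc points in) or indegree~$1$ and outdegree~$2$ (if it points out) — in both cases it has indegree~$1$ or outdegree~$1$, so condition~(b) is satisfied. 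Finally, step~(c) suppresses every attachment point not incident to a linking arc; but in the strictly-tree-based construction every attachment point is incident to (exactly one) linking arc, so step~(c) does nothing, and the output is precisely~$N$.

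\medskip

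There is essentially no obstacle here: the argument is a direct inclusion of one construction inside the other, and the only thing that needs a moment's care is verifying the degree condition in step~(b) — namely that ``exactly one linking arc per attachment point'' (from step~(2)) implies ``indegree or outdegree~$1$'' (required by step~(b)). Since the network before suppression is already the valid network~$N$, acyclicity and all vertex-degree requirements are automatic. Hence~$N$ is tree-based with the same base-tree~$T$.
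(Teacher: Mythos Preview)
Your proof is correct and is exactly the kind of direct verification the paper has in mind: the paper itself gives no proof of this observation, stating only that it is ``easily verified.'' Your argument that steps~(1)--(2) of the strictly-tree-based definition are a special case of steps~(a)--(c) of the tree-based definition (with step~(c) vacuous) is the natural way to make this precise, and your check of the degree condition in step~(b) is the only point that requires a moment's attention.
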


If~$N$ and~$N'$ are nonbinary phylogenetic networks, then we say that~$N'$ is a \emph{refinement} of~$N$ if~$N$ can be obtained from~$N'$ by contracting some of its edges.

\begin{observation}
Let~$N$ be a nonbinary phylogenetic network. Then~$N$ is tree-based if and only if there exists a binary refinement of~$N$ that is tree-based.
\end{observation}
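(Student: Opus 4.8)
The plan is to prove both implications directly from the definitions of tree-based nonbinary networks, the crucial elementary fact being that the arcs contracted when passing from a binary tree-based network to a network it refines can never be linking arcs.

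For the ``if'' direction, suppose $N'$ is a binary refinement of $N$ that is tree-based with binary base-tree $T'$, and let $F$ be the set of arcs of $N'$ whose contraction yields $N$. First I would note that $F$ contains no linking arc of $N'$: in a binary tree-based network every linking arc $(p,q)$ has tail $p$ of indegree~$1$, outdegree~$2$ and head $q$ of indegree~$2$, outdegree~$1$, so contracting it would produce a vertex of indegree~$2$ and outdegree~$2$, which no nonbinary phylogenetic network may contain; since $N$ is a valid network, $(p,q)\notin F$. Hence $F$ is contained in the subdivided base-tree $\sigma'$, i.e.\ the rooted spanning tree of $N'$ obtained from $T'$ by reinstating its attachment points. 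Contracting $F$ inside $\sigma'$ and suppressing the resulting degree-$2$ vertices produces a tree $T$, and, using that $N$ is a valid phylogenetic network (so no arc of $F$ has a leaf as head and no contraction creates a forbidden vertex type), one checks routinely that $T$ is a nonbinary phylogenetic tree. Re-running steps (a)--(c) of the definition of tree-based on $T$, with the images of the attachment points of $N'$ as attachment points and the images of the linking arcs of $N'$ as linking arcs---each of which now leaves an attachment point or a tree-vertex of $T$---recovers exactly $N$, so $N$ is tree-based.

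For the ``only if'' direction, suppose $N$ is tree-based with nonbinary base-tree $T$, and pick any binary refinement $T'$ of $T$. I would construct a binary refinement $N'$ of $N$ that is tree-based with base-tree $T'$ by imitating on $T'$ the construction of $N$ from $T$. A linking arc of $N$ between two attachment points is copied verbatim, since its attachment points lie on arcs of $T$ untouched by the refinement. A linking arc of $N$ that leaves a nonbinary tree-vertex $t$ of $T$ has no direct binary counterpart, so instead I would subdivide one of the internal arcs of the binary caterpillar replacing $t$ in $T'$, creating a fresh attachment point, and draw the linking arc from there; contracting that caterpillar later reabsorbs the attachment point into $t$ and turns the arc into the original linking arc out of $t$. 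A reticulation of $N$ of indegree $d\geq 3$ is replaced in $N'$ by a caterpillar of $d-1$ binary reticulations, distributing the one base-tree arc and the $d-1$ incoming linking arcs among them so that $N'$ stays binary. Letting $F$ be the set of all arcs internal to these caterpillars, $N'$ is binary, contracting $F$ gives back $N$, and by construction $N'$ arises from $T'$ by the binary tree-based procedure; finally $N'$ is acyclic because any topological order of $N$ lifts to one of $N'$ (the endpoints of a linking arc always lie in distinct caterpillars). Hence $N'$ is a binary tree-based refinement of $N$.

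I expect the main obstacle to be the bookkeeping in the ``only if'' direction: linking arcs emanating from nonbinary tree-vertices and reticulations of large indegree are phenomena with no binary analogue, and one must realise both inside the binary caterpillars of the refinement while simultaneously keeping $N'$ binary, acyclic, and a refinement of exactly $N$. In the ``if'' direction the only delicate point is verifying that the contracted object $T$ is a genuine phylogenetic tree and that every surviving linking arc leaves a vertex of an allowed type; both follow from the standing assumption that $N$ itself is a valid phylogenetic network.
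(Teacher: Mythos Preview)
The paper does not actually prove this observation: it is stated immediately after the sentence ``The following observations are easily verified'' with no argument whatsoever. So there is nothing to compare your proposal against; you have supplied far more than the paper does.

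Your overall strategy is sound, but the key step in the ``if'' direction is under-justified. You argue that a linking arc $(p,q)$ cannot lie in $F$ because contracting it alone yields a vertex of indegree~$2$ and outdegree~$2$. That only rules out the case $|F|=1$ near that arc; it does not exclude $(p,q)\in F$ when other arcs incident to $p$ or $q$ are contracted simultaneously. The conclusion is nevertheless correct, and a clean way to see it is a degree count: if $w\in V(N)$ has preimage $S_w\subseteq V(N')$ and the arcs of $F$ inside $S_w$ form a tree on $|S_w|$ vertices, then summing in-degrees over $S_w$ gives
\[
\text{indeg}_N(w) \;=\; \Bigl(\sum_{v\in S_w}\text{indeg}_{N'}(v)\Bigr) - (|S_w|-1).
\]
If $w$ is a tree-vertex of $N$ then $\text{indeg}_N(w)=1$, which forces every $v\in S_w$ to have indegree~$1$ in $N'$, i.e.\ $S_w$ consists entirely of tree-vertices; the symmetric out-degree count shows that if $w$ is a reticulation then $S_w$ consists entirely of reticulations. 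Hence no arc of $F$ can run from a tree-vertex to a reticulation, and in particular no linking arc lies in $F$. With this in place your ``if'' argument goes through. The ``only if'' direction is, as you say, mainly bookkeeping, and your outline is adequate for what the paper treats as routine.
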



Any definitions from Section \ref{preliminaries} that have not been mentioned in this section, are defined similarly as in the binary case.

We first discuss nonbinary tree-based networks in Section~\ref{sec:nonbintreebased} and then nonbinary strictly-tree-based networks in Section~\ref{sec:strictly}.

\subsection{Nonbinary tree-based phylogenetic networks}\label{sec:nonbintreebased}
We will examine if some of the theorems from Section~\ref{subsec:binresults} hold for nonbinary phylogenetic networks as well. First, we look at te stability of networks.

\begin{proposition}
A nonbinary stable network $N$ has the following property:
\leo{the} child and the parents of every reticulation are tree-vertices.
\end{proposition}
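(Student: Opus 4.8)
The plan is to follow the proof of the binary version, Proposition~\ref{Propositie4.1}, and to reduce the proposition to a single key claim: \emph{in a stable network no reticulation is the child of a reticulation.} The essential content follows from this at once. If the child~$c$ of a reticulation~$r$ were a reticulation, the claim applied to~$r$ would fail, so~$c$ is a tree-vertex (or, degenerately, a leaf). If a parent~$p$ of a reticulation~$r$ were a reticulation, then~$r$ would be a reticulation that is the child of the reticulation~$p$, again contradicting the claim; hence every parent of~$r$ is a tree-vertex (modulo the degenerate possibility that a parent is the root, which is handled as in the binary setting). So it suffices to prove the key claim.

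To prove the claim I would use two simple facts. \emph{Visibility forwarding:} if~$r$ is a reticulation that is stable and~$\ell$ is a leaf witnessing this, so that every path from the root~$\rho$ to~$\ell$ passes through~$r$, then, since~$r$ has outdegree~$1$ with unique child~$c'$, every such path leaves~$r$ along the arc~$(r,c')$; hence every $\rho$--$\ell$ path passes through~$c'$ as well, and in particular~$\ell$ is reachable from~$c'$. \emph{Path gluing:} in a directed acyclic graph, a walk from~$u$ to~$v$ contains a simple path from~$u$ to~$v$ using only vertices of the walk, so a $\rho$--$\ell$ walk avoiding a vertex~$x$ yields a $\rho$--$\ell$ path avoiding~$x$.

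Now suppose, for contradiction, that a reticulation~$r$ has a child~$c$ that is also a reticulation. Since~$N$ is stable, $r$ is stable; let~$\ell$ be a witnessing leaf. By visibility forwarding, $\ell$ is reachable from~$c$ and every $\rho$--$\ell$ path runs through~$c$. As~$c$ is a reticulation it has a parent~$p\ne r$; choose a path~$Q$ from~$\rho$ to~$p$ and a path~$P$ from~$c$ to~$\ell$. Neither~$Q$ nor~$P$ can contain~$r$: if~$r$ occurred on~$Q$, then since~$(r,c)$ is the only arc out of~$r$, $Q$ would contain a subpath $r,c,\dots,p$, and together with the arc~$(p,c)$ this is a directed cycle; if~$r$ occurred on~$P$, then with the arc~$(r,c)$ we again get a directed cycle through~$c$. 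Concatenating~$Q$, the arc~$(p,c)$, and~$P$ gives a $\rho$--$\ell$ walk avoiding~$r$, and by path gluing a $\rho$--$\ell$ path avoiding~$r$, contradicting that~$\ell$ witnesses the stability of~$r$. This proves the claim.

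The main obstacle is choosing which stability to exploit: one must use the stability of the \emph{parent} reticulation~$r$, whose single outgoing arc forces its witness leaf to be inherited by~$c$; the stability of~$c$ alone is useless, because~$c$, being a reticulation, can be reached from the root along several arcs, so its own witness can be routed around~$r$. Once this point is seen, the rest is the routine acyclicity bookkeeping carried out above.
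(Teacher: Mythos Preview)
Your proof is correct and follows exactly the approach the paper indicates: the paper's own proof consists of the single sentence ``Can be shown similar to the proof of Proposition~\ref{4.1}~\cite{Artikel3},'' and you have carried out precisely that, reconstructing the binary argument in the nonbinary setting. Your explicit handling of the degenerate cases (child a leaf, parent the root) is a reasonable acknowledgement of the slight imprecision in the statement itself.
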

\begin{proof}
Can be shown similar to the proof of Proposition~\ref{4.1}~\cite{Artikel3}.
\end{proof}

Hence, this property holds also for nonbinary networks. Next, we will consider the following two questions.
\begin{enumerate}
\item[(i)] Is every nonbinary stable phylogenetic network tree-based? (Corollary \ref{Corollary} in the binary case)
\item[(ii)] For a nonbinary phylogenetic network $N$, is $N$ tree-based if all parents of all reticulations of $N$ are tree-vertices? (Proposition \ref{Propositie3i)} in the binary case)
\end{enumerate}

There is one single example that answers both of the questions. The example, displayed in Figure~\ref{Counterexample}, shows that the answer to both questions is ``no''. These properties only hold in the binary case.

\begin{figure}[h]\centering
\begin{tikzpicture}
\fill (0,0) circle (2pt);
\filldraw ([xshift=-2pt,yshift=-2pt]-1,-1) rectangle ++(4pt,4pt);
\filldraw ([xshift=-2pt,yshift=-2pt]0,-1) rectangle ++(4pt,4pt);
\filldraw ([xshift=-2pt,yshift=-2pt]1,-1) rectangle ++(4pt,4pt);
\fill [deeppink] (-0.5,-2) circle (3.5pt);
\fill [deeppink] (0.5,-2) circle (3.5pt);
\fill (-0.5,-2) circle (2pt);
\fill (0.5,-2) circle (2pt);
\fill (0.5,-2.8) circle (2pt);
\fill (-0.5,-2.8) circle (2pt);
\draw [thick] (0,0) -- (-1,-1);
\draw [thick] (0,0) -- (0,-1);
\draw [thick] (0,0) -- (1,-1);
\draw [thick] (1,-1) -- (0.5,-2);
\draw [thick] (1,-1) -- (-0.5,-2);
\draw [thick] (0,-1) -- (0.5,-2);
\draw [thick] (0,-1) -- (-0.5,-2);
\draw [thick] (-1,-1) -- (0.5,-2);
\draw [thick] (-1,-1) -- (-0.5,-2);
\draw [thick] (-0.5,-2) -- (-0.5,-2.8);
\draw [thick] (0.5,-2) -- (0.5, -2.8);
\end{tikzpicture}
\caption{A nonbinary network that shows that the answer to Questions~(i) and~(ii) is negative.}
\label{Counterexample}
\end{figure}
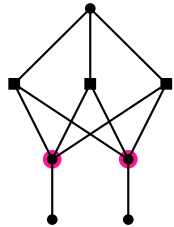

We will now show that Theorem~\ref{Lstelling} even holds in the nonbinary case.

\begin{theorem}
\label{LstellingNB}
Given a nonbinary phylogenetic network $N$ and the bipartite graph~$B$ that is associated to $N$. Network~$N$ is tree-based if and only if there exists a matching $M$ in~$B$ with~$|M|=|U|$.
\end{theorem}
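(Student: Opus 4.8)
The plan is to carry over the proof of Theorem~\ref{Lstelling} essentially verbatim, the only genuinely new ingredient being the spanning-tree characterization in the nonbinary setting. So I would first record the analogue of the binary fact: a nonbinary phylogenetic network~$N$ is tree-based if and only if~$N$ has a rooted spanning tree~$\tau$ without dummy leaves. This is proved as in the binary case: if~$N$ is tree-based with base-tree~$T$, then the arcs of~$T$, subdivided by the attachment points surviving step~(c), form a rooted spanning tree whose leaves are exactly the leaves of~$T$, hence of~$N$; conversely, given such a~$\tau$, the arcs of~$N$ outside~$\tau$ can be reinstated as linking arcs — and the clause ``from tree-vertices to attachment points'' in the nonbinary definition of tree-based is precisely what is needed here, since in the nonbinary setting a reticulation may now have more than two parents.

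For the ``matching $\Rightarrow$ tree-based'' direction, assume~$M$ is a matching in~$B$ with $|M|=|U|$, so every omnian is matched to one of its reticulation children. Build~$\tau$ by giving each non-root vertex exactly one incoming arc: for a tree-vertex or a leaf take its unique incoming arc; for each matching edge~$\{v,w\}\in M$ (with~$v$ the omnian and~$w$ the reticulation) take the arc~$(v,w)$; for every remaining reticulation take an arbitrary incoming arc. Since reticulations are the only vertices of indegree greater than one and~$M$ is a matching (so the arcs from matching edges enter distinct reticulations), every non-root vertex receives exactly one incoming arc, whence~$\tau$ is a rooted spanning tree. It has no dummy leaves: every omnian~$v$ has, through its matching edge, an outgoing arc in~$\tau$; and any non-leaf vertex that is \emph{not} an omnian has a child that is a tree-vertex or a leaf, whose unique incoming arc lies in~$\tau$. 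Hence every non-leaf vertex of~$N$ has an outgoing arc in~$\tau$, and by the characterization~$N$ is tree-based.

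For the converse, suppose~$N$ is tree-based and let~$\tau$ be a rooted spanning tree of~$N$ without dummy leaves. For each omnian~$v$, choose an outgoing arc of~$v$ in~$\tau$; it exists because~$v$ is not a dummy leaf, and it has the form~$(v,w)$ with~$w$ a child of~$v$, hence a reticulation, so~$\{v,w\}$ is an edge of~$B$. Let~$M$ be the set of all edges obtained this way. If two distinct omnians chose the same reticulation~$w$, then~$w$ would have two incoming arcs in~$\tau$, contradicting that~$\tau$ is a tree; so~$M$ is a matching, it covers every vertex of~$U$, and $|M|=|U|$.

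The main obstacle is entirely in the first paragraph: verifying the spanning-tree characterization for the nonbinary notion of tree-based, where one must check that the reinstated linking arcs meet the requirements of step~(b) — this is exactly where the two new phenomena (nonbinary base-trees, hence omnians with arbitrarily many children, and linking arcs allowed to originate at tree-vertices, hence reticulations with arbitrarily many parents) must be reconciled, and a little care is needed about the role of the root. Once that is established, both directions of the matching argument are word-for-word the binary ones, since they refer only to omnians and their reticulation children; and, combined with Hall's Theorem~(Theorem~\ref{Hall}), this immediately yields the nonbinary analogue of Corollary~\ref{Aantalkind}.
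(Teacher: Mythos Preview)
Your proposal is correct and follows essentially the same approach as the paper: construct a rooted spanning tree from the matching by selecting one incoming arc per non-root vertex, and conversely extract a matching from a base-tree (or spanning tree) by picking one outgoing arc per omnian. The paper's proof is in fact terser than yours---it simply says both directions go ``similarly as in the binary case'', noting only that an omnian may now have more than two outgoing arcs in the base-tree and one must decolourize all but one---whereas you rightly flag that the spanning-tree characterization itself needs to be re-established in the nonbinary setting and that the clause allowing linking arcs from tree-vertices to attachment points is exactly what makes this go through.
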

\begin{proof} If there exists a matching $M$ in $B$ with~$|M|=|U|$, then it can be proved similarly as in the binary case (Theorem~\ref{Lstelling}), that $N$ is tree-based.

Now assume that $N$ is tree-based. Then it can be proved partially similar as in the binary case, that there exists a matching in~$B$ that covers all omnians. 
The only difference is that when an omnian has more than one outgoing arc contained in a base-tree~$T$, that only one edge should be coloured and the rest should not be coloured in $B$. The rest of the proof is the same as in the proof of Theorem~\ref{Lstelling}.
\end{proof}

This theorem directly leads to a polynomial-time algorithm for deciding if a network is tree-based, using one of the algorithms for maximum cardinality bipartite matching (see e.g.~\cite{schrijver}).

\begin{corollary}
There exists a polynomial-time algorithm that decides whether a given nonbinary phylogenetic network is tree-based or not.
\end{corollary}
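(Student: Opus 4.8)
The plan is to reduce the decision problem directly to maximum-cardinality bipartite matching, invoking Theorem~\ref{LstellingNB}. First I would, given the input network $N=(V,A)$, determine its set of omnians and its set of reticulations: a vertex is a reticulation precisely if it has indegree at least two and outdegree one, and a non-leaf vertex is an omnian precisely if every one of its children is a reticulation. Both tests amount to scanning the incidence lists of $N$, so this phase takes time polynomial (indeed linear) in $|V|+|A|$.

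Next I would build the bipartite graph $B=(U\cup R,E)$ associated to $N$, exactly as in the definition preceding Theorem~\ref{Lstelling}: create a vertex in $U$ for each omnian, a vertex in $R$ for each reticulation, and, for each arc $(v,w)\in A$ whose tail $v$ is an omnian and whose head $w$ is a reticulation, an edge joining the corresponding vertices. Since $|U|+|R|\le 2|V|$ and $|E|\le|A|$, the graph $B$ has size polynomial in the size of $N$ and is constructed in polynomial time.

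Then I would run any polynomial-time algorithm for maximum-cardinality matching in bipartite graphs (for example the one described in~\cite{schrijver}) on $B$, obtaining a maximum matching $M$, and output ``tree-based'' if $|M|=|U|$ and ``not tree-based'' otherwise. Correctness is immediate from Theorem~\ref{LstellingNB}: $N$ is tree-based if and only if $B$ admits a matching of size $|U|$, which holds if and only if a maximum matching of $B$ has size $|U|$. As each of the three phases---identifying omnians and reticulations, building $B$, and computing a maximum matching---runs in polynomial time, the whole procedure does.

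There is essentially no genuine obstacle remaining, since the structural content has already been carried out in Theorem~\ref{LstellingNB}; the only points meriting a moment's care are verifying that membership in the class of omnians can be tested locally (so that $B$ is computable efficiently) and recalling that maximum bipartite matching is solvable in polynomial time, both of which are standard.
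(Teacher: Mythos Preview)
Your proposal is correct and follows exactly the approach the paper takes: the paper simply observes that Theorem~\ref{LstellingNB} reduces the question to the existence of a matching covering~$U$ in the associated bipartite graph, and then invokes a standard polynomial-time maximum bipartite matching algorithm (citing~\cite{schrijver}). Your write-up is just a more explicit version of this one-line argument.
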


Consider Hall's Theorem (Theorem \ref{Hall}) and Theorem \ref{LstellingNB}. Combining those two theorems gives a characterization for nonbinary tree-based phylogenetic networks, similar to Corollary~\ref{Aantalkind} in the binary case.

\begin{corollary}\label{AantalkindNB}
Let $N$ be a nonbinary phylogenetic network and $U$ the set of all omnians of~$N$. 
Then~$N$ is tree-based if and only if  for all $S \subseteq U$ the number of different children of \leo{the vertices in}~$S$ is greater than or equal to the number of omnians in~$S$.
\end{corollary}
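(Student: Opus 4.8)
The plan is to mimic the proof of the binary Corollary~\ref{Aantalkind}: combine Theorem~\ref{LstellingNB} with Hall's Theorem (Theorem~\ref{Hall}), both applied to the bipartite graph $B=(U\cup R,E)$ associated to $N$. The only thing that requires checking is that the quantities in Hall's condition translate correctly into the statement about omnians and their children.

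First I would record the translation. By construction of $B$, for a subset $S\subseteq U$ (a set of omnians of $N$), the neighbourhood $\Gamma(S)$ of $S$ in $B$ is exactly the set of reticulations $w$ of $N$ with $(v,w)\in A$ for some $v\in S$. Since every vertex in $S$ is an omnian, \emph{all} of its children in $N$ are reticulations, so this set is precisely the set of distinct children in $N$ of the vertices in $S$. Hence $|\Gamma(S)|$ equals the number of different children of the vertices in $S$, while $|S|$ is the number of omnians in $S$. (Note $S$ and the set of its children need not be disjoint, as a vertex can be both an omnian and a reticulation; this causes no difficulty, since $B$ keeps separate copies $v_o\in U$ and $v_r\in R$ of such a vertex.)

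With this translation in hand the statement is immediate: by Theorem~\ref{LstellingNB}, $N$ is tree-based if and only if $B$ has a matching of size $|U|$, i.e.\ a matching covering $U$; by Hall's Theorem such a matching exists if and only if $|\Gamma(S)|\ge|S|$ for every $S\subseteq U$, which by the translation above is exactly the condition that every set $S$ of omnians has at least $|S|$ different children. There is essentially no obstacle here: all the work has already been done in Theorem~\ref{LstellingNB}, and the present corollary merely repackages its matching condition through Hall's Theorem, just as in the binary case.
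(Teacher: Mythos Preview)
Your proposal is correct and follows exactly the paper's approach: the paper states the corollary immediately after noting that it is obtained by combining Theorem~\ref{LstellingNB} with Hall's Theorem (Theorem~\ref{Hall}), just as in the binary Corollary~\ref{Aantalkind}. Your added paragraph spelling out why $|\Gamma(S)|$ equals the number of distinct children of the omnians in $S$ is a helpful elaboration, but the underlying argument is identical to the paper's.
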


In Theorem~\ref{Estelling2} we showed that a binary network is tree-based if and only if the associated bipartite graph contains no maximal path which starts and ends in $U$. One might suspect that this also holds in the nonbinary case. We will look at a partial nonbinary phylogenetic network $N$, which is displayed in Figure~\ref{Nonbinaryexample}(a), 
and the bipartite graph $B$ that is associated to $N$, which is displayed in~\ref{Nonbinaryexample}(b). A matching is drawn in $B$, which is coloured blue and \leo{dash-dotted} in Figure \ref{Nonbinaryexample}(b). 
We see that in $B$ there is a maximal path starting and ending in $U$: starting in $b$ via $f - c - g$ ending in $d$.
Though in the binary case this would mean that $N$ is not tree-based, we see in Figure \ref{Nonbinaryexample}(b) that there exists a matching that covers $U$. With Theorem \ref{LstellingNB} it follows that $N$ is tree-based.

Therefore, for a nonbinary phylogenetic network~$N$ and the bipartite graph $B=(U \cup R,E)$ associated to $N$, if there is a maximal path starting and ending in $U$, then $N$ can still be tree-based.

Consequently, also Corollary~\ref{cor:zigzag} does not hold in the nonbinary case.

\begin{flushleft}
\begin{figure}[h]\centering
\subfigure[The \leo{dash-dotted} lines represent the matching in (b).]{\begin{tikzpicture}
\draw [thick] (1,1) -- (1,0) -- (1.5,-1);
\draw [thick, dashdotted, blue] (1.5,-1) -- (3,0);
\draw [thick, dashdotted, blue] (2,0) -- (2.5,-1);
\draw [thick] (2.5,-1) -- (3,0);
\draw [thick] (1.35,1) -- (2,0);
\draw [thick] (2.65,1) -- (2,0);
\draw [thick] (3,0) -- (3.5,-1);
\draw [thick, dashdotted, blue] (3.5,-1) -- (4,0);
\draw [thick] (3.35,1) -- (4,0);
\draw [thick] (4.65,1) -- (4,0);
\draw [thick] (2.5,-1) -- (2.5,-1.5);
\draw [thick]  (3.5,-1) -- (3.5,-1.5);
\draw [thick] (1.5,-1) -- (1.5,-1.5);
\draw [thick] (3,0) -- (3,1);
\draw [thick] (1,0) -- (0.5,-1);
\fill (1,0) circle (2pt);
\fill [deeppink] (1.5,-1) circle (3.5pt);
\fill (1.5,-1) circle (2pt);
\filldraw [color=deeppink] ([xshift=-3.5pt,yshift=-3.5pt]2,0) rectangle ++(7pt,7pt);
\filldraw ([xshift=-2pt,yshift=-2pt]2,0) rectangle ++(4pt,4pt);
\filldraw ([xshift=-2pt,yshift=-2pt]3,0) rectangle ++(4pt,4pt);
\fill [deeppink] (2.5,-1) circle (3.5pt);
\fill (2.5,-1) circle (2pt);
\filldraw [color=deeppink] ([xshift=-3.5pt,yshift=-3.5pt]4,0) rectangle ++(7pt,7pt);
\filldraw ([xshift=-2pt,yshift=-2pt]4,0) rectangle ++(4pt,4pt);
\fill [deeppink] (3.5,-1) circle (3.5pt);
\fill (3.5,-1) circle (2pt);
\draw (0.7,0) node {$a$};
\draw (1.7,0) node {$b$};
\draw (2.7,0) node {$c$};
\draw (4.3,0) node {$d$};
\draw (1.2,-1) node {$e$};
\draw (2.7,-1) node {$f$};
\draw (3.2,-1) node {$g$};
\end{tikzpicture}}
\hspace{2cm}
\subfigure[A matching is $B$ indicated in \leo{dash-dotted} lines.]{\begin{tikzpicture}
\draw [thick, blue, dashdotted] (0,0) -- (2.5,-1.8);
\draw [thick, blue, dashdotted](0,-1) -- (2.5,-1);
\draw [thick](0,-1) -- (2.5,-1.8);
\draw [thick](0,-1) -- (2.5,-2.6);
\draw [thick, blue, dashdotted](0,-2) -- (2.5,-2.6);
\filldraw [color=deeppink] ([xshift=-3.5pt,yshift=-3.5pt]0,0) rectangle ++(7pt,7pt);
\filldraw ([xshift=-2pt,yshift=-2pt]0,0) rectangle ++(4pt,4pt);
\filldraw ([xshift=-2pt,yshift=-2pt]0,-1) rectangle ++(4pt,4pt);
\filldraw [color=deeppink] ([xshift=-3.5pt,yshift=-3.5pt]0,-2) rectangle ++(7pt,7pt);
\filldraw ([xshift=-2pt,yshift=-2pt]0,-2) rectangle ++(4pt,4pt);
\filldraw [color=deeppink] ([xshift=-3.5pt,yshift=-3.5pt]2.5,0.6) rectangle ++(7pt,7pt);
\filldraw ([xshift=-2pt,yshift=-2pt]2.5,0.6) rectangle ++(4pt,4pt);
\filldraw [color=deeppink] ([xshift=-3.5pt,yshift=-3.5pt]2.5,-0.2) rectangle ++(7pt,7pt);
\filldraw ([xshift=-2pt,yshift=-2pt]2.5,-0.2) rectangle ++(4pt,4pt);
\fill[deeppink] (2.5,-1) circle (3.5pt);
\fill[deeppink] (2.5,-1.8) circle (3.5pt);
\fill[deeppink] (2.5,-2.6) circle (3.5pt);
\fill (2.5,-1) circle (2pt);
\fill (2.5,-1.8) circle (2pt);
\fill (2.5,-2.6) circle (2pt);
\draw (-0.3,0) node {$b$};
\draw (-0.3,-1) node {$c$};
\draw (-0.3,-2) node {$d$};
\draw (2.8,0.6) node {$b$};
\draw (2.8,-0.2) node {$d$};
\draw (2.8,-1) node {$e$};
\draw (2.8,-1.8) node {$f$};
\draw (2.8,-2.6) node {$g$};
\draw (0,1.1) node {$U$};
\draw (2.5,1.1) node {$R$};
\end{tikzpicture}}
\caption{A partial nonbinary phylogenetic network and the bipartite graph $B$ that is associated to $N$, showing that Theorem~\ref{Estelling2} does not hold in the nonbinary case.}
\label{Nonbinaryexample}
\end{figure}
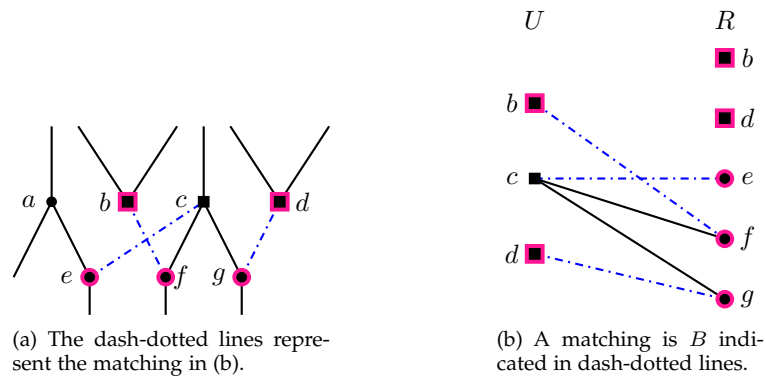
\end{flushleft}

%
%

\subsection{Nonbinary strictly-tree-based phylogenetic networks}\label{sec:strictly}

Here we show that the characterization of binary tree-based phylogenetic networks in Corollary~\ref{cor:zigzag} can be extended to a characterization for nonbinary strictly-tree-based phylogenetic networks. We call vertices with outdegree greater than two \emph{multifurcations}.

In this case, we use a modified bipartite graph. Let $N=(V,A)$ be a nonbinary phylogenetic network. The \emph{modified bipartite graph associated to} $N$ is the bipartite graph $B^*=(U \cup R,E)$, which is defined as follows. \leo{For each vertex~$v\in V$ of~$N$ that is an omnian with outdegree~2, we put a vertex~$v_o$ in~$U$. For each vertex~$w\in V$ of~$N$ that is a reticulation, we put a vertex~$w_r$ in~$R$. We put an edge~$\{v_o,w_r\}$ in~$E$ for each~$v_o\in U$ and~$w_r\in R$ with $(v,w) \in A$. Then, for each multifurcation~$u\in V$ of~$N$ that has~$k$ children that are reticulations $w_1,\ldots ,w_k$, with~$k\geq 1$, we add~$k$ vertices~$u_1,\ldots ,u_k$ to~$U$ and add edges \rev{$\{u_i,{w_i}_r\}$} for~$i=1,\ldots ,k$ to~$E$ \rev{(with~${w_i}_r$ the reticulation in~$R$ corresponding to reticulation~$w_i$ in~$V$.)}. As with the previous bipartite graphs, we will omit the subscripts of the vertex labels and refer to~$v_o,w_r,u_i$ simply as~$v,w$ and~$u$, respectively.} Examples are given in Figure~\ref{fig:strictlytreebased2} and~\ref{fig:strictlytreebased}. \leo{Note in particular that in~$B^*$ the set~$U$ contains not only omnians but also multifurcations that may have non-reticulate children.}

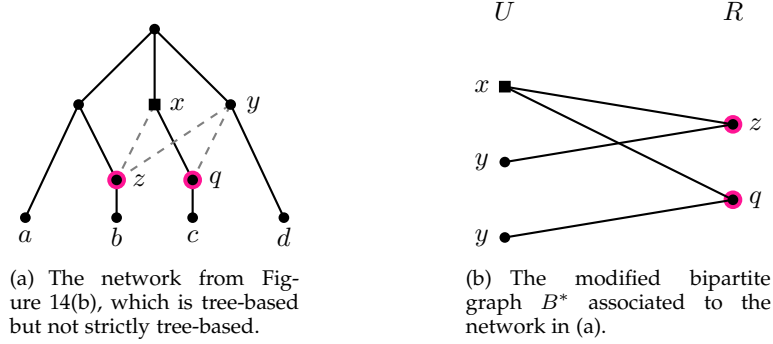
\begin{figure}[h]
    \centering
    \subfigure[The network from Figure~\ref{fig:strictly}(b), which is tree-based but not strictly tree-based.]{\begin{tikzpicture}
\draw [thick] (0,0)  -- (-1,-1) ; 
\draw [thick] (0,0)  -- (1,-1) ; 
\draw [thick] (0,0)  -- (0,-1) ;
\draw [thick, dashed, gray] (0,-1)  -- (-0.5,-2) ; 
\draw [thick] (0,-1)  -- (0.5,-2) ; 
\draw [thick, gray, dashed] (1,-1)  -- (-0.5,-2) ; 
\draw [thick, gray, dashed] (1,-1)  -- (0.5,-2) ; 
\draw [thick] (1,-1)  -- (1.7,-2.5) ; 
\draw [thick] (-1,-1)  -- (-1.7,-2.5) ; 
\draw [thick] (-1,-1)  -- (-0.5,-2) ; 
\draw [thick] (-0.5,-2) -- (-0.5,-2.5);
\draw [thick] (0.5,-2) -- (0.5,-2.5);

\fill (0,0) circle (2pt); 
\fill (-1,-1) circle (2pt); 
\fill (1,-1) circle (2pt);
\filldraw ([xshift=-2pt,yshift=-2pt]0,-1) rectangle ++(4pt,4pt);
\fill (0,-1) circle (2pt);
\fill (-1.7,-2.5) circle (2pt);
\fill (-0.5,-2.5) circle (2pt); 
\fill (0.5,-2.5) circle (2pt);
\fill (1.7,-2.5) circle (2pt); 
 
\fill[color=deeppink] (-0.5,-2) circle (3.5pt); 
\fill (-0.5,-2) circle (2pt);
\fill[color=deeppink] (0.5,-2) circle (3.5pt); 
\fill (0.5,-2) circle (2pt);

\draw (-1.7,-2.75) node {$a$};      
\draw (-0.5,-2.75) node {$b$};      
\draw (0.5,-2.75) node {$c$};      
\draw (1.7,-2.75) node {$d$};      
\draw (0.3,-1) node {$x$}; 
\draw (1.3,-1) node {$y$}; 
\draw (-0.2,-2) node {$z$};
\draw (0.8,-2) node {$q$};  
\end{tikzpicture}}
\hspace{2cm}
\subfigure[The modified bipartite graph~$B^*$ associated to the network in~(a).]{\begin{tikzpicture}
\filldraw ([xshift=-2pt,yshift=-2pt]0,-1) rectangle ++(4pt,4pt);
\fill (0,-2) circle (2pt);
\fill (0,-3) circle (2pt);
\fill[color=deeppink] (3,-1.5) circle (3.5pt); 
\fill[color=deeppink] (3,-2.5) circle (3.5pt); 
\fill (3,-1.5) circle (2pt);
\fill (3,-2.5) circle (2pt);

\draw (0,0) node {$U$};
\draw (3,0) node {$R$};
\draw (-0.3,-1) node {$x$};
\draw (-0.3,-2) node {$y$};
\draw (-0.3,-3) node {$y$};
\draw (3.3,-1.5) node {$z$};
\draw (3.3,-2.5) node {$q$};

\draw [thick] (0,-1) -- (3,-1.5);
\draw [thick] (0,-1) -- (3,-2.5);
\draw [thick] (0,-2) -- (3,-1.5);
\draw [thick] (0,-3) -- (3,-2.5);
\end{tikzpicture}}
\caption{\label{fig:strictlytreebased2} It is easy to see that the modified bipartite graph~$B^*$ associated to the nonbinary network in~(a) has no matching that covers~$U$. Hence, the network in~(a) is not strictly tree-based.}
\end{figure}

\begin{figure}[h]\centering
\subfigure[A strictly-tree-based nonbinary network.]{
\begin{tikzpicture}
\draw [thick] (0,0) -- (1,-1);
\draw [thick] (0,0) -- (-1,-1);
\draw [thick] (-1,-1) -- (-0.7,-1.5);
\draw [thick] (-1,-1) -- (-1.7,-2.5);
\draw [thick] (0,0) -- (0,-1);
\draw [thick] (0.7,-1.5) -- (1,-1);
\draw [thick] (1.7,-2.5) -- (1,-1);
\draw [thick, gray, dashed] (-0.7,-1.5) -- (-0.35,-2);
\draw [thick] (0,-1) -- (0.35,-2);
\draw [thick] (0,-1) -- (0,-2.5);
\draw [thick] (0,-1) -- (-0.35,-2);
\draw [thick, gray, dashed] (0.7,-1.5) -- (0.35,-2);
\draw [thick] (-0.35,-2.5) -- (-0.35,-2);
\draw [thick] (0.35,-2.5) -- (0.35,-2);

\fill [deeppink] (0.35,-2) circle (3.5pt); 
\fill [deeppink] (-0.35,-2) circle (3.5pt);
\fill (0,0) circle (2pt); 
\fill (1,-1) circle (2pt);
\fill (-1,-1) circle (2pt);
\fill (0.7,-1.5) circle (2pt);
\fill (1.7,-2.5) circle (2pt);
\fill (0.35,-2.5) circle (2pt);
\fill (1,-2.5) circle (2pt);
\fill (0,-2.5) circle (2pt);
\fill (0,-1) circle (2pt); 
\fill (-0.7,-1.5) circle (2pt);
\fill (-1.7,-2.5) circle (2pt);
\fill (-0.35,-2.5) circle (2pt);
\fill (-1,-2.5) circle (2pt);
\fill (0.35,-2) circle (2pt);
\fill (-0.35,-2) circle (2pt);
\draw [thick] (0.7,-1.5) -- (1,-2.5);
\draw [thick] (-0.7,-1.5) -- (-1,-2.5);
\draw (-1.7,-2.8) node {$a$};
\draw (-1,-2.8) node {$b$};
\draw (-0.35,-2.8) node {$c$};
\draw (0,-2.8) node {$d$};
\draw (0.35,-2.8) node {$e$};
\draw (1,-2.8) node {$f$};
\draw (1.7,-2.8) node {$g$};
\draw (0.3,-1) node {$x$};
\draw (-0.55,-2.2) node {$y$};
\draw (0.50,-2.2) node {$z$};
\end{tikzpicture}}\hspace{2cm}
\subfigure[The modified bipartite graph~$B^*$ associated to the network in~(a).]{\begin{tikzpicture}
\fill (0,-1) circle (2pt);
\fill (0,-2) circle (2pt);
\fill [deeppink] (3,-1) circle (3.5pt);
\fill [deeppink] (3,-2) circle (3.5pt);
\fill (3,-1) circle (2pt);
\fill (3,-2) circle (2pt);

\draw (0,0) node {$U$};
\draw (3,0) node {$R$};
\draw (-0.3,-1) node {$x$};
\draw (-0.3,-2) node {$x$};
\draw (3.3,-1) node {$y$};
\draw (3.3,-2) node {$z$};

\draw [thick] (0,-1) -- (3,-1);
\draw [thick] (0,-2) -- (3,-2);
\end{tikzpicture}}
\caption{\label{fig:strictlytreebased} It is easy to see that the modified bipartite graph~$B^*$ associated to the nonbinary network in~(a) has a matching that covers~$U$. Hence, the network in~(a) is strictly tree-based.}
\end{figure}
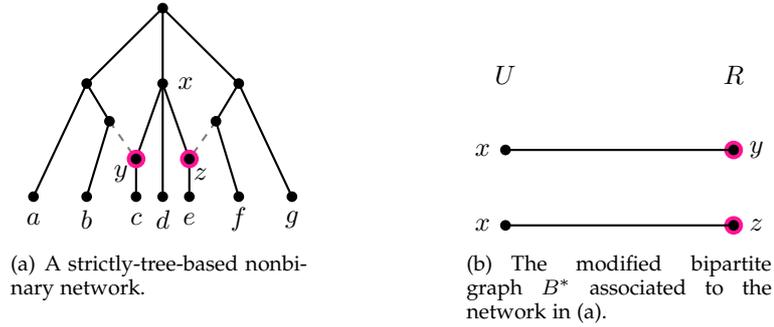

Using this modified bipartite graph, we can proceed in the same way as in the binary case.

\begin{theorem}
\label{thm:strictly}
Let $N$ be a nonbinary phylogenetic network and $B^*~=~(U~\cup~R,E)$ the modified bipartite graph associated to~$N$. Network~$N$ is strictly tree-based if and only if all reticulations of~$N$ have indegree~2 and there exists a matching~$M$~in~$B^*$ so that $\left| U\right| = \left| M \right|$.
\end{theorem}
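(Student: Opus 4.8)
The plan is to lift the spanning-tree/bipartite-matching correspondence behind Theorems~\ref{Lstelling} and~\ref{LstellingNB} to the strict setting. A strictly-tree-based network~$N$ is built from a tree~$T$ by subdividing arcs (creating \emph{attachment points} of in- and out-degree~$1$) and then attaching exactly one linking arc at each attachment point, every linking arc joining two attachment points. Two structural consequences will drive the argument: every reticulation of~$N$ is an attachment point that received an incoming linking arc, hence has indegree exactly~$2$ (so the first condition of the theorem is necessary), and every genuine tree-vertex of~$T$, in particular every multifurcation of~$N$, keeps all of its outgoing arcs, so those arcs are never linking arcs. The matching in~$B^*$ will certify that~$N$ has a rooted spanning tree~$\tau$ without dummy leaves whose complementary arcs can all serve as linking arcs of a strict construction.

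For the ``only if'' direction I would take a strict base-tree~$T$ of~$N$ and let~$\tau$ be the subgraph of all non-linking arcs of~$N$, so that~$\tau$ equals~$T$ with its attachment points and is therefore a rooted spanning tree of~$N$ without dummy leaves. Then I would build~$M$ by matching each omnian~$v$ of outdegree~$2$ to a child~$r$ of~$v$ in~$\tau$ (necessarily a reticulation, since~$v$ is an omnian and, not being a dummy leaf, has such a child), and matching each copy~$u_i$ of a multifurcation~$u$ to~$w_i$, using that~$(u,w_i)$ is not a linking arc and hence lies in~$\tau$. Since each reticulation has only one incoming arc in~$\tau$, no vertex of~$R$ is used twice, so~$M$ is a matching covering~$U$.

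For the ``if'' direction, assume all reticulations have indegree~$2$ and~$M$ covers~$U$. The first point is a forcing phenomenon: as~$u_i$ is adjacent in~$B^*$ only to~$w_i$, the matching must contain~$\{u_i,w_i\}$ for every multifurcation~$u$ and each of its reticulation children, so in particular no reticulation is a child of two distinct multifurcations. I would then build~$\tau$ from the out-arc of every reticulation, the in-arc of every tree-vertex and leaf, and, for each reticulation~$r$, one incoming arc chosen as its \emph{tree arc} --- the arc dictated by~$M$ when~$r$ is matched, and otherwise an incoming arc chosen so that the remaining incoming arc of~$r$ is permitted to be a linking arc. This gives one incoming arc per non-root vertex, so~$\tau$ is a rooted spanning tree, and it has no dummy leaves: reticulations keep their out-arc; a tree-vertex or the root with a non-reticulate child keeps that child's in-arc; an omnian of outdegree~$2$ lies in~$U$, hence is matched and keeps a child; and a multifurcation keeps all of its reticulation children by the forcing phenomenon. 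Finally I would take the linking arcs to be the arcs of~$N$ outside~$\tau$: each points into a reticulation (two in-arcs forces a reticulation) and each reticulation receives exactly one; the tail~$s$ of such an arc is not a reticulation (their out-arcs lie in~$\tau$) and not a multifurcation (otherwise forcing would have placed~$(s,r)$ in~$\tau$), hence --- granted the care above, which keeps~$s$ off the root --- $s$ is a tree-vertex of outdegree~$2$ with exactly one out-arc outside~$\tau$, i.e.\ a legitimate attachment point carrying exactly one linking arc. Declaring all in- and out-degree~$1$ vertices of~$\tau$ to be attachment points and suppressing them yields a phylogenetic tree~$T$ from which~$N$ arises by the two strict steps.

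I expect the main obstacle to be exactly the ``chosen so that\ldots'' step for unmatched reticulations. Unlike in the non-strict case, an arbitrary incoming arc of such an~$r$ may not be keepable: if its complement leaves the root or another reticulation it cannot be a linking arc, and --- since reticulations must retain their out-arc in~$\tau$ --- a careless choice can even violate the tree property of~$\tau$. Showing under the hypotheses that a globally consistent assignment of tree arcs always exists, so that every arc left outside~$\tau$ genuinely leaves an attachment point, is the delicate part; the rest is bookkeeping paralleling the proofs of Theorems~\ref{Lstelling}--\ref{LstellingNB}, and acyclicity is free because the construction only re-expresses the given acyclic~$N$.
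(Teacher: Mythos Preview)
Your plan follows the paper's proof almost verbatim. For the ``if'' direction the paper does \emph{not} take the care you propose: it simply picks, for each reticulation not yet covered, one incoming arc \emph{arbitrarily}, and then argues that every arc left outside~$A$ joins an outdegree-$2$ vertex to an indegree-$2$ vertex. The key observation is the forcing you already noted: each degree-$1$ vertex~$u_i\in U$ must be matched to its unique neighbour, so all arcs from multifurcations to reticulations lie in~$A$, and since a reticulation's sole out-arc is also in~$A$, any non-$A$ tail has outdegree exactly two.

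Two of the obstacles you anticipate are not obstacles. The tree property of~$\tau$ is automatic: the construction places exactly one incoming arc of every non-root vertex into~$A$, so~$\tau$ is a rooted spanning tree regardless of how the arbitrary choices are made. And a non-$A$ arc can never have a reticulation as tail, because the out-arc of every reticulation is put into~$A$ at the very first step; in particular, an ``unmatched'' reticulation~$r$ with a reticulation parent~$p$ is already covered by~$(p,r)$ before any choice is made, so no second incoming arc of~$r$ is added and no double in-arc arises.

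Your worry about the root, on the other hand, is substantive, and here you have noticed something the paper's argument does not address. An outdegree-$2$ root with exactly one reticulation child~$r$ is not an omnian (its other child cannot also be a reticulation, by an easy acyclicity argument) and hence is absent from~$U$; if~$r$ is covered through its other parent, the arc from the root to~$r$ lands outside~$A$, yet the root, having indegree~$0$, cannot serve as an attachment point in the strict sense. The paper's outdegree-$2$/indegree-$2$ check does not rule this out.
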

\begin{proof}
It is clear that~$N$ cannot be strictly tree-based if at least one of the reticulations has indegree greater than two, because no two linking arcs are allowed to attach to the same attachment point. Hence, we may assume from now on that all reticulations have indegree~2.

First assume that~$N$ is strictly tree-based with base-tree~$T$. Consider the set of edges~$E'$ of~$B^*$ that correspond to arcs of~$N$ that are contained in~$T$ (i.e., that are not linking arcs w.r.t. base-tree~$T$). For each multifurcation, each outgoing arc must be contained in~$T$. Moreover, as in the binary case, for each omnian, at least one outgoing arc must be contained in~$T$. Hence, the set~$E'$ of edges touches all vertices in~$U$. Each vertex~$r\in R$ has exactly one incident edge in~$E'$ because~$T$ is a tree. For each vertex~$u\in U$ that has two incident edges in~$E'$, remove one of them, arbitrarily. This gives a new set of edges~$M\subseteq E'$, which is a matching in~$B^*$ with~$|M|=|U|$.

Now assume that there exists such a matching~$M$. As in the binary case, we construct a set~$A$ of arcs by adding the outgoing arc of every reticulation, the incoming arc of every tree-vertex, every arc corresponding to an edge in~$M$ (if it has not yet been added) and for every reticulation that has not yet been covered, one of its incomming arcs, arbitrarily. Consider the tree $T$ consisting of all vertices of $N$ and the set of arcs $A$. As in the binary case, there are no dummy leaves in~$T$ because matching~$M$ covers all omnians of~$N$. Moreover, for all multifurcations of~$N$, all outgoing arcs are in~$T$ because matching~$M$ contains all corresponding edges of~$B^*$ (since they are incident to a degree-1 vertex). Hence, each arc of~$N$ that is not in~$A$ connects an outdegree-2 vertex with an indegree-2 vertex. These are the linking arcs, and their endpoints the attachment points. Hence, each linking arc is attached to two attachment points, and no two linking arcs are attached to the same attachment point. Hence,~$N$ is strictly tree-based.
\end{proof}

From the above theorem it follows directly that it can be decided in polynomial time whether a nonbinary network is strictly tree-based, using one of the algorithms for maximum cardinality bipartite matching (see e.g.~\cite{schrijver}).

\begin{theorem}\label{thm:strictly2}
Let $N$ be a nonbinary phylogenetic network and $B^*~=~(U~\cup~R,E)$ the modified bipartite graph associated to~$N$. Network~$N$ is strictly tree-based if and only if all reticulations in~$N$ have indegree~2 and~$B^*$ contains no maximal path which starts and ends in $U$.
\end{theorem}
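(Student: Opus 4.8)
The plan is to derive this from Theorem~\ref{thm:strictly} in exactly the way Theorem~\ref{Estelling2} was derived from Theorem~\ref{Lstelling} in the binary case; the only genuinely new point is to check that the modified bipartite graph $B^*$ still has maximum degree~2. First I would dispose of the degree condition on reticulations: as already observed in the proof of Theorem~\ref{thm:strictly}, if some reticulation has indegree greater than~2 then $N$ is not strictly tree-based, which is also why this appears as a hypothesis of the theorem. So assume from now on that every reticulation of $N$ has indegree exactly~2.

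Next I would verify that every vertex of $B^*=(U\cup R,E)$ has degree at most~2. A vertex of $U$ arising from an omnian of outdegree~2 is incident to exactly two edges, one for each of its two children (both reticulations, since it is an omnian); a vertex $u_i\in U$ arising from a multifurcation is incident to exactly one edge, by construction; and a vertex $r\in R$, corresponding to a reticulation of indegree~2, is incident to at most two edges, one for each parent that is an omnian of outdegree~2 or a multifurcation. Hence $B^*$ is a disjoint union of paths and cycles, and each connected component $B'=(U'\cup R',E')$ is one of the four types listed in the proof of Theorem~\ref{Estelling2}: (i) a maximal path with both ends in $R$; (ii) a maximal path with one end in $U$ and one in $R$; (iii) a maximal path with both ends in $U$; or (iv) a cycle. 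I would then repeat, essentially verbatim, the Hall-type argument from that proof: in cases (i), (ii) and (iv) one shows $|S|\le|\Gamma(S)|$ for every $S\subseteq U'$ (in (i) and (iv) because all vertices of $U'$ have degree~2, so $2|S|$ counts edges meeting $S$, which is at most the number of edges meeting $\Gamma(S)$, which is at most $2|\Gamma(S)|$; in (ii) by decomposing the subgraph induced by $S\cup\Gamma(S)$ into paths, each of which has at least as many $R$-vertices as $U$-vertices), so by Hall's Theorem (Theorem~\ref{Hall}) there is a matching of $B'$ covering $U'$; in case (iii), taking $S=U'$ gives $|S|=|\Gamma(S)|+1$, so $B'$ has no matching covering $U'$. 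Assembling the components, $B^*$ has a matching covering $U$ if and only if it contains no maximal path that starts and ends in $U$, and combining this equivalence with Theorem~\ref{thm:strictly} yields the claim.

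I do not anticipate a real obstacle; the one place requiring care is precisely the degree bound on $B^*$, in particular checking that the vertices newly introduced for multifurcations — each incident to a single edge — together with the indegree-2 hypothesis on the $R$-side keep every vertex of $U$ and of $R$ of degree at most~2. Everything downstream of that is a transcription of the proof of Theorem~\ref{Estelling2}, now invoking Theorem~\ref{thm:strictly} in place of Theorem~\ref{Lstelling}.
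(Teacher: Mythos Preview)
Your proposal is correct and follows essentially the same approach as the paper: the paper's proof simply notes that reticulations of indegree greater than~2 rule out strict tree-basedness, then checks that $B^*$ has maximum degree~2 (using that multifurcations are split into degree-1 vertices and that the indegree-2 hypothesis bounds the $R$-side), and defers to the argument of Theorem~\ref{Estelling2}. You have spelled out that deferred argument explicitly, but the structure is identical.
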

\begin{proof}
It is again clear that~$N$ cannot be strictly tree-based if at least one of the reticulations has indegree greater than two. If all reticulations have indegree 2, then all vertices~$r\in R$ have degree at most two in~$B^*$. Moreover, all vertices~$u\in U$ have degree at most two in~$B^*$ because each multifurcation of~$N$ has been split into multiple vertices in~$U$ with one incident edge each. Hence, bipartite graph~$B^*$ has maximum degree~2 and we can proceed as in the proof of Theorem~\ref{Estelling2}.
\end{proof}

The following characterization in terms of zig-zag paths follows directly from the theorem above. Recall that we call a sequence $(u_1,v_1,\ldots ,u_{k},v_{k},u_{k+1})$ of $2k+1$ vertices ($k\geq 1$) of a network~$N$ a \emph{zig-zag} path if~$v_i$ is the child of~$u_i$ and~$u_{i+1}$ for~$i=1,\ldots ,k$.

\begin{corollary}\label{cor:zigzag}
A nonbinary phylogenetic network~$N$ is strictly tree-based if and only if every reticulation has indegree~2 and there is no zig-zag path $(s,r_1,o_1,r_2,\ldots ,o_{k-1},r_{k},t)$, with~$k\geq 1$, in which $r_1,\ldots , r_{k}$ are reticulations, $o_1,\ldots,o_{k-1}$ are omnians and each of~$s$ and~$t$ is either a multifurcation or a reticulation and an omnian.
\end{corollary}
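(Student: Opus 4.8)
The plan is to deduce the corollary directly from Theorem~\ref{thm:strictly2}. That theorem already says that $N$ is strictly tree-based exactly when every reticulation of $N$ has indegree~$2$ and the modified bipartite graph $B^*$ contains no maximal path that starts and ends in $U$. So, assuming from now on that every reticulation of $N$ has indegree~$2$, it remains only to show that $B^*$ has a maximal path with both endpoints in $U$ if and only if $N$ has a zig-zag path of the stated form. Recall from the proof of Theorem~\ref{thm:strictly2} that under this assumption $B^*$ has maximum degree~$2$, so it is a disjoint union of paths and cycles; thus a maximal path with both endpoints in $U$ is precisely a connected component that is a path $(x_0,\rho_1,x_1,\rho_2,\ldots,\rho_k,x_k)$ with $x_0,\ldots,x_k\in U$, $\rho_1,\ldots,\rho_k\in R$, $k\geq 1$ (no vertex of $U$ is isolated in $B^*$), and with $x_0$ and $x_k$ of degree~$1$ in $B^*$. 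The key bookkeeping point is that the vertices of $U$ come in two kinds: for each omnian that is not a multifurcation one copy, joined in $B^*$ to one vertex of $R$ per child and hence of degree equal to the outdegree of that omnian, i.e.\ $1$ or $2$; and for each multifurcation one degree-$1$ copy per reticulate child. Consequently the degree-$1$ vertices of $U$ are exactly the copies of reticulation-and-omnians and the copies of multifurcations.

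First I would go from $B^*$ to $N$. Each $\rho_i$ is a reticulation $r_i$ of $N$ (of indegree~$2$), and each $x_i$ is a copy of a vertex of $N$, which I identify with that vertex. The two $B^*$-edges at $\rho_i$, namely $\{x_{i-1},\rho_i\}$ and $\{x_i,\rho_i\}$, arise from the arcs $x_{i-1}\to r_i$ and $x_i\to r_i$; since $r_i$ has indegree~$2$ its parents are exactly (the $N$-vertices underlying) $x_{i-1}$ and $x_i$, and a short check using the one-copy-per-omnian and one-copy-per-reticulate-child conventions shows these two $N$-vertices are distinct. For an interior index $0<i<k$ the vertex $x_i$ has degree~$2$ in $B^*$; a multifurcation copy is always of degree~$1$, so $x_i$ is the copy of an omnian whose outdegree equals its $B^*$-degree, i.e.\ an omnian of outdegree~$2$ with children exactly $r_i$ and $r_{i+1}$; call it $o_i$. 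The endpoints $x_0$ and $x_k$ have degree~$1$, so by the bookkeeping point each is either a reticulation-and-omnian or a multifurcation; call them $s$ and $t$. Then $(s,r_1,o_1,r_2,\ldots,o_{k-1},r_k,t)$ is a zig-zag path of exactly the required form.

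Conversely, suppose $N$ has a zig-zag path $(s,r_1,o_1,\ldots,o_{k-1},r_k,t)$ of the stated form. If some interior vertex $o_i$ is a multifurcation I would replace the path by the shorter zig-zag path $(s,r_1,\ldots,r_i,o_i)$, which is still of the stated form because its new endpoint $o_i$ is a multifurcation; iterating, I may assume every interior $o_i$ is an omnian that is not a multifurcation, hence, having the two distinct reticulate children $r_i\neq r_{i+1}$, an omnian of outdegree exactly~$2$. Now consider the walk $(\widehat s,(r_1)_r,\widehat{o_1},(r_2)_r,\ldots,(r_k)_r,\widehat t)$ in $B^*$, where $\widehat{o_i}$ is the unique copy of $o_i$, joined in $B^*$ to exactly $(r_i)_r$ and $(r_{i+1})_r$ since $o_i$ has outdegree~$2$, and $\widehat s$ and $\widehat t$ are the copies of $s$ and $t$ in $U$ associated with the reticulate children $r_1$ and $r_k$ respectively, each of degree~$1$ in $B^*$ (the copy of a reticulation-and-omnian, whose only child is $r_1$, respectively $r_k$; or a multifurcation copy). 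Each interior vertex $(r_i)_r$ has degree exactly~$2$ in $B^*$, its neighbours being the copies of its two parents $o_{i-1}$ and $o_i$, with $s$ and $t$ playing the roles of $o_0$ and $o_k$. Hence this walk is a whole connected component of $B^*$ that is a path, i.e.\ a maximal path, whose endpoints $\widehat s$ and $\widehat t$ lie in $U$. Together with Theorem~\ref{thm:strictly2}, the two directions give the corollary.

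I expect the main obstacle to be the bookkeeping around the kinds of vertices of $U$: one must pin down precisely which vertices of $U$ have degree~$1$ in $B^*$ and match the two possibilities for a zig-zag-path endpoint (``multifurcation'' versus ``reticulation and omnian'') against them; and, in the converse direction, one must notice that a multifurcation occurring in the \emph{interior} of a zig-zag path cannot be traversed ``through'' in $B^*$, since it is split there into several distinct copies, so the path must first be shortened to one whose interior vertices are non-multifurcating omnians of outdegree~$2$.
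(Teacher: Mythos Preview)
Your approach is exactly the paper's: the paper simply states that the corollary ``follows directly from the theorem above'' (Theorem~\ref{thm:strictly2}), and you spell out the translation between maximal $U$--$U$ paths in $B^*$ and zig-zag paths of the stated form in $N$, which is precisely what that one-line appeal is meant to encode.

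One small gap worth patching in your converse direction: when you shorten the zig-zag path at an interior $o_i$, you only treat the case where $o_i$ is a multifurcation. But an interior $o_i$ could also be a reticulation-and-omnian (outdegree~$1$), in which case $r_i=r_{i+1}$ and your walk in $B^*$ would revisit $(r_i)_r$. The fix is identical---truncate the zig-zag path at such an $o_i$, since it too is a legal endpoint---after which every remaining interior $o_i$ genuinely has two distinct reticulate children and hence outdegree exactly~$2$, as you need. (Relatedly, your bookkeeping tacitly places outdegree-$1$ omnians in $U$; the paper's stated definition of $B^*$ mentions only omnians of outdegree~$2$, but your reading is the one required for Theorems~\ref{thm:strictly} and~\ref{thm:strictly2} to hold, so it is the intended one.)
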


\section{\rev{Application to biological phylogenetic networks}}\label{sec:example}

\rev{To show how our theorems can be applied to real networks, we discuss two examples of biological phylogenetic networks in this section. Since both networks are nonbinary, previously known theorems do not apply to them. In the following two subsections, we will show how one can use the theorems from this paper to determine whether each of these networks is (strictly) tree-based or not.}

\subsection{Viola network}

\rev{A nonbinary phylogenetic network for violets from the \emph{Viola} genus, based on the network published in~\cite{Violets2}, is displayed} in Figure~\ref{fig:violets}. The \emph{Viola} genus contains about~600 species, which are divided over sixteen different sections. The leaves of the network in Figure~\ref{fig:violets} represent these sections, relabelled as follows: $a=$~\emph{Sect.~nov.~A}, $b=$~\emph{Sect.~nov.~B}, $c=$~\emph{Melanium}, $d=$~\emph{Delphiniopsis}, $e=$~\emph{Sclerosium}, $f=$~\emph{Viola~s.str.}, $g=$~\emph{Plagiostigma}, $h=$~\emph{Nosphinium~s.lat.}, $i=$~\emph{Xylinosium}, $j=$~\emph{Chamaemelanium}, $k=$~\emph{Chilenium}, $l=$~\emph{Erpetion}, $m=$~\emph{Rubellium}, $n=$~\emph{Tridens}, $o=$~\emph{Leptidium}, $p=$~\emph{Andinium}. The~21 reticulations in the network (indicated with pink shading around the nodes) represent polyploidisations.

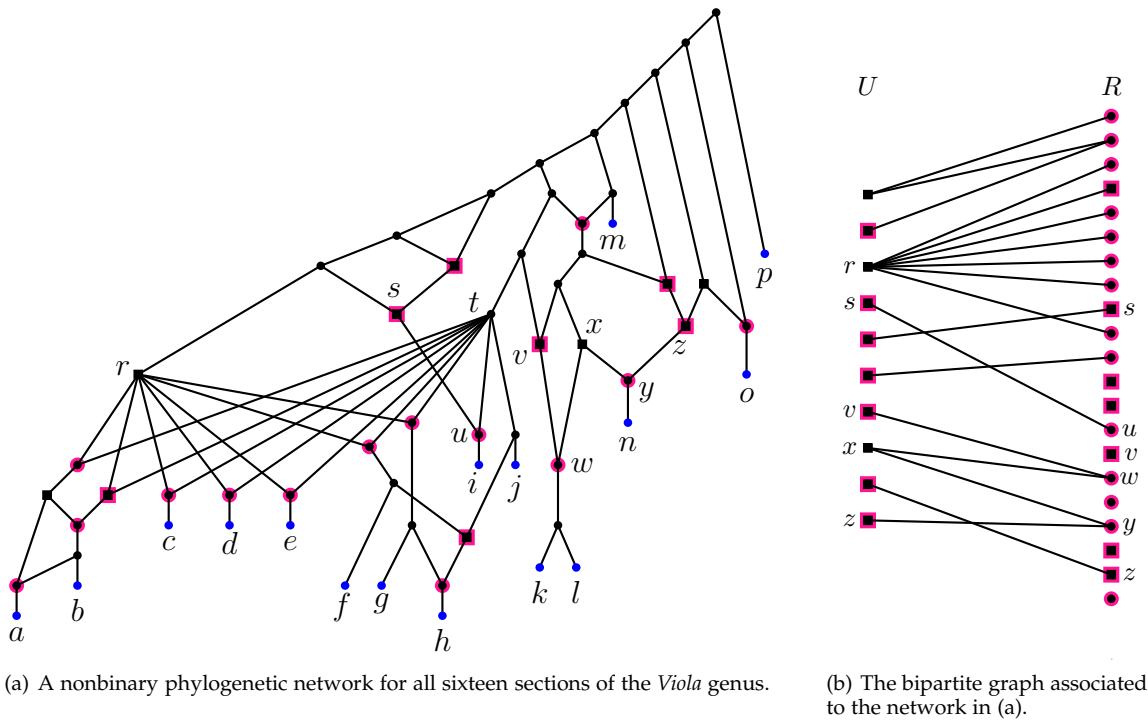
\begin{figure}[h]
\centering
\subfigure[A nonbinary phylogenetic network for all sixteen sections of the \emph{Viola} genus.]{
\begin{tikzpicture}[scale=.8]
\fill[color=deeppink] (0,0.5) circle (3.5pt); 
\fill[color=deeppink] (1,1.5) circle (3.5pt); 
\fill[color=deeppink] (2.5,2) circle (3.5pt); 
\fill[color=deeppink] (3.5,2) circle (3.5pt); 
\fill[color=deeppink] (4.5,2) circle (3.5pt); 
\fill[color=deeppink] (5.8,2.8) circle (3.5pt); 
\fill[color=deeppink] (6.5,3.2) circle (3.5pt); 
\fill[color=deeppink] (1,2.5) circle (3.5pt); 
\fill[color=deeppink] (7,0.5) circle (3.5pt); 
\fill[color=deeppink] (7.6,3) circle (3.5pt); 
\fill[color=deeppink] (8.9,2.5) circle (3.5pt); 
\fill[color=deeppink] (10.7,5.5) circle (3.5pt); 
\fill[color=deeppink] (12,4.8) circle (3.5pt); 
\fill[color=deeppink] (10.05,3.9) circle (3.5pt); 
\fill[color=deeppink] (9.3,6.5) circle (3.5pt); 
\fill (0,0.5) circle (2pt); 
\filldraw ([xshift=-2pt,yshift=-2pt]0.5,2) rectangle ++(4pt,4pt); 
\fill (1,1) circle (2pt); 
\fill (1,1.5) circle (2pt); 
\filldraw [deeppink] ([xshift=-3.5pt,yshift=-3.5pt]1.5,2) rectangle ++(7pt,7pt); 
\filldraw ([xshift=-2pt,yshift=-2pt]1.5,2) rectangle ++(4pt,4pt); 
\fill (2.5,2) circle (2pt); 
\fill (3.5,2) circle (2pt); 
\fill (4.5,2) circle (2pt); 
\fill (5.8,2.8) circle (2pt); 
\fill (6.5,3.2) circle (2pt); 
\fill (1,2.5) circle (2pt); 
\filldraw ([xshift=-2pt,yshift=-2pt]2,4) rectangle ++(4pt,4pt);
\fill (6.2,2.2) circle (2pt); 
\fill (6.5,1.5) circle (2pt); 
\filldraw [deeppink] ([xshift=-3.5pt,yshift=-3.5pt]7.4,1.3) rectangle ++(7pt,7pt);
\filldraw ([xshift=-2pt,yshift=-2pt]7.4,1.3) rectangle ++(4pt,4pt);
\fill (7,0.5) circle (2pt); 
\fill (7.6,3) circle (2pt); 
\fill (8.2,3) circle (2pt); 
\filldraw [deeppink] ([xshift=-3.5pt,yshift=-3.5pt]6.25,5) rectangle ++(7pt,7pt);
\filldraw ([xshift=-2pt,yshift=-2pt]6.25,5) rectangle ++(4pt,4pt);
\fill (5,5.8) circle (2pt); 
\filldraw [deeppink] ([xshift=-3.5pt,yshift=-3.5pt]7.2,5.8) rectangle ++(7pt,7pt);
\filldraw ([xshift=-2pt,yshift=-2pt]7.2,5.8) rectangle ++(4pt,4pt);
\fill (6.25,6.3) circle (2pt); 
\fill (7.8,7) circle (2pt); 
\fill (8.3,6) circle (2pt); 
\filldraw [deeppink] ([xshift=-3.5pt,yshift=-3.5pt]8.6,4.5) rectangle ++(7pt,7pt);
\filldraw ([xshift=-2pt,yshift=-2pt]8.6,4.5) rectangle ++(4pt,4pt);
\fill (8.8,7) circle (2pt); 
\fill (8.6,7.5) circle (2pt); 
\fill (9.5,8) circle (2pt); 
\fill (9.8,7) circle (2pt); 
\fill (9.3,6.5) circle (2pt); 
\fill (9.3,6) circle (2pt); 
\fill (8.9,5.5) circle (2pt); 
\fill (8.9,2.5) circle (2pt); 
\fill (8.9,1.5) circle (2pt); 
\filldraw ([xshift=-2pt,yshift=-2pt]9.3,4.5) rectangle ++(4pt,4pt);
\fill (10,8.5) circle (2pt); 
\fill (10.5,9) circle (2pt); 
\fill (11,9.5) circle (2pt); 
\fill (11.5,10) circle (2pt); 
\filldraw [deeppink] ([xshift=-3.5pt,yshift=-3.5pt]10.7,5.5) rectangle ++(7pt,7pt);
\filldraw ([xshift=-2pt,yshift=-2pt]10.7,5.5) rectangle ++(4pt,4pt);
\filldraw ([xshift=-2pt,yshift=-2pt]11.3,5.5) rectangle ++(4pt,4pt);
\filldraw [deeppink] ([xshift=-3.5pt,yshift=-3.5pt]11,4.8) rectangle ++(7pt,7pt);
\filldraw ([xshift=-2pt,yshift=-2pt]11,4.8) rectangle ++(4pt,4pt);
\fill (12,4.8) circle (2pt); 
\fill (10.05,3.9) circle (2pt); 
\fill (7.8,5) circle (2pt);
\draw (0,-0.3) node {\large $a$}; 
\draw (1,0.1) node {\large $b$}; 
\draw (2.5,1.2) node {\large $c$}; 
\draw (3.5,1.2) node {\large $d$}; 
\draw (4.5,1.2) node {\large $e$}; 
\draw (5.35,0.14) node {\large $f$}; 
\draw (6,0.2) node {\large $g$};  
\draw (7,-0.4) node {\large $h$}; 
\draw (7.5,2.2) node {\large $i$}; 
\draw (8.2,2.1) node {\large $j$}; 
\draw (8.6,0.4) node {\large $k$}; 
\draw (9.2,0.4) node {\large $l$}; 
\draw (9.8,6.2) node {\large $m$}; 
\draw (10.05,2.85) node {\large $n$}; 
\draw (12,3.65) node {\large $o$}; 
\draw (12.3,5.6) node {\large $p$}; 
\draw (1.75,4.1) node {\large  $r$}; 
\draw (6.2,5.4) node {\large  $s$}; 
\draw (7.28,3) node {\large  $u$}; 
\draw (7.52,5.2) node {\large  $t$}; 
\draw (8.28,4.28) node {\large  $v$}; 
\draw (9.3,2.5) node {\large  $w$}; 
\draw (9.5,4.8) node {\large  $x$}; 
\draw (10.36,3.7) node {\large  $y$}; 
\draw (10.9,4.45) node {\large  $z$}; 
\draw [thick] (0,0) -- (0,0.5); 
\draw [thick] (0,0.5) -- (0.5,2); 
\draw [thick] (1,0.5) -- (1,1); 
\draw [thick] (0,0.5) -- (1,1); 
\draw [thick] (1,1) -- (1,1.5); 
\draw [thick] (1,1.5) -- (0.5,2); 
\draw [thick] (1,1.5) -- (1.5,2); 
\draw [thick] (2.5,2) -- (2.5,1.5); 
\draw [thick] (3.5,2) -- (3.5,1.5); 
\draw [thick] (4.5,2) -- (4.5,1.5); 
\draw [thick] (0.5,2) -- (1,2.5); 
\draw [thick] (1,2.5) -- (2,4); 
\draw [thick] (5.8,2.8) -- (6.2,2.2); 
\draw [thick] (6.2,2.2) -- (5.4,0.5); 
\draw [thick] (6.2,2.2) -- (7.4,1.3); 
\draw [thick] (6.5,3.2) -- (6.5,1.5); 
\draw [thick] (6.5,1.5) -- (6,0.5); 
\draw [thick] (6.5,1.5) -- (7,0.5); 
\draw [thick] (7,0.5) -- (7,0); 
\draw [thick] (7,0.5) -- (7.4,1.3); 
\draw [thick] (2,4) -- (1.5,2); 
\draw [thick] (2,4) -- (2.5,2); 
\draw [thick] (2,4) -- (3.5,2); 
\draw [thick] (2,4) -- (4.5,2); 
\draw [thick] (2,4) -- (5.8,2.8); 
\draw [thick] (2,4) -- (6.5,3.2); 
\draw [thick] (7.4,1.3) -- (8.2,3); 
\draw [thick] (7.6,3) -- (7.6,2.5); 
\draw [thick] (8.2,3) -- (8.2,2.5); 
\draw [thick] (7.8,5) -- (7.6,3); 
\draw [thick] (7.8,5) -- (8.2,3); 
\draw [thick] (7.8,5) -- (6.5,3.2); 
\draw [thick] (7.8,5) -- (5.8,2.8); 
\draw [thick] (7.8,5) -- (4.5,2); 
\draw [thick] (7.8,5) -- (3.5,2); 
\draw [thick] (7.8,5) -- (2.5,2); 
\draw [thick] (7.8,5) -- (1.5,2); 
\draw [thick] (7.8,5) -- (1,2.5); 
\draw [thick] (6.25,5) -- (7.6,3); 
\draw [thick] (5,5.8) -- (6.25,5); 
\draw [thick] (7.2,5.8) -- (6.25,5); 
\draw [thick] (6.25,6.3) -- (5,5.8); 
\draw [thick] (5,5.8) -- (2,4); 
\draw [thick] (7.8,7) -- (6.25,6.3); 
\draw [thick] (7.8,7) -- (7.2,5.8); 
\draw [thick] (6.25,6.3) -- (7.2,5.8); 
\draw [thick] (7.8,5) -- (8.3,6); 
\draw [thick] (8.3,6) -- (8.8,7); 
\draw [thick] (8.8,7) -- (8.6,7.5); 
\draw [thick] (7.8,7) -- (8.6,7.5); 
\draw [thick] (9.5,8) -- (9.8,7); 
\draw [thick] (9.5,8) -- (8.6,7.5); 
\draw [thick] (9.8,7) -- (9.3,6.5); 
\draw [thick] (8.8,7) -- (9.3,6.5); 
\draw [thick] (9.3,6.5) -- (9.3,6); 
\draw [thick] (9.3,6) -- (8.9,5.5); 
\draw [thick] (8.9,5.5) -- (8.6,4.5); 
\draw [thick] (8.3,6) -- (8.6,4.5); 
\draw [thick] (8.6,4.5) -- (8.9,2.5); 
\draw [thick] (8.9,2.5) -- (8.9,1.5); 
\draw [thick] (8.9,1.5) -- (8.6,0.8); 
\draw [thick] (8.9,1.5) -- (9.2,0.8); 
\draw [thick] (9.3,4.5) -- (8.9,5.5); 
\draw [thick] (9.3,4.5) -- (8.9,2.5); 
\draw [thick] (9.5,8) -- (10,8.5); 
\draw [thick] (10,8.5) -- (10.5,9); 
\draw [thick] (10.5,9) -- (11,9.5); 
\draw [thick] (11,9.5) -- (11.5,10); 
\draw [thick] (11.5,10) -- (12.3,6); 
\draw [thick] (11,9.5) -- (12,4.8); 
\draw [thick] (10.5,9) -- (11.3,5.5); 
\draw [thick] (10,8.5) -- (10.7,5.5); 
\draw [thick] (9.3,6) -- (10.7,5.5); 
\draw [thick] (10.7,5.5) -- (11,4.8); 
\draw [thick] (11.3,5.5) -- (11,4.8); 
\draw [thick] (11.3,5.5) -- (12,4.8); 
\draw [thick] (12,4.8) -- (12,4); 
\draw [thick] (11,4.8) -- (10.05,3.9); 
\draw [thick] (9.3,4.5) -- (10.05,3.9); 
\draw [thick] (10.05,3.9) -- (10.05,3.2); 
\draw [thick] (9.8,7) -- (9.8,6.5); 
\fill [blue](0,0) circle (2pt); 
\fill [blue](1,0.5) circle (2pt); 
\fill [blue] (2.5,1.5) circle (2pt); 
\fill [blue](3.5,1.5) circle (2pt); 
\fill [blue](4.5,1.5) circle (2pt); 
\fill [blue](5.4,0.5) circle (2pt); 
\fill [blue](6,0.5) circle (2pt);  
\fill [blue](7,0) circle (2pt); 
\fill [blue](7.6,2.5) circle (2pt); 
\fill [blue](8.2,2.5) circle (2pt); 
\fill [blue](8.6,0.8) circle (2pt); 
\fill [blue](9.2,0.8) circle (2pt); 
\fill [blue](9.8,6.5) circle (2pt); 
\fill [blue](10.05,3.2) circle (2pt); 
\fill [blue](12,4) circle (2pt); 
\fill [blue](12.3,6) circle (2pt); 
\end{tikzpicture}}\hspace{.5cm} 
\subfigure[The bipartite graph associated to the network in (a).]{
\begin{tikzpicture}[scale=.8]
\filldraw ([xshift=-2pt,yshift=-2pt]0,2.7) rectangle ++(4pt,4pt);
\filldraw [deeppink] ([xshift=-3.5pt,yshift=-3.5pt]0,2.1) rectangle ++(7pt,7pt);
\filldraw ([xshift=-2pt,yshift=-2pt]0,2.1) rectangle ++(4pt,4pt);
\filldraw ([xshift=-2pt,yshift=-2pt]0,1.5) rectangle ++(4pt,4pt);
\filldraw [deeppink] ([xshift=-3.5pt,yshift=-3.5pt]0,0.9) rectangle ++(7pt,7pt);
\filldraw ([xshift=-2pt,yshift=-2pt]0,0.9) rectangle ++(4pt,4pt);
\filldraw [deeppink] ([xshift=-3.5pt,yshift=-3.5pt]0,0.3) rectangle ++(7pt,7pt);
\filldraw ([xshift=-2pt,yshift=-2pt]0,0.3) rectangle ++(4pt,4pt);
\filldraw [deeppink] ([xshift=-3.5pt,yshift=-3.5pt]0,-0.3) rectangle ++(7pt,7pt);
\filldraw ([xshift=-2pt,yshift=-2pt]0,-0.3) rectangle ++(4pt,4pt);
\filldraw [deeppink] ([xshift=-3.5pt,yshift=-3.5pt]0,-0.9) rectangle ++(7pt,7pt);
\filldraw ([xshift=-2pt,yshift=-2pt]0,-0.9) rectangle ++(4pt,4pt);
\filldraw ([xshift=-2pt,yshift=-2pt]0,-1.5) rectangle ++(4pt,4pt);
\filldraw [deeppink] ([xshift=-3.5pt,yshift=-3.5pt]0,-2.1) rectangle ++(7pt,7pt);
\filldraw ([xshift=-2pt,yshift=-2pt]0,-2.1) rectangle ++(4pt,4pt);
\filldraw [deeppink] ([xshift=-3.5pt,yshift=-3.5pt]0,-2.7) rectangle ++(7pt,7pt);
\filldraw ([xshift=-2pt,yshift=-2pt]0,-2.7) rectangle ++(4pt,4pt);

\fill  [color=deeppink] (4,4) circle (3.5pt); 
\fill [color=deeppink](4,3.6) circle (3.5pt); 
\fill[color=deeppink] (4,3.2) circle (3.5pt); 
\filldraw [deeppink] ([xshift=-3.5pt,yshift=-3.5pt]4,2.8) rectangle ++(7pt,7pt);
\filldraw ([xshift=-2pt,yshift=-2pt]4,2.8) rectangle ++(4pt,4pt);
\fill [color=deeppink](4,2.4) circle (3.5pt); 
\fill [color=deeppink](4,2) circle (3.5pt); 
\fill [color=deeppink](4,1.6) circle (3.5pt); 
\fill [color=deeppink](4,1.2) circle (3.5pt); 
\filldraw [deeppink] ([xshift=-3.5pt,yshift=-3.5pt]4,0.8) rectangle ++(7pt,7pt);
\filldraw ([xshift=-2pt,yshift=-2pt]4,0.8) rectangle ++(4pt,4pt);
\fill [color=deeppink](4,0.4) circle (3.5pt); 
\fill [color=deeppink](4,0.0) circle (3.5pt); 
\filldraw [deeppink] ([xshift=-3.5pt,yshift=-3.5pt]4,-0.4) rectangle ++(7pt,7pt);
\filldraw ([xshift=-2pt,yshift=-2pt]4,-0.4) rectangle ++(4pt,4pt);
\filldraw [deeppink] ([xshift=-3.5pt,yshift=-3.5pt]4,-0.8) rectangle ++(7pt,7pt);
\filldraw ([xshift=-2pt,yshift=-2pt]4,-0.8) rectangle ++(4pt,4pt);
\fill [color=deeppink](4,-1.2) circle (3.5pt); 
\filldraw [deeppink] ([xshift=-3.5pt,yshift=-3.5pt]4,-1.6) rectangle ++(7pt,7pt);
\filldraw ([xshift=-2pt,yshift=-2pt]4,-1.6) rectangle ++(4pt,4pt);
\fill [color=deeppink](4,-2) circle (3.5pt); 
\fill [color=deeppink](4,-2.4) circle (3.5pt); 
\fill [color=deeppink](4,-2.8) circle (3.5pt); 
\filldraw [deeppink] ([xshift=-3.5pt,yshift=-3.5pt]4,-3.2) rectangle ++(7pt,7pt);
\filldraw ([xshift=-2pt,yshift=-2pt]4,-3.2) rectangle ++(4pt,4pt);
\filldraw [deeppink] ([xshift=-3.5pt,yshift=-3.5pt]4,-3.6) rectangle ++(7pt,7pt);
\filldraw ([xshift=-2pt,yshift=-2pt]4,-3.6) rectangle ++(4pt,4pt);
\fill [color=deeppink](4,-4) circle (3.5pt); 

\fill (4,4) circle (2pt); 
\fill (4,3.6) circle (2pt); 
\fill (4,3.2) circle (2pt); 
\fill (4,2.4) circle (2pt); 
\fill (4,2) circle (2pt); 
\fill (4,1.6) circle (2pt); 
\fill (4,1.2) circle (2pt); 
\fill (4,0.4) circle (2pt); 
\fill (4,0.0) circle (2pt); 
\fill (4,-1.2) circle (2pt); 
\fill (4,-2) circle (2pt); 
\fill (4,-2.4) circle (2pt); 
\fill (4,-2.8) circle (2pt); 
\fill (4,-4) circle (2pt); 

\draw [thick] (0,2.7) -- (4,4); 
\draw [thick] (0,2.7) -- (4,3.6); 
\draw [thick] (0,2.1) -- (4,3.6); 
\draw [thick] (0,1.5) -- (4,3.2); 
\draw [thick] (0,1.5) -- (4,2.8); 
\draw [thick] (0,1.5) -- (4,2.4); 
\draw [thick] (0,1.5) -- (4,2); 
\draw [thick] (0,1.5) -- (4,1.6); 
\draw [thick] (0,1.5) -- (4,1.2); 
\draw [thick] (0,1.5) -- (4,0.4); 
\draw [thick] (0,0.9) -- (4,-1.2); 
\draw [thick] (0,0.3) -- (4,0.8); 
\draw [thick] (0,-0.3) -- (4,0); 
\draw [thick] (0,-0.9) -- (4,-2); 
\draw [thick] (0,-1.5) -- (4,-2); 
\draw [thick] (0,-1.5) -- (4,-2.8); 
\draw [thick] (0,-2.1) -- (4,-3.6); 
\draw [thick] (0,-2.7) -- (4,-2.8); 

\draw (-0.3,1.5) node {$r$}; 
\draw (-0.3,0.9) node {$s$}; 
\draw (-0.3,-0.9) node {$v$}; 
\draw (-0.3,-1.5) node {$x$}; 
\draw (-0.3,-2.7) node {$z$}; 

\draw (4.3,0.8) node {$s$}; 
\draw (4.3,-1.2) node {$u$}; 
\draw (4.33,-1.6) node {$v$}; 
\draw (4.3,-2) node {$w$}; 
\draw (4.3,-2.8) node {$y$}; 
\draw (4.3,-3.6) node {$z$}; 

\draw (0,4.5) node {$U$}; 
\draw (4,4.5) node {$R$}; 
\fill (4,-5) circle (0pt); 
\end{tikzpicture}}
\caption{\label{fig:violets} \rev{(a) A nonbinary phylogenetic network for all sixteen sections of the \emph{Viola} genus, based on the network published in~\cite{Violets2} and (b) the associated bipartite graph. The network in (a) contains a zig-zag path $(s,u,t)$ from which one can conclude by Corollary~\ref{cor:zigzag} that the network is not strictly tree-based. Moreover, it follows from Corollary~\ref{AantalkindNB} that the network is not even tree-based, because the three omnians~$v$, $x$ and~$z$ together only have two children. In the associated bipartite graph in (b), the vertices from~$U$ (the omnians) and the vertices from~$R$ (the reticulations) are drawn in the same order (from top to bottom) as they appear in the network (from left to right). Since there exists no matching that covers all vertices of~$U$, the \emph{Viola} network is not tree-based.}}
\end{figure}

First we note that the network is clearly not binary because the nodes labelled~$r$ and~$t$ are multifurcations. \rev{Next,} we show how Corollary~\ref{AantalkindNB} can be used to conclude that this network is not tree-based. Consider the three omnians labelled~$v$, $x$ and~$z$ (as before, omnians are indicated with square nodes).  Then the total number of different children of these three omnians is two:~$w$ and~$n$. Since the number of children is smaller than the number of considered omnians, it follows from Corollary~\ref{AantalkindNB} that the network is not tree-based.

\leo{Since the network is not tree-based, it can certainly not be strictly tree-based. To see this directly, we can apply Corollary~\ref{cor:zigzag}. The path $(s,u,t)$ is a zig-zag path starting at a reticulation that is also an omnian ($s$), zig-zagging via a reticulation ($u$), and ending at a multifurcation ($t$). The existence of such a path proves, by Corollary~\ref{cor:zigzag}, that the network is not strictly tree-based. }

\leo{To determine whether a network is tree-based or not, trying all subsets of the omnians is clearly not an efficient method. However, we can do this efficiently if we construct the bipartite graph associated to the network. For this example, the associated bipartite graph is displayed in Figure~\ref{fig:violets}(b). One can decide whether the network is tree-based by determining whether this bipartite graph has a matching that covers all vertices in~$U$, by Theorem~\ref{LstellingNB}. There exist simple polynomial-time algorithms for this task, see e.g.~\cite{schrijver}. In this case, no such matching exists and hence the network is not tree-based. Similarly, we can find out efficiently whether a network is strictly tree-based by constructing the modified bipartite graph and applying Theorem~\ref{thm:strictly}.}

\subsection{The origin of Eukaryotes}

\rev{The second example we discuss concerns the origin of Eukaryotes, which is displayed schematically in the phylogenetic network in Figure~\ref{fig:eukaryotes}. This figure is based on the network in~\cite{martin1999mosaic} and has been adapted to make it conform to the definition of phylogenetic networks used in this paper. The leaves of the network have been labelled arbitrarily by labels~$x_1,\ldots ,x_{41}$. Moreover, where the original network showed different prokaryotic genomes as differently colored lines inside the lineages, we show only the different lineages as black lines in Figure~\ref{fig:eukaryotes}. The six reticulations represent endosymbiosis; the merging of different prokaryotic genomes into a single lineage (present in the same cell). Horizontal gene transfer events between the lineages are not included in the network. The network is clearly nonbinary since the common ancestors of the Archaebacteria and the Eubacteria (the children of the root) are both multifurcations and, in addition, the child of the second reticulation from the top is also a multifurcation.}

\begin{figure}[h]
\center
\begin{tikzpicture}
\fill [color=deeppink](0,-3) circle (3.5pt); 
\fill [color=deeppink](1.5,-5) circle (3.5pt); 
\fill [color=deeppink](0.9,-8.2) circle (3.5pt); 
\fill [color=deeppink](-0.2,-7.8) circle (3.5pt); 
\fill [color=deeppink](0.9,-8.2) circle (3.5pt); 
\fill [color=deeppink](2,-9) circle (3.5pt); 
\fill [color=deeppink](-3.1,-9.5) circle (3.5pt);

\fill (0,1) circle (2pt);
\fill (-4,-1) circle (2pt);  
\fill (-4,-1.5) circle (2pt);
\fill (-5,-1.5) circle (2pt);
\fill (-2,-2) circle (2pt);
\fill (-2.4,-2.5) circle (2pt);
\fill (-4.4,-2) circle (2pt);
\fill (-3.6,-2) circle (2pt);
\fill (-1.5,-2.25) circle (2pt);
\fill (-1.9,-2.75) circle (2pt);
\fill (-1.4,-3) circle (2pt);
\fill (-1,-2.5) circle (2pt);
\fill (-0,-3) circle (2pt);   

\fill (4,-1) circle (2pt);    
\fill (4,-2) circle (2pt);
\fill (5,-1.5) circle (2pt);
\fill (2,-2) circle (2pt);
\fill (2.4,-2.5) circle (2pt);
\fill (5.4,-2) circle (2pt);
\fill (4.6,-2) circle (2pt);
\fill (1.5,-2.25) circle (2pt);
\fill (1.9,-2.75) circle (2pt);
\fill (1.4,-3) circle (2pt);
\fill (1,-2.5) circle (2pt);
\fill (4,-3) circle (2pt);
\fill (4.4,-3.5) circle (2pt);
\fill (3.6,-3.5) circle (2pt);
\fill (4.4,-2.5) circle (2pt);
\fill (4.9,-3) circle (2pt);

\fill (0,-3.5) circle (2pt); 
\fill (0.5,-4) circle (2pt);
\fill (1,-4.5) circle (2pt);
\fill (-1,-4) circle (2pt);
\fill (-1.4,-4.5) circle (2pt);
\fill (-0.6,-4.5) circle (2pt);
\fill (0.5,-5) circle (2pt);
\fill (0,-4.5) circle (2pt);
\fill (-0.75,-5.25) circle (2pt);
\fill (-1.15,-5.75) circle (2pt);
\fill (-0.35,-5.75) circle (2pt);
\fill (0.1,-5.5) circle (2pt);
\fill (0.9,-5.5) circle (2pt);

\fill (1.5,-5) circle (2pt); 

\fill (1.5,-6) circle (2pt);
\fill (4,-6.5) circle (2pt);
\fill (4.5,-7) circle (2pt);
\fill (4.9,-7.5) circle (2pt);
\fill (4.1,-7.5) circle (2pt);
\fill (2.2,-6.7) circle (2pt);
\fill (3,-7.3) circle (2pt);
\fill (3.4,-7.8) circle (2pt);
\fill (2.6,-7.8) circle (2pt);
\fill (0.5,-6.5) circle (2pt);
\fill (0.5,-7.8) circle (2pt);
\fill (3,-8.2) circle (2pt);
\fill (4,-9) circle (2pt);
\fill (4.5,-9.5) circle (2pt);
\fill (5.25,-10.25) circle (2pt);
\fill (5.75,-10.75) circle (2pt);
\fill (4.75,-10.75) circle (2pt);
\fill (4.35,-11.25) circle (2pt);
\fill (5.15,-11.25) circle (2pt);
\fill (-1,-7.3) circle (2pt);
\fill (-0.5,-7.6) circle (2pt);
\fill (-0.2,-7.8) circle (2pt);   
\fill (0.9,-8.2) circle (2pt);   
\fill (0.9,-8.5) circle (2pt);
\fill (1.3,-9) circle (2pt);   
\fill (.5,-9) circle (2pt);
\fill (2,-9.5) circle (2pt);
\fill (2.4,-10) circle (2pt);
\fill (1.6,-10) circle (2pt);
\fill (4.1,-10) circle (2pt);
\fill (4.5,-10.5) circle (2pt);
\fill (3.7,-10.5) circle (2pt);
\fill (2,-9) circle (2pt);
\fill (3.6,-9.5) circle (2pt);
\fill (0.2,-9.5) circle (2pt);
\fill (-0.6,-9.5) circle (2pt);
\fill (-2,-9) circle (2pt);
\fill (-1.5,-9.5) circle (2pt);
\fill (-1,-10) circle (2pt);
\fill (-3.1,-9.5) circle (2pt);
\fill (-3.1,-10) circle (2pt);
\fill (-3.5,-10.5) circle (2pt);
\fill (-2.7,-10.5) circle (2pt);
\fill (-2,-10) circle (2pt);
\fill (-1.5,-10.5) circle (2pt);
\fill (-0.5,-10.5) circle (2pt);
\fill (-.2,-9) circle (2pt);

\draw (-5,-1.75) node {$x_1$}; 
\draw (-4.4,-2.25) node {$x_2$}; 
\draw (-3.6,-2.25) node {$x_3$}; 
\draw (-2.4,-2.75) node {$x_4$}; 
\draw (-1.9,-3.0) node {$x_5$}; 
\draw (-1.4,-3.25) node {$x_6$}; 
\draw (-1.4,-4.75) node {$x_7$}; 
\draw (-0.6,-4.75) node {$x_8$}; 
\draw (-1.23,-5.95) node {$x_9$}; 
\draw (-0.35,-5.95) node {$x_{10}$};
\draw (0.1,-5.7) node {$x_{11}$};
\draw (0.9,-5.7) node {$x_{12}$};
\draw (-3.5,-10.7) node {$x_{13}$}; 
\draw (-2.7,-10.7) node {$x_{14}$}; 
\draw (-2,-10.2) node {$x_{15}$}; 
\draw (-1.5,-10.7) node {$x_{16}$}; 
\draw (-0.5,-10.7) node {$x_{17}$}; 
\draw (-0.6,-9.7) node {$x_{18}$}; 
\draw (.2,-9.7) node {$x_{19}$}; 
\draw (0.5,-9.2) node {$x_{20}$}; 
\draw (1.3,-9.2) node {$x_{21}$}; 
\draw (1.6,-10.2) node {$x_{22}$}; 
\draw (2.4,-10.2) node {$x_{23}$}; 
\draw (3.6,-9.7) node {$x_{24}$}; 
\draw (3.7,-10.7) node {$x_{25}$}; 
\draw (4.41,-10.7) node {$x_{26}$}; 
\draw (4.35,-11.45) node {$x_{27}$}; 
\draw (5.15,-11.45) node {$x_{28}$}; 
\draw (5.75,-10.95) node {$x_{29}$}; 
\draw (2.6,-8) node {$x_{30}$}; 
\draw (3.4,-8) node {$x_{31}$}; 
\draw (4.1,-7.7) node {$x_{32}$}; 
\draw (4.9,-7.7) node {$x_{33}$}; 
\draw (1.5,-3.25) node {$x_{34}$}; 
\draw (2,-3) node {$x_{35}$}; 
\draw (2.5,-2.75) node {$x_{36}$}; 
\draw (3.6,-3.7) node {$x_{37}$}; 
\draw (4.4,-3.7) node {$x_{38}$}; 
\draw (4.9,-3.2) node {$x_{39}$}; 
\draw (4.6,-2.2) node {$x_{40}$}; 
\draw (5.4,-2.2) node {$x_{41}$}; 

\draw [thick] (0,1) -- (-4,-1);
\draw [thick] (-4,-1) -- (-2,-2); 
\draw [thick] (-4,-1) -- (-5,-1.5);
\draw [thick] (-4,-1) -- (-4,-1.5);
\draw [thick] (-4,-1.5) -- (-4.4,-2);
\draw [thick] (-4,-1.5) -- (-3.6,-2);
\draw [thick] (-2,-2) -- (-2.4,-2.5);
\draw [thick] (-2,-2) -- (-1.5,-2.25);
\draw [thick] (-1.5,-2.25) -- (-1.9,-2.75);
\draw [thick] (-1.5,-2.25) -- (-1,-2.5);
\draw [thick] (-1,-2.5) -- (-1.4,-3);
\draw [thick] (-1,-2.5) -- (0,-3);

\draw [thick] (0,1) -- (4,-1);   
\draw [thick] (4,-1) -- (3,-1.5);
\draw [thick] (4,-1) -- (5,-1.5);
\draw [thick] (4,-1) -- (4,-2);
\draw [thick] (5,-1.5) -- (5.4,-2);
\draw [thick] (5,-1.5) -- (4.6,-2);

\draw [thick] (3,-1.5) -- (2,-2);

\draw [thick] (1,-2.5) -- (1.4,-3);
\draw [thick] (1,-2.5) -- (0,-3);
\draw [thick] (4,-2) -- (4.4,-2.5);
\draw [thick] (4.4,-2.5) -- (4.9,-3);
\draw [thick] (4.4,-2.5) -- (4,-3);
\draw [thick] (4,-3) -- (3.6,-3.5);
\draw [thick] (4,-3) -- (4.4,-3.5);

\draw [thick] (2,-2) -- (2.4,-2.5);
\draw [thick] (2,-2) -- (1.5,-2.25);
\draw [thick] (1.5,-2.25) -- (1.9,-2.75);
\draw [thick] (1.5,-2.25) -- (1,-2.5);
\draw [thick] (1,-2.5) -- (1.4,-3);
\draw [thick] (1,-2.5) -- (0,-3);
\draw [thick] (0,-3) -- (0,-3.5); 
\draw [thick] (0,-3.5) -- (-1,-4);
\draw [thick] (0,-3.5) -- (0.5,-4);
\draw [thick] (-1,-4) -- (-1.4,-4.5);
\draw [thick] (-0.6,-4.5) -- (-1,-4);
\draw [thick] (0.5,-4) -- (0,-4.5);
\draw [thick] (0.5,-4) -- (1,-4.5);
\draw [thick] (1,-4.5) -- (0.5,-5);
\draw [thick] (1,-4.5) -- (1.5,-5);
\draw [thick] (0.5,-5) -- (0.1,-5.5);
\draw [thick] (0.5,-5) -- (0.9,-5.5);
\draw [thick] (0,-4.5) -- (-0.75,-5.25);
\draw [thick] (-0.75,-5.25) -- (-0.35,-5.75);
\draw [thick] (-0.75,-5.25) -- (-1.15,-5.75);
\draw [thick] (4,-2) -- (1.5,-5);

\draw [thick] (1.5,-5) -- (1.5,-6); 
\draw [thick] (1.5,-6) -- (4,-6.5);
\draw [thick] (4,-6.5) -- (4.5,-7);
\draw [thick] (4.5,-7) -- (4.9,-7.5);
\draw [thick] (4.1,-7.5) -- (4.5,-7);
\draw [thick] (1.5,-6) -- (2.2,-6.7);
\draw [thick] (2.2,-6.7) -- (3,-7.3);
\draw [thick] (3,-7.3) -- (2.6,-7.8);
\draw [thick] (3,-7.3) -- (3.4,-7.8);
\draw [thick] (1.5,-6) -- (0.5,-6.5);
\draw [thick] (0.5,-6.5) -- (0.5,-7.8);
\draw [thick] (3,-8.2) -- (0.5,-7.8);
\draw [thick] (3,-8.2) -- (4,-9);
\draw [thick] (4,-9) -- (4.5,-9.5);
\draw [thick] (4.5,-9.5) -- (5.25,-10.25);
\draw [thick] (5.25,-10.25) -- (5.75,-10.75);
\draw [thick] (4.75,-10.75) -- (5.25,-10.25);
\draw [thick] (4.75,-10.75) -- (5.15,-11.25);
\draw [thick] (4.75,-10.75) -- (4.35,-11.25);
\draw [thick] (0.5,-6.5) -- (-1,-7.3);
\draw [thick] (-1,-7.3) -- (-0.5,-7.6);
\draw [thick] (-0.5,-7.6) -- (-0.2,-7.8);
\draw [thick] (-0.2,-7.8) -- (4,-6.5);
\draw [thick] (-0.2,-7.8) -- (-0.2,-9);
\draw [thick] (-0.2,-9) -- (0.2,-9.5);
\draw [thick] (-0.2,-9) -- (-0.6,-9.5);
\draw [thick] (0.5,-7.8) -- (0.9,-8.2);  
\draw [thick] (0.9,-8.2) -- (2.2,-6.7);
\draw [thick] (0.9,-8.2) -- (0.9,-8.5);
\draw [thick] (0.9,-8.5) -- (1.3,-9);
\draw [thick] (0.9,-8.5) -- (0.5,-9);
\draw [thick] (3,-8.2) -- (2,-9);
\draw [thick] (2,-9.5) -- (2,-9);
\draw [thick] (2,-9.5) -- (2.4,-10);
\draw [thick] (2,-9.5) -- (1.6,-10);
\draw [thick] (4.1,-10) -- (4.5,-9.5);
\draw [thick] (4.1,-10) -- (3.7,-10.5);
\draw [thick] (4.1,-10) -- (4.5,-10.50);
\draw [thick] (3.6,-9.5) -- (4,-9);

\draw [thick] (-2,-9) -- (-1.5,-9.5);
\draw [thick] (-1.5,-9.5) -- (-1,-10);
\draw [thick] (-1,-10) -- (-0.5,-10.5);
\draw [thick] (-1,-10) -- (-1.5,-10.5);
\draw [thick] (-1.5,-9.5) -- (-2,-10);
\draw [thick] (-3.1,-9.5) -- (-3.1,-10);
\draw [thick] (-3.1,-10) -- (-3.5,-10.5);
\draw [thick] (-3.1,-10) -- (-2.7,-10.5);
\draw [thick] (-0.5,-7.6) -- (-2,-9);
\draw [thick] (-2,-9) -- (-3.1,-9.5);
\draw [thick] (-1,-7.3) -- (-3.1,-9.5);
\draw [thick] (0,-4.5) -- (2,-9);

\node at (-4, 0)   (a) {Archaebacteria};
\node at (3.8, 0)   (b) {Eubacteria};
\node at (-.7, -6.4)   (c) {Eukaryotes};
\end{tikzpicture}
\caption{\label{fig:eukaryotes} \rev{A nonbinary phylogenetic network schematically illustrating the origin of Eukaryotes from Archaebacteria and Eubacteria, created by adapting the network published in~\cite{martin1999mosaic}. Taxa have been labelled $x_1,\ldots ,x_{41}$ arbitrarily. The six reticulations represent the merging of different prokaryotic genomes into a single lineage. Using Corollary~\ref{AantalkindNB}, it can be easily seen that the network is tree-based, since it has no omnians. Moreover, it follows from Corollary~\ref{cor:zigzag} that the network is even strictly tree-based.}}
\end{figure}
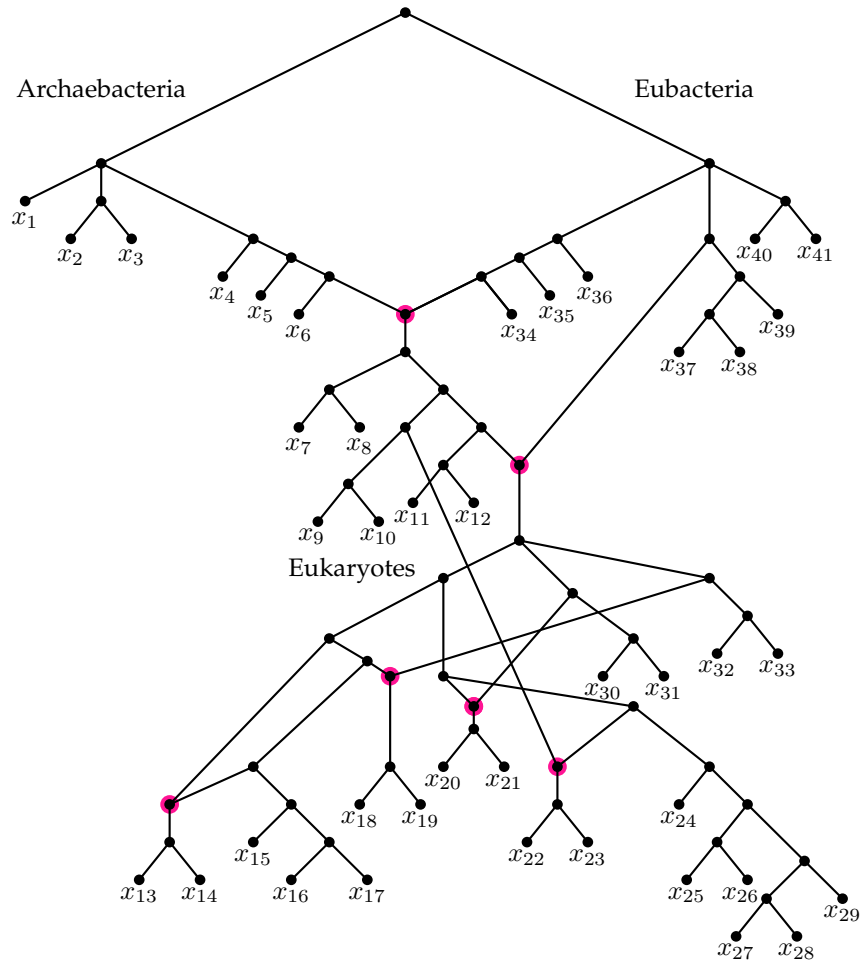

\rev{As usual, the reticulations are indicated in the figure by pink shading around the nodes. There are no omnians in this network since each non-leaf vertex has at least one non-reticulate child. Therefore, it follows directly from Corollary~\ref{AantalkindNB} that the network is tree-based. Moreover, we can use Corollary~\ref{cor:zigzag} to determine whether the network is strictly tree-based. Since all reticulations have indegree-2, there are no omnians, and there is no zig-zag path (multifucation -- reticulation -- multifurcation), we can conclude directly from Corollary~\ref{cor:zigzag} that the network is strictly tree-based. This means that the network can be seen as a base-tree augmented with linking-arcs (representing endosymbiosis events) between branches of the base-tree.}


\section{Discussion}\label{sec:discussion} 

As this is the first paper on tree-basedness for nonbinary phylogenetic networks, we end with a short discussion of our definitions and results. There are different ways to extend the concept of tree-based networks to the nonbinary case.

The most general variant allows linking arcs to be attached to vertices of the base-tree as well as to ``attachment points'' that subdivide the edges of the base-tree, and also allows several linking arcs to attach to the same vertex or attachment point. This can lead to vertices with more than two incoming arcs, and to vertices with more than two outgoing arcs, even if the base-tree is binary. Intuitively, this means that the non-binarity of the network can come both from the base-tree as well as from the way the linking-arcs are attached. Networks that can be formed this way we named \emph{tree-based}. 

A second possibility is to look at all binary refinements of a nonbinary network and to check if at least one of them is a tree-based binary network, using the definition of Francis and Steel. It turns out that this definition is equivalent to the previous one. Thus, a nonbinary network is tree-based precisely if it has at least one binary refinement that is tree-based.

A more restrictive variant allows the non-binarity of the network only to originate from the base-tree. In this case, linking arcs are only allowed to be attached to attachment points that subdivide the edges of the base-tree, and not to the original vertices of the base-tree. Moreover, no two linking arcs can be attached to the same attachment point. We named the networks that can be formed this way \emph{strictly tree-based}. This name is used to express that this definition is more restrictive than the previous ones. In particular, all strictly-tree-based networks are \emph{semi-binary}, meaning that reticulations have exactly two incoming arcs.

Of course, there are more possibilities. One could, for example, allow linking arcs to be attached only to attachment points, but still allow different linking arcs to attach to the same attachment point. However, we have not studied such variants as the definitions above seem the most natural ones.

We have given a complete characterization of tree-based nonbinary phylogenetic networks in terms of ``omnians'', i.e. non-leaf vertices of which all children are reticulations. Moreover, this has \rev{also led} to a new characterization for tree-based binary networks, which is, in our opinion, even simpler than the previous characterization in terms of ``zig-zag paths''~\cite{zhang}. Moreover, we have used our results to derive a characterization in terms of zig-zag paths, similar to the one in~\cite{zhang}, for  tree-based binary networks and for strictly-tree-based nonbinary networks. We have also shown that zig-zag paths can not be used in the same way to characterize tree-based nonbinary networks.

On the algorithmic side, we have shown that it can be decided in polynomial time whether a given nonbinary phylogenetic network is tree-based and whether it is strictly tree-based. We used a different approach from the one by Francis and Steel~\cite{Artikel2} and Zhang~\cite{zhang}, thus also obtaining a new way to decide if a binary network is tree-based. Moreover, we believe that our new approach for binary tree-based networks can be very useful when trying to solve some of the open problems mentioned by Francis and Steel~\cite{Artikel2}. \leo{In particular, is it possible to calculate how many base-trees a given (binary or nonbinary) tree-based network has? \rev{In~\cite{jetten}, it was shown how the method from this paper can be used to derive an upper bound on this number.} Another question by Francis and Steel was whether one can decide in polynomial time if a given binary phylogenetic network~$N$ is tree-based with \rev{a given tree~$T$ as base-tree}. However, this problem was very recently shown to be NP-hard~\cite{FixedTrees}.}

\rev{Finally, we have shown how our theorems can be applied to real phylogenetic networks by presenting two biological examples. We have shown that the first considered network, displaying the evolutionary history of the \emph{Viola} genus, is not tree-based. This means that we cannot see this evolutionary history as a tree-like process augmented with horizontal events. The numerous polyploidisations make this evolutionary history inherently network-like. The second network that we considered, showing the origin of Eukaryotes from Eubacterial and Archaebacterial genomes, turned out to be tree-based (and strictly tree-based). Hence, this evolutionary history can indeed be explained by a tree-like process augmented with horizontal events. However, note that this network is a high-level schematic depiction of the origin of Eukaryotes and the actual evolutionary history is much more complex, especially due to numerous gene transfer events. Moreover, the purpose of these examples is not to draw biological conclusions regarding these evolutionary histories, but to illustrate how our theorems can be applied to real phylogenetic networks. Since both these networks are nonbinary, previously-known theorems could not be applied to them. Moreover, the second example showed that, even though the network looks rather complex, the concept of omnians made it very easy to conclude that it is in fact tree-based.} 

\ifCLASSOPTIONcompsoc
  \section*{Acknowledgments}
\else
  \section*{Acknowledgment}
\fi

The authors would like to thank Mike Steel for interesting and useful discussions on the topic of this paper \leo{and the anonymous reviewers for their constructive comments.}

\ifCLASSOPTIONcaptionsoff
  \newpage
\fi



\bibliographystyle{IEEEtran}

\bibliography{Referencelist}

\end{document}